\begin{document}
%
\title{Achievable DoF Regions of MIMO Networks with Imperfect CSIT}

\author{Chenxi Hao, Borzoo Rassouli and Bruno Clerckx 
\thanks{Chenxi Hao, Borzoo Rassouli and Bruno Clerckx are with the Communication and Signal Processing group of Department of Electrical and Electronic Engineering, Imperial College London. Bruno Clerckx is also with the School of Electrical Engineering, Korea University. This work was partially supported by the Seventh Framework Programme for Research of the European Commission under grant number HARP-318489 and the EPSRC of UK, under grant EP/N015312/1.}}


\maketitle

\begin{abstract}
We focus on a two-receiver Multiple-Input-Multiple-Output (MIMO) Broadcast Channel (BC) and Interference Channel (IC) with an arbitrary number of antennas at each node. We assume an imperfect knowledge of local Channel State Information at the Transmitters, whose error decays with the Signal-to-Noise-Ratio. With such configuration, we characterize the achievable Degrees-of-Freedom (DoF) regions in both BC and IC, by proposing a Rate-Splitting (RS) approach, which divides each receiver's message into a common part and a private part. Compared to the RS scheme designed for the symmetric MIMO case, the novelties of the proposed block lie in 1) delivering additional non-ZF-precoded private symbols to the receiver with the greater number of antennas, and 2) a Space-Time implementation. These features provide more flexibilities in balancing the common-message-decodabilities at the two receivers, and fully exploit asymmetric antenna arrays. Besides, in IC, we modify the power allocation designed for the asymmetric BC based on the signal space where the two transmitted signals interfere with each other. We also derive an outer-bound for the DoF regions and show that the proposed achievable DoF regions are optimal under some antenna configurations and CSIT qualities.
\end{abstract}
\newtheorem{myprop}{Proposition}
\newtheorem{myremark}{Remark}
\newtheorem{mydef}{Definition}
\newtheorem{myassump}{Assumption}
\section{Introduction}\label{sec:Intro}
The capacity region of Multiple-Input-Multiple-Output (MIMO) Broadcast Channel (BC) and the Degrees-of-Freedom (DoF) region of a two-receiver MIMO interference channel (IC) with perfect channel state information at the transmitter side (CSIT) were fully characterized in \cite{Wein06capcityBC} and \cite{JarfarMIMOIA2pair}, respectively. However, in current wireless communication frameworks, in order to perform multiuser transmission, CSIT is a necessary condition. But guaranteeing highly-accurate CSIT is challenging due to the channel estimation error, latency and/or finite rate in the feedback/backhaul link. Hence, a more realistic and meaningful scenario is the case with imperfect CSIT, whose optimal DoF region remains unknown.

For MIMO BC and IC, imperfect CSIT knowledge results in distorted interference-nulling and causes residual interference at each receiver. This fact draws a strong similarity to the deterministic IC. When the interference is strong, conventional multi-user transmission strategies developed by treating the interference as noise yields a significant DoF loss compared to the case with perfect CSIT. However, Han-Kobayashi (HK) scheme \cite{HanKobayashi} developed for deterministic IC provides a different idea. It suggests that the DoF performance can be enhanced by decoding part or whole of the interference. This motivates the pioneering work \cite{Ges12}\footnote{Note that \cite{Ges12} aims at characterizing the DoF region of two-receiver MISO BC with a mixture of imperfect current CSIT and perfect delayed CSIT. However, one corner point of the DoF region can be achieved by RS with only imperfect current CSIT.}, where a rate-splitting (RS) scheme is designed for a two-receiver multiple-input-single-output (MISO) BC with imperfect CSIT. Particularly, each user's message is split into a common part and a private part. The private parts are transmitted via Zero-Forcing BeamForming (ZFBF) with a fraction of the total power, while the common parts of the two users are encoded into a super common message, which is multicast using the remaining power. At the receiver side, each user decodes the super common message and the desired private message. Considering that the CSIT error decays with the Signal-to-Noise-Ratio (SNR) as ${\rm SNR}^{-\alpha}$, the resultant sum DoF is $1{+}\alpha$, which outperforms $2\alpha$ achieved by treating interference as noise. The optimality of this result was shown in \cite{Davoodi14}. Moreover, subsequent works \cite{Tandon12,icc13freq,pimrc2013,Elia13,Jinyuan_evolving_misobc} studied the DoF region of two-receiver MISO BC with time-varying CSIT qualities. The sum rate analysis in the presence of quantized CSIT and a robust design of the RS scheme are investigated in \cite{RateAnalysis,HamdiRS}, respectively.

For symmetric MIMO IC where each transmitter has the same number of antennas and each receiver has the same number of antennas, the optimality of the DoF region achieved by HK scheme was found in \cite{Akuiyibo} when there is no CSIT and the number of transmit antennas is no greater than the number of receive antennas. When there is imperfect CSIT, the extension of the RS scheme designed for MISO BC to the MISO IC was reported in \cite{xinping_MISOIC}; the generalizations to the symmetric MIMO BC and IC were reported in \cite{JinyuanMIMO,Tasos}, respectively. In the context of two-receiver MIMO BC and IC with arbitrary antenna configuration, the DoF region with no CSIT was fully characterized in \cite{HJSV09} for BC and in \cite{zhunoCSIT} for IC, while the DoF region with a mixture of perfect delayed CSIT and imperfect current CSIT were found in \cite{xinping_mimo,VVICdelay} for both BC and IC. However, the characterization of the DoF region with only imperfect current CSIT remains an open problem.

Toward this, in this paper, we first design a novel RS scheme for a two-receiver MIMO BC where the transmitter and two receivers have arbitrary number of antennas. The key ingredients of the scheme lie in 1) transmitting additional private messages (apart from the ZF-precoded private messages) to the receiver with a greater number of antennas, and 2) performing a space-time transmission. These features fully exploit the spatial dimensions at the two receivers and provide more flexibilities in balancing the common-message-decoding capabilities at the two receivers. We found the achievable DoF region by calculating the power allocation that maximizes the sum DoF. The resultant achievable DoF region with imperfect CSIT smoothly connect the achievable DoF region with no CSIT and the achievable DoF region with perfect CSIT.

Second, we consider a two-receiver MIMO IC, where each node has an arbitrary number of antennas. The proposed RS transmission block inherits the key features of the RS scheme designed for the MIMO BC, but with some modifications. The modifications are motivated by a row transformation to the channel matrices. Such an operation allows us to identify the signal space where the transmitted signals interfere with each other, so as to derive a proper power allocation policy. The achievable DoF region is characterized by finding the optimal power levels that maximize the DoF of Rx2 (sum DoF of the common and private messages intended for Rx2) for a given DoF of Rx1.

Third, we also derive an outer-bound for the DoF region of MIMO BC and IC using the aligned image set proposed in \cite{Davoodi14} and the sliding window lemma proposed in \cite{Borzoo_Kuser}. Using this outer-bound and the optimal DoF region found in \cite{xinping_mimo} when there is a mixture of perfect delayed CSIT and imperfect current CSIT, we show that the optimality of the proposed achievable DoF region holds for some antenna configurations and CSIT qualities.

The rest of the paper is organized as follows. System models are defined in Section \ref{sec:SM}. In Section \ref{sec:pa}, we revisit the related works and point out the difficulties in designing RS scheme for the asymmetric MIMO case. In Section \ref{sec:MR}, we highlight our main contributions on the RS transmission block design and summarize the main results of the achievable DoF regions. Then, Section \ref{sec:BC} elaborates on the proposed schemes designed for asymmetric MIMO BC, while Section \ref{sec:IC} presents the detail of the proposed RS scheme for asymmetric MIMO IC. Section \ref{sec:conclusion} concludes the paper.

\emph{Notations}: Bold upper and lower letters denote matrices and vectors respectively. A symbol not in bold font denotes a scalar. $({\cdot})^H$, $({\cdot})^T$ and $({\cdot})^\bot$ respectively denote the Hermitian, transpose and the null space of a matrix or vector. The term $\mathbf{I}_M$ refers to an identity matrix of size $M$, while $\mathbf{0}_{M{\times}N}$ and $\mathbf{0}_M$ stand for all-zero matrices of size $M{\times}N$ and $M{\times}M$, respectively. $\mathbb{E}\left[{\cdot}\right]$ refers to the statistical expectation. $(a)^+$ stands for $\max(a{,}0)$. $f\left(P\right){\sim}{P^{B}}$ corresponds to ${\lim_{P{\to}{\infty}}}\frac{{\log_2}f\left(P\right)}{{\log_2}P}{=}B$. $\det(\mathbf{A})$ refers to the determinant of a square matrix $\mathbf{A}$. $\lfloor a\rfloor$ denotes the greatest integer that is smaller than or equal to $a$.

\section{System Model}\label{sec:SM}
In this section, we introduce the signal model, and the definitions of CSIT quality and Rate-Splitting, which are considered throughout the paper. For a two-receiver $(M{,}N_1{,}N_2)$ MIMO BC and a two-receiver $(M_1{,}M_2{,}N_1{,}N_2)$ MIMO IC, the signals received by Rx$k$, for $k{=}1{,}2$, write as
\begin{IEEEeqnarray}{rcl}
\text{\rm BC:}\quad\mathbf{y}_k&{=}&\mathbf{H}_k^H\mathbf{s}{+}\mathbf{n}_{k},\IEEEyesnumber\IEEEyessubnumber\\
\text{\rm IC:}\quad\mathbf{y}_k&{=}&\mathbf{H}_{k1}^H\mathbf{s}_1{+}\mathbf{H}_{k2}^H\mathbf{s}_2{+}\mathbf{n}_{k},\IEEEyessubnumber
\end{IEEEeqnarray}
where the transmitted signal $\mathbf{s}$ (resp. $\mathbf{s}_j$, $j{=}1{,}2$) is subject to the power constraint $\mathbb{E}[{\parallel}\mathbf{s}{\parallel}^2]{\leq}P$ (resp. $\mathbb{E}[{\parallel}\mathbf{s}_j{\parallel}^2]{\leq}P$, $j{=}1{,}2$); $\mathbf{H}_k{\in}\mathbb{C}^{M{\times}N_k}$ (resp. $\mathbf{H}_{kj}{\in}\mathbb{C}^{M_j{\times}N_k}$) denotes the channel matrix between the Tx (resp. Tx$j$, $j{=}1{,}2$) and Rx$k$, and it is drawn from a continuous distribution; $\mathbf{n}_k\stackrel{d}{\sim}\mathcal{CN}(0{,}\mathbf{I}_{N_k})$ refers to the additive white Gaussian noise vector at Rx$k$ and is independent of the channel matrices. Note that we do not restrict the channel matrices to be correlated or uncorrelated across channel uses (e.g., slot/subband), as the proposed schemes are applicable to both cases.

We consider a general setup where there is \emph{imperfect local CSIT} due to the estimation error, latency and/or the finite rate in the feedback/backhaul link. Let $\hat{\mathbf{H}}_k$ and $\hat{\mathbf{H}}_{kj}$ denote the imperfect CSIT in BC, i.e., between the Tx and Rx$k$, and the imperfect CSIT in IC, i.e., between Tx$j$ and Rx$k$, respectively, for $k{,}j{=}1{,}2$. Then, to be specific, in BC, the Tx knows $\hat{\mathbf{H}}_1$ and $\hat{\mathbf{H}}_2$, while in IC, Tx$k$ knows $\hat{\mathbf{H}}_{1k}$ and $\hat{\mathbf{H}}_{2k}$. Besides, we consider that there is \emph{perfect local CSIR}, namely Rx$k$ perfectly knows the effective channels, i.e., the multiplication of the precoders and the channel matrices $\mathbf{H}_k$ in BC and $\mathbf{H}_{k1}$, $\mathbf{H}_{k2}$ in IC, so as to decode the desired signal.

We assume that the probability density function of the channel $\mathbf{H}_k$ (resp. $\mathbf{H}_{kj}$) conditioned on the imperfect CSIT $\hat{\mathbf{H}}_k$ (resp. $\hat{\mathbf{H}}_{kj}$) exists and is bounded. According to \cite{Davoodi14}, this assumption allows us to preclude the compound setting case and it is also consistent with the assumption made in \cite{Lapidoth}, where the differential entropy of the channel matrices conditioned on the imperfect CSIT is bounded away from ${-}\infty$. In addition, similar to \cite{Davoodi14}, we require that the probability that a subset of channel coefficients takes values in any measurable set, conditioned on the available CSIT, is no more than $f_{\max}$ times the Lebesgue measure of that set. In this paper, we consider $f_{\max{,}1}{=}O(P^{\alpha_1})$ for Rx$1$ and $f_{\max{,}2}{=}O(P^{\alpha_2})$ for Rx$2$ to scale with the SNR, and term $\alpha_k$ as the CSIT quality of Rx$k$ throughout the paper. This definition is useful in deriving outer-bounds of the DoF region.

Moreover, as mentioned in \cite{Davoodi14}, this definition of the channel uncertainty can link to the cases where the CSIT error is due to channel quantization \cite{Jin06} and/or Doppler effect \cite{Ges12,Gou12,Jinyuan_evolving_misobc,xinping_mimo,Elia13,JinyuanMIMO}. In these cases, one has
\begin{IEEEeqnarray}{rcl}
\text{\rm BC:}\quad&\mathbb{E}\left[|\mathbf{h}_{k{,}i}^H\mathbf{w}_k|^2\right]&{\sim} P^{-\alpha_k},
\IEEEyesnumber\IEEEyessubnumber\\
\text{\rm IC:}\quad&\mathbb{E}\left[|\mathbf{h}_{kj{,}i}^H\mathbf{w}_{kj}|^2\right]&{\sim} P^{-\alpha_k},j{=}1{,}2{,}
\IEEEyessubnumber
\end{IEEEeqnarray}
where $\mathbf{h}_{k{,}i}$ (resp. $\mathbf{h}_{kj{,}i}$) is the $i$th column of $\mathbf{H}_k$ (resp. $\mathbf{H}_{kj}$), while $\mathbf{w}_k{\in}\mathbb{C}^{M{\times}1}$ (resp. $\mathbf{w}_{kj}{\in}\mathbb{C}^{M_j{\times}1}$) is a unit norm vector in the null space of $\hat{\mathbf{H}}_k$ (resp. $\hat{\mathbf{H}}_{kj}$), i.e., a ZF precoder. Then, if the transmitted signal $\mathbf{s}$ (resp. $\mathbf{s}_j$, $j{=}1{,}2$) contains a ZF-precoded message, the quantity $|\mathbf{h}_{k{,}i}^H\mathbf{w}_{k}|^2$ (resp. $|\mathbf{h}_{kj{,}i}^H\mathbf{w}_{kj}|^2$) represents the strength of the residual interference received at the unintended receiver. Note that this quantity is important as it is frequently used in the achievability proof in Section \ref{sec:BC} and \ref{sec:IC}. 


The CSIT qualities $\alpha_1$ and $\alpha_2$ are non-negative values. As supported by the findings in \cite{Jin06}, $\alpha_k{\geq}1$ is equivalent to perfect CSIT because the interference will be nulled within noise variance via ZFBF and the full CSIT DoF region can be achieved. $\alpha_k{=}0$ is equivalent to no CSIT because the interference terms are overheard with the same power level as the desired signal at high SNR, such that the imperfect CSIT cannot benefit the DoF when doing ZFBF. Therefore, we focus on the case $\alpha_k{\in}[0{,}1]$ henceforth.

We consider that the message intended for Rx$k$ is split into two parts, namely $m_{ck}$ and $m_{pk}$, $k{=}1{,}2$, where $m_{ck}$ is the common part that is drawn from a codebook shared by the two receivers, such that $m_{ck}$ is decodable by both receivers, while $m_{pk}$ is the private part and is to be decoded by Rx$k$ only. Note that data-sharing is not considered in IC so that $m_{ck}$ and $m_{pk}$ are transmitted only by the corresponding Tx$k$. Specifically, the encoding function for each transmitter can be expressed as
\begin{IEEEeqnarray}{rcl}
\text{\rm BC:}\,\mathbf{s}&{=}&f(m_{c1}{,}m_{c2}{,}m_{p1}{,}m_{p2}{,}\hat{\mathbf{H}}_1{,}\hat{\mathbf{H}}_2),\IEEEyesnumber\IEEEyessubnumber\\
\text{\rm IC:}\,\mathbf{s}_k&{=}&f(m_{ck}{,}m_{pk}{,}\hat{\mathbf{H}}_{1k}{,}\hat{\mathbf{H}}_{2k}),k{=}1{,}2.\IEEEyessubnumber
\end{IEEEeqnarray}

Let $R_{pk}$ denote the rate of the private message and $R_{ck}$ denote the rate of the common message, for $k{=}1{,}2$. A rate tuple $(R_{p1}{,}R_{p2}{,}R_{c1}{,}R_{c2})$ is said to be achievable if each receiver decodes the common messages $m_{c1}$, $m_{c2}$ and the desired private message with arbitrary small error probability. Then, the achievable DoF tuple is defined as $d_{ck}{\triangleq}\lim\limits_{P{\to}\infty}\frac{R_{ck}}{{\log}_2P}$ and $d_{pk}{\triangleq}\lim\limits_{P{\to}\infty}\frac{R_{pk}}{{\log}_2P}$, for $k{=}1{,}2$. The achievable DoF pair $(d_1{,}d_2)$ writes as $(d_{c1}{+}d_{p1}{,}d_{c2}{+}d_{p2})$.

Moreover, in BC, since the transmitter has the common messages of both receivers, i.e., $m_{c1}$ and $m_{c2}$, we introduce $m_c{\triangleq}(m_{c1}{,}m_{c2})$ to represent a \emph{general} common message that is jointly formed by $m_{c1}$ and $m_{c2}$, and is drawn from the message set $\left[1{:}2^{nR_c}\right]$ with $R_c{=}R_{c1}{+}R_{c2}$. Hence, there are three types of messages in BC, i.e., $m_c$, $m_{p1}$ and $m_{p2}$. The encoding function rewrites as
\begin{IEEEeqnarray}{rcl}
\text{\rm BC:}\,\mathbf{s}&{=}&f(m_c{,}m_{p1}{,}m_{p2}{,}\hat{\mathbf{H}}_1{,}\hat{\mathbf{H}}_2).
\end{IEEEeqnarray}
Then, if both receivers are able to successfully recover $m_c$ with the rate $R_c$ (resp. DoF $d_c{\triangleq}\lim\limits_{P{\to}\infty}\frac{R_{c}}{{\log}_2P}$), we can see that any rate pair $(R_{c1}{,}R_{c2})$ (resp. DoF pair $(d_{c1}{,}d_{c2})$) such that $R_{c1}{+}R_{c2}{=}R_c$ (resp. $d_{c1}{+}d_{c2}{=}d_c$) is achievable. 

\section{Prior Art}\label{sec:pa}
The RS approach gives a fundamental idea of how to enhance the DoF performance in a two-receiver MISO BC with imperfect CSIT \cite{Ges12,Gou12,pimrc2013}. Each user's message is split into a common part and a private part. The common parts are encoded into a super common message, and then the super common message is superposed on top of the ZF-precoded private messages. Specifically, for a $(2{,}1{,}1)$ MISO BC, the transmitted signal writes as
\begin{IEEEeqnarray}{rcl}
\mathbf{s}&{=}&\underbrace{\mathbf{c}}_{P{-}P^\alpha}{+}\underbrace{\mathbf{v}_1u_1}_{P^\alpha/2}{+}
\underbrace{\mathbf{v}_2u_2}_{P^\alpha/2},\label{eq:sMISO}
\end{IEEEeqnarray}
where $\mathbf{v}_k{\in}\mathbb{C}^{M{\times}1}$ is in the subspace of $\hat{\mathbf{h}}_j^\bot{,}k{,}j{=}1{,}2{,}k{\neq}j$. $u_k$ refers to the private message intended for Rx$k$ and is sent via ZFBF using a fraction of the total power, i.e., $P^\alpha$ with $\alpha{\triangleq}\min\{\alpha_1{,}\alpha_2\}$. The common message $\mathbf{c}$ is made up of the common messages intended for Rx1 and Rx2. It is transmitted using the remaining power $P{-}P^\alpha{\sim}P$. The received signal writes as
\begin{IEEEeqnarray}{rcl}
\!\!\!\!\!\!\!y_k&{=}&\underbrace{\mathbf{h}_{k}^H\mathbf{c}}_{P}{+}\underbrace{\mathbf{h}_k^H\mathbf{v}_ku_k}_{P^\alpha}{+}
\underbrace{\mathbf{h}_k^H\mathbf{v}_ju_j}_{P^{\alpha{-}\alpha_k}}{+}\underbrace{n_k}_{P^0},k{,}j{=}1{,}2{,}k{\neq}j{.}\label{eq:ykMISO}
\end{IEEEeqnarray}
We see that, due to ZFBF with imperfect CSIT, $u_j$ is received by Rx$k$ with a power smaller than the noise as $\alpha{\leq}\alpha_k$. Then, each user decodes the super common message $c$ and the desired private message sequentially using SIC. This yields $d_{p1}{=}d_{p2}{=}\alpha$ and $d_c{=}1{-}\alpha$. The DoF pairs $(1{,}\alpha)$ and $(\alpha{,}1)$ are obtained if $\mathbf{c}$ only carries information intended for Rx1 and Rx2, respectively. The achievable DoF region is specified by $d_1{+}d_2{\leq}1{+}\alpha$.

When the two receivers have different number of antennas, the transmission block design encounters following challenges.
\begin{enumerate}
  \item[i)] The capability of decoding common message at each receiver is determined by the number of receive antennas and the power allocated to the private messages. If the private messages are transmitted with equal power, the achievable DoF of the common message is limited by the receiver with a smaller number of antennas, i.e., Rx1. This results in the fact that the achievable DoF of Rx2 (contributed by its private message and common message) is always smaller than $N_2$.
  \item[ii)] Concern i) can be solved to some extent by employing unequal power allocation to the private messages. However, if the transmission block is designed with ZF-precoded private messages plus common message multicasting, the spatial dimension at Rx2 cannot be fully exploited under some circumstances. Let us consider a $(4{,}2{,}3)$ MIMO BC where $1$ ZF-precoded private symbol is transmitted to Rx1 using power $P^{A_1}$, $2$ private symbols are transmitted to Rx2 using power $P^{A_2}$, while the remaining power is used to multicast the common message. Given this transmission block, when $\alpha_1{=}\alpha_2{=}0$, the maximum achievable sum DoF is $2$ by choosing $A_1{=}1$  and $A_2{=}0$. However, in this case, one can achieve sum DoF $3$ by transmitting three private symbols to Rx2 using full power without the need of ZFBF. Hence, the RS scheme designed for the asymmetric antenna setting should align with the case where transmitting non-ZF-precoded private symbols is beneficial to the DoF performance.
\end{enumerate}
These two concerns apply for both asymmetric MIMO BC and IC.

As studied in \cite{xinping_mimo}, the above concerns resulted by asymmetric antenna setting can be solved when there is a mixture of the imperfect current CSIT and perfect delayed CSIT. The achievable scheme contains non-ZF-precoded private symbols, thus causing some level of overheard interference at each receiver. Then, with a Block-Markov implementation and backward decoding, each receiver is able to to 1) cancel the interference that is overheard in the previous slot, and 2) have the side information of its desired private messages received by the other receiver. By doing so, each receiver obtains an $(N_1{+}N_2)$-dimensional observation of the transmitted signals, which balances the decoding capabilities at the two receivers. Nonetheless, when there is only imperfect current CSIT, it is unable to exchange the side information so that each receiver has to perform the decoding process using its own received signal. This leads to the emergence of designing a novel RS transmission block with proper power allocation policy that solve the above concerns. 

\section{Main Contributions and Results}\label{sec:MR}
\subsection{Key ingredients of the proposed RS scheme}
In this part, we highlight the key ingredients that constitute the novel RS transmission blocks designed for asymmetric MIMO BC and IC.
\subsubsection{Additional non-ZF-precoded private symbols}
To address the issues mentioned in the previous section, we propose an RS transmission block by allocating unequal power to the private messages, and by transmitting additional non-ZF-precoded private symbols to Rx2. These features provide more flexibility in balancing the capability of decoding common messages at the two receivers, and exploit the larger antenna array at Rx2. Specifically, for a $(4{,}2{,}3)$ MIMO BC, the RS scheme consists of $1$ ZF-precoded private symbols to Rx1 allocated with power $P^{A_1}$, $2$ ZF-precoded private symbols to Rx2 allocated with power $P^{A_2}$, $1$ non-ZF-precoded private symbols to Rx2 allocated with power $P^{(A_2{-}\alpha_1)^+}$, while the common messages are multicast with the remaining power. As we will see later on, when the CSIT quality of Rx1 is not sufficiently good, choosing $A_2{>}\alpha_1$ is beneficial to the sum DoF, though it results in some level of interference at Rx$1$. This is because the private message spans $3$ dimensions at Rx2, while the interference at Rx1 spans only $2$ dimensions. In the extreme case of $\alpha_1{=}\alpha_2{=}0$, by choosing the power exponents $(A_1{,}A_2){=}(0{,}1)$, the transmitted signal consists of three private symbols intended for Rx2. This yields the sum DoF $3$, which is consistent with the maximum sum DoF with no CSIT.

In contrast, the RS scheme designed for the symmetric case \cite{Ges12} has equal power allocation and no additional private symbols is transmitted. This is because unequal power allocation and delivering non-ZF-precoded private symbols are useless in enhancing the sum DoF when the two receivers have the same number of antennas. In the scheme proposed in \cite{xinping_mimo}, the transmitter delivers non-ZF-precoded private symbols to both receivers. This feature is useful because the overheard interference at each receiver can be exploited as side information when there is perfect delayed CSIT.

\subsubsection{Space-time transmission}
When the CSIT quality of Rx1 is not sufficiently good, we perform a space-time transmission using the proposed transmission block. Specifically, we employ power exponents $(A_1{,}A_2){=}(\alpha_2{,}\alpha_1)$ for a fraction of the total time slots, while employ the power exponents $(A_1{,}A_2){=}(\alpha_2{,}1)$ for the rest of the time. Since choosing $A_2{>}\alpha_1$ is beneficial to the sum DoF when CSIT quality of Rx1 is not sufficiently good, the proposed space-time transmission is carried out to fully exploit the spatial dimension at Rx2.

In contrast, when the CSIT qualities are fixed across the time line, the RS scheme designed for the symmetric case \cite{Ges12} does not employ space-time transmission. This is because choosing power exponents greater than the CSIT quality does not provide sum DoF gain. Besides, the Block-Markov implementation proposed in \cite{xinping_mimo} also spans the time-domain, but it requires perfect delayed CSIT to perform a sequential backward decoding. However, in our space-time implementation, a joint decoding is performed focusing on the aggregate received signals, and only current imperfect CSIT is used.

\subsubsection{Interference space identification}\label{sec:int_space}
We characterize the asymmetric MIMO IC into two cases. Case I has the antenna configuration $M_1{\geq}N_2$ (As a reinder, we consider $M_2{\geq}N_1$ and $N_2{\geq}N_1$). This setting yields a similar scenario to BC because the subspace spanned by the desired signal is completely overlapped with the subspace spanned by the interference signal. Accordingly, we propose an RS scheme by inheriting the key features, i.e., transmitting additional non-ZF-precoded private symbols and space-time implementation, of the RS scheme designed for the asymmetric MIMO BC.

Case II has the antenna configuration $M_1{\leq}N_2$. In this case, no ZF-precoded private symbols is delivered to Rx1. Moreover, by performing a row transformation to the channel matrices, we learn that at Rx2, the subspace spanned by the desired signal is \emph{partially} overlapped with the signal sent by Tx1. Then, since the private symbols lying in the non-overlapping part do not impact the common-message-decodability at Rx2, we modify the RS scheme designed for the Case I by allocating different power exponents to the private symbols that are overlapped with the signal sent by Tx1 and the private symbols that are not overlapped with the signal sent by Tx1.

\subsection{Main Results on Achievable DoF Regions}
We state the achievable DoF regions as follows.
\begin{myprop}\label{prop:BC}
For a $(M{,}N_1{,}N_2)$ MIMO BC, supposing $N_1{\leq}N_2$, an achievable DoF region with imperfect CSIT is characterized by \eqref{eq:BCRegion} at the top of the next page, where $\alpha_{0{,}BC}$ and $\Phi_{BC}$ are defined in \eqref{eq:BCalpha0} and \eqref{eq:BCphi}, respectively.
\begin{figure*}
{\small\begin{IEEEeqnarray}{rrl}\label{eq:BCRegion}
L_0:&d_1{\leq}&\min\{M{,}N_1\},\IEEEyesnumber\IEEEyessubnumber\\
L_0^\prime: &d_2{\leq}&\min\{M{,}N_2\},\IEEEyessubnumber\\
L_1:&d_1{+}d_2{\leq}&\min\{M{,}N_2\}{+}\left[\min\{M{,}N_1{+}N_2\}{-}\min\{M{,}N_2\}\right]\alpha_{0{,}BC},\IEEEyessubnumber\label{eq:BCsum}\\
L_2:&\frac{d_1}{\min\{M{,}N_1\}}{+}\frac{d_2}{\min\{M{,}N_2\}}{\leq}&1{+}\frac{\min\{M{,}N_1{+}N_2\}{-}\min\{M{,}N_1\}}{\min\{M{,}N_2\}}\alpha_1,
\IEEEyessubnumber\label{eq:BCwsum}
\end{IEEEeqnarray}}
{\small\begin{IEEEeqnarray}{rcl}
\alpha_{0{,}BC}&{=}&\left\{\begin{array}{ll}\alpha_2&\text{\rm if }\Phi_{BC}{\leq}0\\
\alpha_2{-}\frac{\Phi_{BC}}{\min\{M{,}N_1{+}N_2\}{-}\min\{M{,}N_1\}}&\text{\rm Else if }\alpha_1{\geq}1{-}\alpha_2;\\
\frac{\alpha_1\alpha_2[\min\{M{,}N_1{+}N_2\}{-}\min\{M{,}N_2\}]}{\left[\min\{M{,}N_2\}{-}\min\{M{,}N_1\}\right](1{-}\alpha_1){+}
\left[\min\{M{,}N_1{+}N_2\}{-}\min\{M{,}N_2\}\right]\alpha_2}&\text{\rm Else if }\alpha_1{\leq}1{-}\alpha_2.\end{array}\right.\label{eq:BCalpha0}\\
\Phi_{BC}&{\triangleq}&\min\{M{,}N_2\}{-}\min\{M{,}N_1\}{+}[\min\{M{,}N_1{+}N_2\}{-}\min\{M{,}N_2\}]\alpha_2{-}\nonumber\\&&
[\min\{M{,}N_1{+}N_2\}{-}\min\{M{,}N_1\}]\alpha_1.\label{eq:BCphi}
\end{IEEEeqnarray}}
\hrulefill
\end{figure*}
\end{myprop}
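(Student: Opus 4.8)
The plan is to prove achievability of the region \eqref{eq:BCRegion} constructively, by exhibiting a rate-splitting transmission block and a power allocation that realizes every corner point; the polytope then follows by time-sharing. First I would observe that the two ``box'' constraints $L_0$ and $L_0'$ are immediate single-user MIMO bounds and require nothing. The real content is in $L_1$ (the weighted sum-DoF facet) and $L_2$. I would set up the RS block described in Section \ref{sec:MR}: $n_1\triangleq\min\{M,N_1\}$ ZF-precoded private streams to Rx1 at power exponent $A_1$, $n_2\triangleq\min\{M,N_2\}$ ZF-precoded private streams to Rx2 at power exponent $A_2$, an additional $\min\{M,N_1{+}N_2\}{-}\min\{M,N_2\}$ non-ZF-precoded private streams to Rx2 at exponent $(A_2{-}\alpha_1)^+$, and the general common message $m_c$ multicast with the remaining power $P{-}P^{A_1}{-}P^{A_2}\sim P$. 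Using the bounded-density / residual-interference estimate $\mathbb{E}[|\mathbf h_{k,i}^H\mathbf w_k|^2]\sim P^{-\alpha_k}$ stated in the excerpt, I would write the received power levels at each receiver: at Rx1 the desired private streams arrive at $P^{A_1}$, the ZF-leakage from Rx2's streams at $P^{A_2-\alpha_1}$, the non-ZF leakage at $P^{(A_2-\alpha_1)^+}$, and the common message occupies the $P\to P^{\max\{A_1,A_2\}}$ layer; symmetrically at Rx2.

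Next I would compute the achievable DoF of each message type. Treating the private-signal-plus-noise as the effective ``floor'', Rx$k$ can decode $m_c$ with DoF $d_c = \min\{M,N_k\} - (\text{height of Rx}k\text{'s private+leakage floor})$, so that the multicast DoF is $d_c=\min_k\{\,\min\{M,N_k\}-\ell_k\,\}$ where $\ell_1,\ell_2$ are the floor heights just identified; after stripping $m_c$ by successive decoding, Rx1 gets $d_{p1}=n_1 A_1$ from its ZF streams (the leakage into its own streams being below noise when $A_1\le\alpha_1$, which holds at the relevant corners), and Rx2 gets $d_{p2}= n_2 A_2 + (\min\{M,N_1{+}N_2\}{-}n_2)(A_2{-}\alpha_1)^+$ counting both its ZF and non-ZF streams, the non-ZF leakage at Rx1 being absorbed into $m_c$'s decoding margin rather than into Rx2's private rate. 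Summing $d_1=d_{c1}+d_{p1}$ and $d_2=d_{c2}+d_{p2}$ with $d_{c1}+d_{c2}=d_c$ gives a DoF pair parametrized by $(A_1,A_2)$.

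Then the key step: I would optimize over $(A_1,A_2)\in[0,1]^2$. For the facet $L_1$ I maximize $d_1+d_2$, which reduces to maximizing a piecewise-linear function of $(A_1,A_2)$; the case split in $\alpha_{0,BC}$ and the auxiliary quantity $\Phi_{BC}$ in \eqref{eq:BCalpha0}--\eqref{eq:BCphi} is exactly the output of this linear program — $\Phi_{BC}\le 0$ versus $\Phi_{BC}>0$ with the sub-split $\alpha_1\gtrless 1-\alpha_2$ corresponds to which constraint among ``$A_1\le\alpha_1$'', ``$A_2\le 1$'', and the common-message-balancing equality $\ell_1=\ell_2$ is active at the optimum. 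When $\Phi_{BC}>0$ and $\alpha_1<1-\alpha_2$ one needs the space-time refinement: use $(A_1,A_2)=(\alpha_2,\alpha_1)$ on a fraction $\lambda$ of slots and $(A_1,A_2)=(\alpha_2,1)$ on the rest, perform a \emph{joint} decoding over the aggregate block, and choose $\lambda$ to equalize the two receivers' common-message capabilities over the whole block — this is where $\alpha_{0,BC}$ becomes a nontrivial convex combination. For $L_2$ I would instead verify that the corner points $(n_1,\,?)$ with $A_1=\alpha_1$ (saturating Rx1 with perfect-quality-scale private power) lie on the line \eqref{eq:BCwsum}, and argue $L_2$ is the supporting hyperplane through them. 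Finally, time-sharing all the corner points obtained — the no-CSIT-type corner $A_1=\alpha_2$, $A_2=1$, the $L_1$-optimal point, and the $L_2$-optimal point — yields the full polytope \eqref{eq:BCRegion}.

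The main obstacle I anticipate is the space-time / joint-decoding argument: one must show rigorously that with time-varying power exponents a \emph{single} joint typicality (or genie-aided MAC) decoding over the concatenated slots achieves the claimed $d_c$, i.e. that the effective point-to-point-with-side-information rate region over the block is governed by the \emph{average} of the per-slot floor heights and not by the worst slot, and that this does not require any delayed CSIT. Verifying the rank/generic-position conditions so that the $\min\{M,N_1{+}N_2\}$ private streams are simultaneously resolvable at Rx2 (this is where the continuous-distribution assumption on the channel is used) is the other place where care is needed; everything else is bookkeeping of power exponents and a finite linear program whose solution is precisely \eqref{eq:BCalpha0}.
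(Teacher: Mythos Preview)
Your overall strategy---RS block with ZF private streams, extra non-ZF streams to Rx2, common-message multicast on top, then a space-time implementation with joint decoding---is exactly the paper's. But your stream counts are wrong, and this is not cosmetic: it makes the DoF expressions you write down unachievable.

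You propose $n_1=\min\{M,N_1\}$ ZF-precoded streams to Rx1 and $n_2=\min\{M,N_2\}$ to Rx2. In the non-degenerate regime $N_1\le N_2\le M\le N_1{+}N_2$ this would mean $N_1$ and $N_2$ ZF streams respectively. But a ZF precoder toward Rx1 must lie in the null space of $\hat{\mathbf H}_2\in\mathbb C^{M\times N_2}$, which has dimension only $M{-}N_2$; likewise the null space of $\hat{\mathbf H}_1$ has dimension $M{-}N_1$. The correct counts are therefore $M{-}N_2$ ZF streams to Rx1, $M{-}N_1$ ZF streams to Rx2, and $N_1{+}N_2{-}M$ non-ZF streams to Rx2. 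In the proposition's notation, $\min\{M,N_1{+}N_2\}{-}\min\{M,N_2\}$ is the number of ZF streams to \emph{Rx1}, not the number of non-ZF streams to Rx2 as you wrote. With your counts the total private load aimed at Rx2 is $N_2+(M{-}N_2)=M>N_2$, exceeding its receive dimension.

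This error propagates: your $d_{p1}=n_1A_1$ and $d_{p2}=n_2A_2+(\min\{M,N_1{+}N_2\}-n_2)(A_2{-}\alpha_1)^+$ are both too large. The scheme actually delivers $d_{p1}=(M{-}N_2)\bigl(A_1-(A_2{-}\alpha_1)^+\bigr)^+$ and $d_{p2}=(M{-}N_1)A_2+(N_1{+}N_2{-}M)(A_2{-}\alpha_1)^+$; note in particular that $d_{p1}$ is reduced by the interference floor $(A_2{-}\alpha_1)^+$, which you drop. Also, the natural range for $A_1$ is $A_1\le\alpha_2$ (the ZF is toward Rx2, so the residual there is governed by $\alpha_2$), not $A_1\le\alpha_1$. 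Once these counts and formulas are fixed, the piecewise-linear optimization you sketch does yield \eqref{eq:BCalpha0}--\eqref{eq:BCphi}; the paper runs the space-time transmission for the entire $\Phi_{BC}\ge 0$ regime (not only when $\alpha_1<1{-}\alpha_2$), with the two subcases of $\alpha_{0,BC}$ arising from whether $(\alpha_1{+}\alpha_2{-}1)^+$ vanishes in the balancing fraction $\rho^*$.
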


\begin{figure}[!t]
\renewcommand{\captionfont}{\small}
\captionstyle{center}
\centering
\subfigure[$\Phi_{BC}{\leq}0$]{
                \centering
                \includegraphics[width=0.24\textwidth,height=3cm]{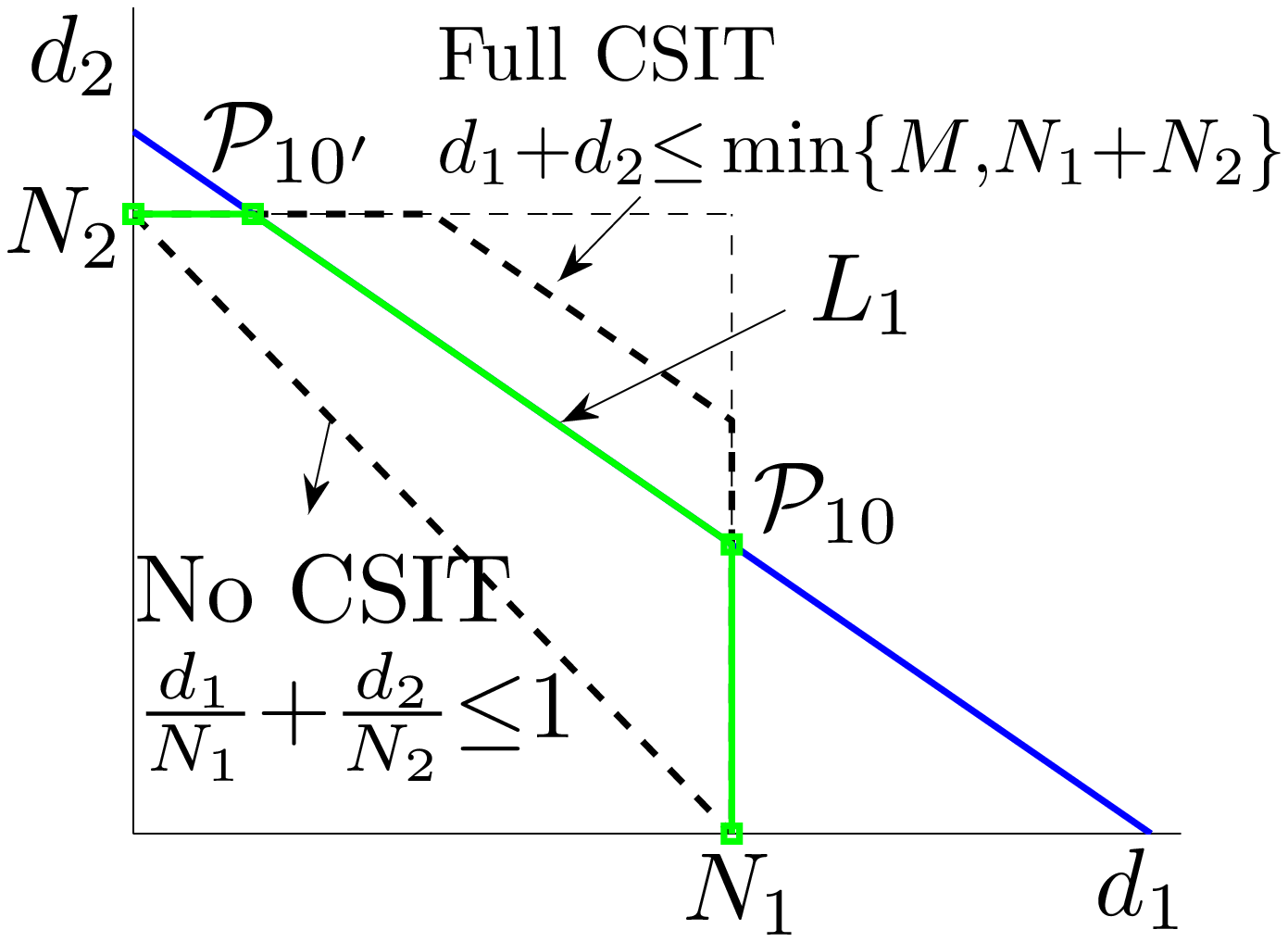}
                \label{fig:BCcase1}
        }
\subfigure[$\Phi_{BC}{\geq}0$]{
                \centering
                \includegraphics[width=0.24\textwidth,height=3cm]{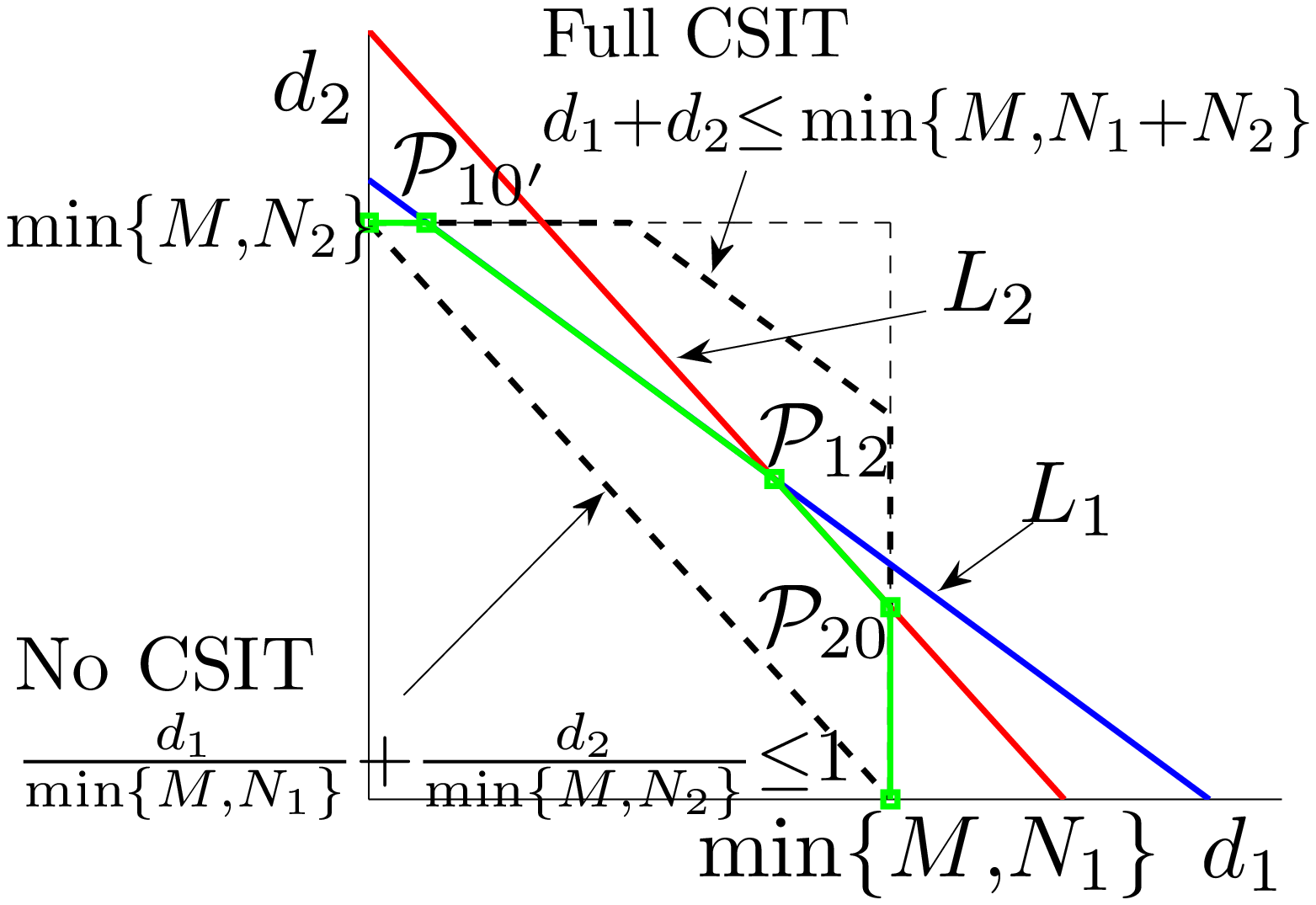}
                \label{fig:BCcase2}
        }
\caption{Achievable DoF region of $(M{,}N_1{,}N_2)$ MIMO BC.}\label{fig:BCregion}
\end{figure}

Figure \ref{fig:BCregion} illustrates the DoF region stated in Proposition \ref{prop:BC}, where $\mathcal{P}_{ij}$ denotes the intersection of line $L_i$ and $L_j$. When $\alpha_1$ is large enough such that $\Phi_{BC}{\leq}0$, the weighted-sum constraint, i.e.,\eqref{eq:BCwsum}, becomes inactive and the DoF region is formed by $\mathcal{P}_{10}$ and $\mathcal{P}_{10^\prime}$. Moreover, the DoF region with perfect CSIT and no CSIT can be reached with $\alpha_1{=}\alpha_2{=}1$ and $\alpha_1{=}0$ (${\forall}\alpha_2{\in}[0{,}1]$), respectively. When $N_1{=}N_2$, $L_1$ and $L_2$ in Proposition \ref{prop:BC} boil down to the the sum DoF constraint in the symmetric antenna case \cite{JinyuanMIMO}.

For a general $(M_1{,}M_2{,}N_1{,}N_2)$ MIMO IC, as explained in Section \ref{sec:int_space}, we categorize the antenna configurations as Case I with $M_1{\geq}N_2$ and Case II with $M_1{\leq}N_2$. The antenna configuration in Case I yields a similar scenario as BC, while the antenna configuration in Case II implies a different scenario where Tx1 is not able to perform ZFBF, and in the received signals, some messages of Rx2 do not align with the messages intended for Rx1. Due to these facts, the transmission schemes are designed differently in these two cases and lead to different achievable DoF regions.

\begin{myprop}\label{prop:IC1}
For a $(M_1{,}M_2{,}N_1{,}N_2)$ MIMO IC of Case I, an achievable DoF region with imperfect CSIT is characterized by \eqref{eq:ICRegion1} at the top of the next page, where $\alpha_{0{,}IC}$ and $\Phi_{IC}$ are defined in \eqref{eq:ICalpha0} and \eqref{eq:ICphi}, respectively.
\begin{figure*}
{\small\begin{IEEEeqnarray}{rrl}\label{eq:ICRegion1}
L_0:&d_1{\leq}&N_1,\IEEEyesnumber\IEEEyessubnumber\\
L_0^\prime:&d_2{\leq}&\min\{M_2{,}N_2\},\IEEEyessubnumber\\
L_1:&d_1{+}d_2{\leq}&\min\{M_2{,}N_2\}{+}\left[\min\{M_1{,}N_1{+}N_2\}{-}N_2\right]\alpha_{0{,}IC},\IEEEyessubnumber\label{eq:L1}\\
L_2:&\frac{d_1}{N_1}{+}\frac{d_2}{\min\{M_2{,}N_2\}}{\leq}&
1{+}\frac{\left[\min\{M_2{,}N_1{+}N_2\}{-}N_1\right]\alpha_1}{\min\{M_2{,}N_2\}},
\IEEEyessubnumber\label{eq:ICL2}
\end{IEEEeqnarray}}
{\small \begin{IEEEeqnarray}{rcl}
\alpha_{0{,}IC}&{=}&\left\{\begin{array}{ll}0&\text{\rm If }M_2{\leq}N_2\\
\alpha_2&\text{\rm Else if }\Phi_{IC}{\leq}0\\
\frac{\min\{M_2{,}N_1{+}N_2\}{-}N_2}{\min\{M_1{,}N_1{+}N_2\}{-}N_2}\alpha_1&\text{\rm Else if }\frac{\min\{M_1{,}N_1{+}N_2\}{-}\min\{M_2{,}N_1{+}N_2\}}{\min\{M_1{,}N_1{+}N_2\}{-}N_2}\alpha_1{\geq}1{-}\alpha_2\\
\alpha_2{-}\frac{\Phi_{IC}}{\min\{M_1{,}N_1{+}N_2\}{-}N_1}&\text{\rm Else if }\alpha_1{\geq}1{-}\alpha_2\\
\frac{\alpha_1\alpha_2\left[\min\{M_2{,}N_1{+}N_2\}{-}N_2\right]}{(N_2{-}N_1)(1{-}\alpha_1){+}(\min\{M_1{,}N_1{+}N_2\}{-}N_2)\alpha_2}&\text{\rm Else if }\alpha_1{\leq}1{-}\alpha_2
\end{array}\right.\label{eq:ICalpha0}\\
\Phi_{IC}&{=}&N_2{-}N_1{+}[\min\{M_1{,}N_1{+}N_2\}{-}N_2]\alpha_2{-}[\min\{M_2{,}N_1{+}N_2\}{-}N_1]\alpha_1.\label{eq:ICphi}
\end{IEEEeqnarray}}
\hrulefill
\end{figure*}
\end{myprop}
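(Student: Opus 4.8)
The plan is to prove Proposition~\ref{prop:IC1} constructively by exhibiting an explicit RS transmission scheme and then optimizing its power-allocation exponents. First I would set up the scheme for Case~I ($M_1{\geq}N_2$, $M_2{\geq}N_1$, $N_2{\geq}N_1$), mirroring the MIMO-BC construction: Tx1 sends $N_1$ ZF-precoded private symbols with power exponent $A_1$; Tx2 sends $\min\{M_2,N_2\}$ private symbols split into a ZF-precoded group with exponent $A_2$ (those that align with Rx1's signal space) and a non-ZF-precoded group with exponent $(A_2{-}\alpha_1)^+$ (those that do not, exploiting the extra $N_2{-}N_1$ dimensions at Rx2); and both transmitters multicast their common messages $m_{c1},m_{c2}$ with the remaining power $P{-}P^{A_1}$, resp. $P{-}P^{A_2}$. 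I would then write the received signals at Rx1 and Rx2, identify the power level of each term using the CSIT-quality relation $\mathbb{E}[|\mathbf{h}_{kj,i}^H\mathbf{w}_{kj}|^2]{\sim}P^{-\alpha_k}$, and verify that after SIC (decode the super common message treating private symbols as noise, then decode the desired private symbols) Rx1 recovers $N_1$ private DoF and Rx2 recovers $\min\{M_2,N_2\}$ private DoF, while the common message is decodable at both receivers at rate limited by whichever receiver has the smaller residual "room" above the private-symbol floor.

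Next I would translate the decodability constraints into DoF inequalities in terms of $(A_1,A_2,\alpha_1,\alpha_2)$ and the antenna counts: a private-DoF accounting giving $d_{p1}=N_1 A_1/\!\min\{M_2,N_2\}$-type expressions (normalized appropriately), a common-message constraint at Rx1, and a common-message constraint at Rx2; these are exactly the raw forms of $L_1$ and $L_2$ before optimization. The region is then the union over all feasible exponent pairs $(A_1,A_2)\in[0,1]^2$, and—as announced in Section~\ref{sec:MR}—over the space-time split between $(A_1,A_2)=(\alpha_2,\alpha_1)$ and $(A_1,A_2)=(\alpha_2,1)$. I would compute the resulting region by fixing $d_1$ and maximizing $d_2$; a standard Lagrangian / vertex argument on the polytope shows the maximizing corner, and substituting back yields $\alpha_{0,IC}$ as the breakpoint governing the slope of $L_1$. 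The four-way (five-way counting $M_2\leq N_2$) case split in \eqref{eq:ICalpha0} should fall out naturally according to the sign of $\Phi_{IC}$ and the comparison of $\alpha_1$ with $1-\alpha_2$: $\Phi_{IC}\leq0$ is the regime where the weighted-sum bound $L_2$ is inactive, $\alpha_1\geq1-\alpha_2$ is where the space-time mix degenerates to a single power allocation, and $\alpha_1\leq1-\alpha_2$ is where both time phases are used with a nontrivial time-sharing fraction that is itself optimized.

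The main obstacle I expect is the interference-space bookkeeping at Rx2 together with the optimization yielding $\alpha_{0,IC}$. Unlike BC, here the two transmitted signals interfere only over a subspace whose dimension is controlled by $\min\{M_2,N_1{+}N_2\}$ and $\min\{M_1,N_1{+}N_2\}$ (after the row transformation of Section~\ref{sec:int_space}), so I must carefully argue which of Tx2's private symbols are "seen" by Rx1 as interference and which are not, and confirm that the non-ZF-precoded symbols allocated exponent $(A_2{-}\alpha_1)^+$ really cost nothing in common-message-decodability at Rx2 while still being decodable there. Getting these dimension counts right is what makes the $\min\{M_1,\cdot\}$ versus $\min\{M_2,\cdot\}$ asymmetry appear correctly in $L_1$ (which involves $M_1$) versus $L_2$ (which involves $M_2$). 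The second delicate point is checking that the union over $(A_1,A_2)$ and the time-sharing parameter is exactly the polytope claimed—i.e., that no corner point is lost and no spurious point is gained—which amounts to verifying that the achievable corner points $\mathcal{P}_{10},\mathcal{P}_{10'},\mathcal{P}_{12}$ are attained and that the active constraints switch precisely at the thresholds in \eqref{eq:ICalpha0}. I would handle this by direct evaluation of $d_2$ at the candidate optima and a monotonicity check in the time-sharing fraction, deferring the converse (matching outer bound) to the later outer-bound section.
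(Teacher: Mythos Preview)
Your high-level plan (distributed RS scheme, power-exponent optimization, space-time split) matches the paper's approach, but the proposal contains concrete errors in the scheme description and in the conceptual picture of Case~I that would derail the calculation.

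First, the symbol counts are wrong. Tx1 cannot send $N_1$ ZF-precoded private symbols: the null space of $\hat{\mathbf{H}}_{21}$ has dimension $M_1{-}N_2$, so $\mathbf{u}_1\in\mathbb{C}^{(M_1-N_2)\times 1}$. Likewise, Tx2's ZF-precoded group has $M_2{-}N_1$ symbols and the non-ZF-precoded group has $N_1{+}N_2'{-}M_2$ symbols (with $N_2'{=}\min\{M_2,N_2\}$). Getting these dimensions right is exactly what makes the coefficients $\min\{M_1,N_1{+}N_2\}{-}N_2$ and $\min\{M_2,N_1{+}N_2\}{-}N_1$ appear in $L_1$ and $L_2$; with $N_1$ private symbols at Tx1 you would obtain the wrong constants and the region would not close.

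Second, you are importing Case~II reasoning into Case~I. The defining feature of Case~I ($M_1{\geq}N_2$) is that at each receiver the desired signal is \emph{completely} overlapped with the interfering signal, so there is no partitioning of Tx2's private symbols into ``seen by Rx1'' versus ``not seen by Rx1'': all of them are received by Rx1 at level $P^{(A_2-\alpha_1)^+}$. The row transformation and interference-space identification you invoke belong to Case~II (Proposition~\ref{prop:IC2}), not here. Relatedly, the non-ZF-precoded symbols with exponent $(A_2{-}\alpha_1)^+$ do \emph{not} ``cost nothing in common-message-decodability at Rx2'': they contribute $(N_1{+}N_2'{-}M_2)(A_2{-}\alpha_1)^+$ to the floor at Rx2 (see \eqref{eq:dc1y2caseI}--\eqref{eq:dc2y2caseI}). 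The point of sending them is that they add $N_2$-dimensional private DoF at Rx2 while adding only $N_1$-dimensional interference at Rx1, so choosing $A_2{>}\alpha_1$ can still be beneficial when $\Phi_{IC}{>}0$; your stated rationale misidentifies the mechanism.

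Finally, a structural difference from the paper worth noting: because $\mathbf{c}_1$ and $\mathbf{c}_2$ come from different transmitters, the paper tracks \emph{separate} constraints on $d_{c1}$, $d_{c2}$, and $d_{c1}{+}d_{c2}$ at each receiver (six inequalities), then observes that in Case~I.2 the individual constraints at Rx2 collapse to the sum constraint (so the analysis parallels BC), whereas in Case~I.1 the Rx2 constraints become inactive altogether. Your proposal treats the common message as a single super-common message throughout, which is the BC picture; in IC you must keep $d_{c1}$ and $d_{c2}$ distinct at least until you verify which constraints are redundant, otherwise Case~I.1 (where $M_2{\leq}N_2$ forces $\alpha_{0,IC}{=}0$) will not come out correctly.
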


\begin{myprop}\label{prop:IC2}
For a $(M_1{,}M_2{,}N_1{,}N_2)$ MIMO IC of Case II, an achievable DoF region with imperfect CSIT is characterized by \eqref{eq:ICRegion2} at the top of the next page, where where $\mu_2{\triangleq}\min\{M_1{,}\min\{M_2{,}N_1{+}N_2\}{-}N_2{+}M_1{-}N_1^\prime\}$, $N_1^{\prime\prime}{\triangleq}\max\{M_1{,}N_1\}$ and $N_k^\prime{\triangleq}\min\{M_k{,}N_k\}{,}k{=}1{,}2$.
\begin{figure*}
{\small \begin{IEEEeqnarray}{rrl}\label{eq:ICRegion2}
L_0:&d_1{\leq}&N_1^\prime,\IEEEyesnumber\IEEEyessubnumber\\
L_0^\prime:&d_2{\leq}&N_2^\prime,\IEEEyessubnumber\\
L_1:&d_1{+}d_2{\leq}&N_2^\prime,\IEEEyessubnumber\label{eq:L1_2}\\
L_2:&\frac{d_1}{N_1^\prime}{+}\frac{d_2}{N_2^\prime{-}N_1{+}N_1^\prime}{\leq}&
\frac{N_2^\prime{+}\left[\min\{M_2{,}N_1{+}N_2\}{-}N_1\right]\alpha_1}{N_2^\prime{-}N_1{+}N_1^\prime},
\IEEEyessubnumber\label{eq:L2_2}\\
&(d_1{,}d_2)\,\text{\rm subject to }L_3,&\quad\text{\rm If }M_2{\geq}N_2{,}N_1{+}M_1{\leq}N_2\nonumber\\
&(d_1{,}d_2)\,\text{\rm subject to }L_4{,}L_5,&\quad\text{\rm If }M_2{\geq}N_2{,}N_1{+}M_1{\geq}N_2\nonumber
\end{IEEEeqnarray}
\begin{IEEEeqnarray}{rrl}
L_3:&\frac{d_1}{N_1^\prime}{+}\frac{d_2}{N_2{-}N_1^{\prime\prime}{+}N_1^\prime}{\leq}&
\frac{N_2{+}\left[N_2{-}N_1^{\prime\prime}\right]\alpha_1}{N_2{-}N_1^{\prime\prime}{+}N_1^\prime},
\IEEEyessubnumber\label{eq:L3}\\
L_4:&\frac{d_1}{M_1}{+}\frac{d_2}{N_2{-}N_1{+}M_1}{\leq}&\frac{N_2}{N_2{-}N_1{+}M_1}{+}
\left[\frac{N_2{-}N_1^{\prime\prime}}{N_2{-}N_1{+}M_1}{+}\frac{\mu_2(M_1{+}N_1{-}N_2)}{M_1(N_2{-}N_1{+}M_1)}\right]\alpha_1,
\IEEEyessubnumber\label{eq:L4}\\
L_5:&d_1{+}\frac{d_2}{2}{\leq}&\frac{1}{2}(M_1{+}N_1{+}(N_2{-}N_1^{\prime\prime})\alpha_1),
\IEEEyessubnumber\label{eq:L5}
\end{IEEEeqnarray}}
\hrulefill
\end{figure*}
\end{myprop}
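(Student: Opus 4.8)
The plan is to prove this achievable region by an explicit Rate-Splitting construction, in the same spirit as the scheme behind Proposition~\ref{prop:IC1} but adapted to the \emph{partial}-overlap structure that characterizes Case~II. The first step is the row transformation announced in Section~\ref{sec:int_space}: since $M_1{\leq}N_2$, the image of $\mathbf{H}_{21}^H$ occupies only a subspace of Rx2's $N_2$-dimensional receive space, so a generic invertible transformation of $\mathbf{y}_2$ separates it into a ``contaminated'' block that contains the whole of Tx1's transmit signal together with part of Tx2's signal, and an ``interference-free'' block of dimension $N_2{-}N_1^{\prime\prime}$ that carries only Tx2's signal. Private symbols of Rx2 placed in the interference-free block are invisible to Rx1 and therefore do not consume Rx1's common-message-decoding budget; this is exactly the flexibility the Case~II scheme exploits.

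Next I would specify the transmission block and power allocation. Tx1 transmits $N_1^\prime$ non-ZF-precoded private symbols at power exponent $A_1$ plus its share of the common message on the remaining power (no ZF-precoding at Tx1, since it lacks the spatial dimensions to null at Rx2). Tx2 splits its private symbols into a group routed to the contaminated block of Rx2 --- ZF-precoded toward $\hat{\mathbf{H}}_{12}$ so that their leakage at Rx1 scales as $P^{(A_2-\alpha_1)^+}$ --- and a group of $\mu_2$ symbols routed to the interference-free block, which may be given a larger exponent (up to $1$); Tx2 multicasts its share of the common message on what is left. When the CSIT quality $\alpha_1$ is not good enough a space-time variant is used exactly as in the BC scheme: the exponent pair $(A_1,A_2)$ takes one value on a fraction of the slots and another value on the rest, and decoding is joint over the aggregate observation rather than sequential, so only current imperfect CSIT is needed (unlike the Block-Markov scheme of \cite{xinping_mimo}).

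Then I would carry out the received-signal power-level bookkeeping at each receiver --- desired-signal dimensions and exponents, ZF-leakage exponents, and the ``floor'' above which the common message is multicast --- and plug this into a standard genie/bounded-density rate computation using $\mathbb{E}|\mathbf{h}^H\mathbf{w}|^2{\sim}P^{-\alpha_k}$. This yields $d_k{=}d_{pk}{+}d_{ck}$ with the total common DoF $d_c{=}d_{c1}{+}d_{c2}$ bounded by the minimum of the two receivers' decodable common rates. Optimizing the free exponents $(A_1,A_2)$ and the time-share fraction, and tracking where $N_1^{\prime\prime}{=}\max\{M_1,N_1\}$, $N_k^\prime{=}\min\{M_k,N_k\}$ and $\mu_2$ enter the dimension counts, produces a finite list of corner points. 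I would then verify that their convex hull, together with the single-user bounds $L_0,L_0^\prime$, is exactly the polytope in \eqref{eq:ICRegion2}: $L_1{:}\,d_1{+}d_2{\leq}N_2^\prime$ is the CSIT-independent saturated sum bound (Tx1 fully sacrificed, Tx2 filling its $N_2^\prime$ dimensions), $L_2$ is the weighted-sum facet obtained at $A_1{=}\alpha_1$, $L_3$ is the additional facet available in the sub-case $N_1{+}M_1{\leq}N_2$ where the interference-free block is large, and $L_4,L_5$ replace it when $N_1{+}M_1{\geq}N_2$, with $L_5$ being the time-sharing/space-time facet and the $\mu_2$ term in $L_4$ recording the fraction of Tx2's private symbols that escape Rx1.

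The step I expect to be the main obstacle is the joint power-allocation optimization together with the sub-case split. One must show that no single static exponent pair is dominant, so that the space-time facet $L_5$ and the $\mu_2$-dependent bound $L_4$ are genuinely extreme points of the achievable region rather than redundant; and one must check that after the row transformation the leakage accounting at Rx1 is tight, i.e., that the common message really is decodable at both receivers at the claimed rate and that the $\mu_2$ interference-free private symbols of Rx2 truly cost Rx1 nothing. Establishing that the achievable polytope acquires no spurious facets and coincides with \eqref{eq:ICRegion2} for every admissible $(M_1,M_2,N_1,N_2)$ and every $(\alpha_1,\alpha_2){\in}[0,1]^2$ in Case~II is where most of the computation lives.
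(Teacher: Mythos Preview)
Your high-level picture (row transformation, partial overlap, interference-free block at Rx2) is right, but several concrete ingredients of your scheme diverge from the paper's construction in ways that would derail the proof.

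First, in the paper Tx1 transmits \emph{only} the common message $\mathbf{c}_1$ and no private symbols whatsoever (see \eqref{eq:s1caseII}): since $M_1{\leq}N_2$, Tx1 has no null space toward Rx2, so any private symbol from Tx1 would arrive at Rx2 at full power and the bookkeeping you sketch would not close. Consequently $d_{p1}{=}0$ and Rx1's DoF is entirely $d_1{=}d_{c1}$; there is no exponent $A_1$ to optimize.

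Second, there is \emph{no space-time transmission} in Case~II. The extra degree of freedom comes instead from giving Tx2's private symbols \emph{two} distinct static power exponents $A_2$ and $A_2^\prime$ (with $A_2{\leq}A_2^\prime$), according to whether the symbol lands in the part of $\bar{\mathbf{H}}_{22}$ that overlaps with $\bar{\mathbf{H}}_{21}$ or not. The five-way split $(\tau,\mu_1,\mu_2,\delta_1,\delta_2)$ in \eqref{eq:s2caseII} implements this; in particular $\mu_2$ counts the ZF-precoded symbols in the \emph{overlapping} (contaminated) block, not the interference-free one as you wrote, and $L_5$ is not a time-sharing facet but arises from condition~A of the static $(A_2,A_2^\prime)$ optimization.

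Third, the analysis is not a corner-point enumeration followed by convex hull. Because $d_{c1}$ and $d_{c2}$ face genuinely different constraints at Rx2 (compare \eqref{eq:dc1y2caseII} and \eqref{eq:dc2y2caseII}), the paper fixes $d_{c1}{=}\lambda$ and solves the linear program \eqref{eq:optd2} for the maximal $d_2(\lambda)$ over $(A_2,A_2^\prime)$. The closed-form solution (Appendix~B) breaks into six ranges of $\lambda$, and reading off $d_2(\lambda)$ on each range produces exactly the facets $L_1$--$L_5$ as in Table~\ref{tab:dofpairs}. Your plan of optimizing a single exponent pair plus a time-share fraction would not reproduce this structure.
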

\begin{table*}[t]
\renewcommand{\arraystretch}{1.3}
\vspace{.6em}
\centering
\begin{tabular}{|l|c|c|c|}
\hline
Conditions & Active Constraints & Corner Points & Optimality\\
\hline
Case I.1: $M_1{\geq}N_2${,}$M_2{\leq}N_2$ & $L_1$, $L_2$ & $\mathcal{P}_{20}$, $\mathcal{P}_{12}$, $\mathcal{P}_{10^\prime}$ & Yes\\
\hline
Case I.2: $M_1{\geq}N_2${,}$M_2{\geq}N_2$ & & &\\
$\quad$ If $\Phi_{IC}{\leq}0$: & $L_1$ & $\mathcal{P}_{10}$, $\mathcal{P}_{10^\prime}$ & Yes\\
$\quad$ If $\Phi_{IC}{\geq}0$: & $L_1$, $L_2$ & $\mathcal{P}_{20}$, $\mathcal{P}_{12}$, $\mathcal{P}_{10^\prime}$ & Unknown\\
\hline
Case II.1: $M_1{\leq}N_2${,}$M_2{\leq}N_2$ & $L_1$, $L_2$ & $\mathcal{P}_{20}$, $\mathcal{P}_{12}$, $\mathcal{P}_{10^\prime}$ & Yes, if $N_1{\leq}M_1{\leq}N_2$\\
\hline
Case II.2.a: $M_1{\leq}N_2${,}$M_2{\geq}N_2$ and $M_1{+}N_1{\leq}N_2$ & & & Unknown\\
$\quad$ If $\alpha_1{\leq}\frac{M_1{-}N_1^\prime}{\mu_2}$ & $L_1$, $L_2$, $L_3$ & $\mathcal{P}_{20}$, $\mathcal{P}_{23}$, $\mathcal{P}_{13}$, $\mathcal{P}_{10^\prime}$ & \\
$\quad$ If $\alpha_1{\geq}\frac{M_1{-}N_1^\prime}{\mu_2}$ & $L_1$, $L_3$ & $\mathcal{P}_{30}$, $\mathcal{P}_{13}$, $\mathcal{P}_{10^\prime}$ & \\
\hline
Case II.2.b: $M_1{\leq}N_2${,}$M_2{\geq}N_2$ and $M_1{+}N_1{\geq}N_2$ & & & Unknown\\
$\quad$ If $\alpha_1{\leq}\frac{M_1{-}N_1^\prime}{\mu_2}$ & $L_1$, $L_2$, $L_4$ & $\mathcal{P}_{20}$, $\mathcal{P}_{24}$, $\mathcal{P}_{14}$, $\mathcal{P}_{10^\prime}$ &\\
$\quad$ If $\frac{M_1{-}N_1^\prime}{\mu_2}{\leq}\alpha_1{\leq}\frac{N_2{-}N_1}{\mu_2{+}N_2{-}N_1^{\prime\prime}}$ & $L_1$, $L_4$, $L_5$ & $\mathcal{P}_{50}$, $\mathcal{P}_{54}$, $\mathcal{P}_{14}$, $\mathcal{P}_{10^\prime}$ &\\
$\quad$ If $\alpha_1{\geq}\frac{N_2{-}N_1}{\mu_2{+}N_2{-}N_1^{\prime\prime}}$ & $L_1$, $L_5$ & $\mathcal{P}_{50}$, $\mathcal{P}_{15}$, $\mathcal{P}_{10^\prime}$ &\\
\hline
\end{tabular}
\caption{The DoF regions of $(M_1{,}M_2{,}N_1{,}N_2)$ MIMO IC: active constraints, corner points and optimality.}\label{tab:ICregions}
\vspace{-0mm}
\end{table*}
\begin{figure}[!t]
\renewcommand{\captionfont}{\small}
\captionstyle{center}
\centering
        \subfigure[Case I.1 and II.1: $M_2{\leq}N_2$]{
                \centering
                \includegraphics[width=0.2\textwidth,height=2.5cm]{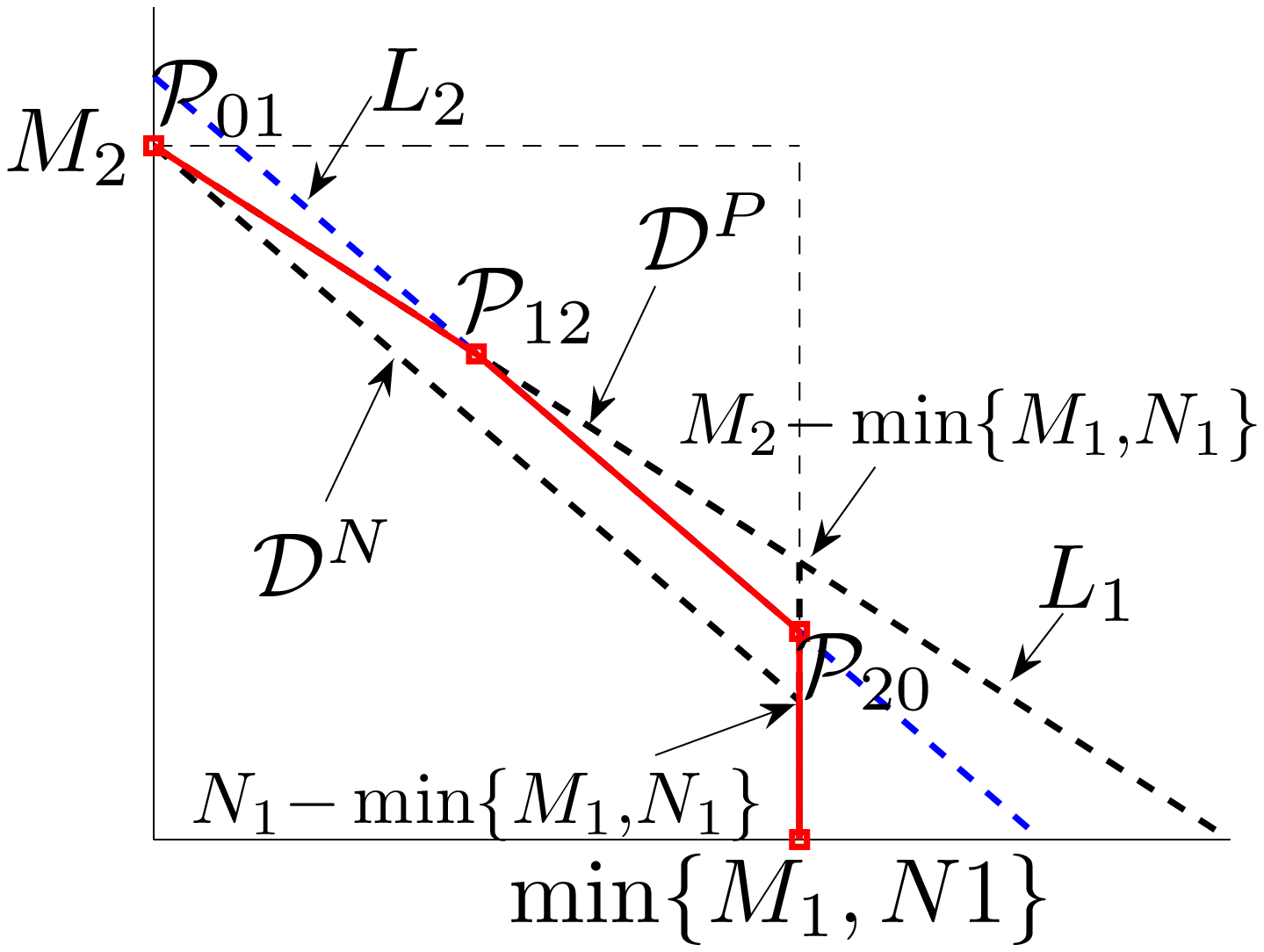}
                \label{fig:M2leqN2}
        }
        \subfigure[Case I.2 and $\Phi_{IC}{\leq}0$, where $L_2$ inactive]{
                \centering
                \includegraphics[width=0.2\textwidth,height=2.5cm]{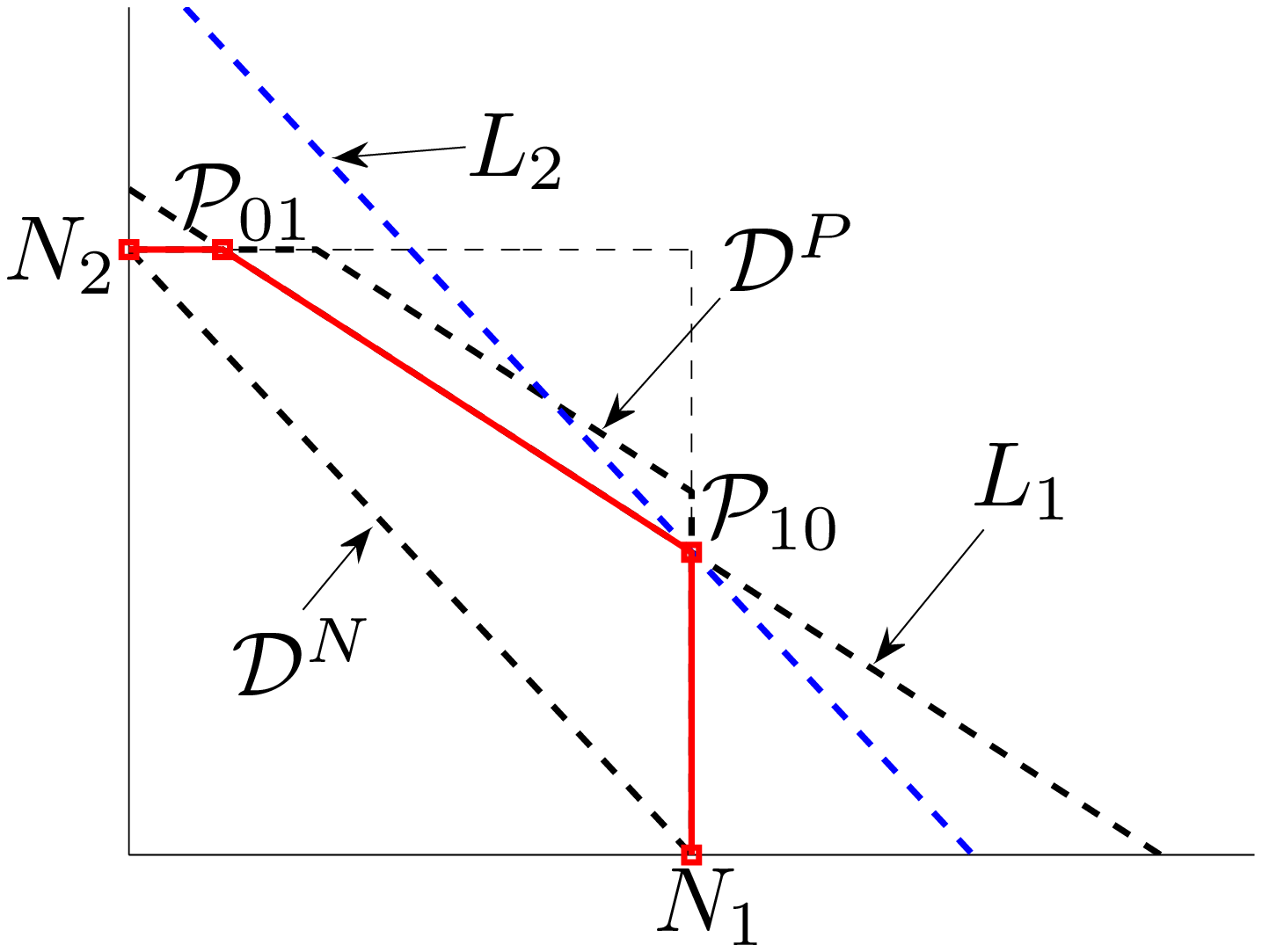}
                \label{fig:I2b}
        }
        \subfigure[Case I.2 and $\Phi_{IC}{\geq}0$]{
                \centering
                \includegraphics[width=0.2\textwidth,height=2.5cm]{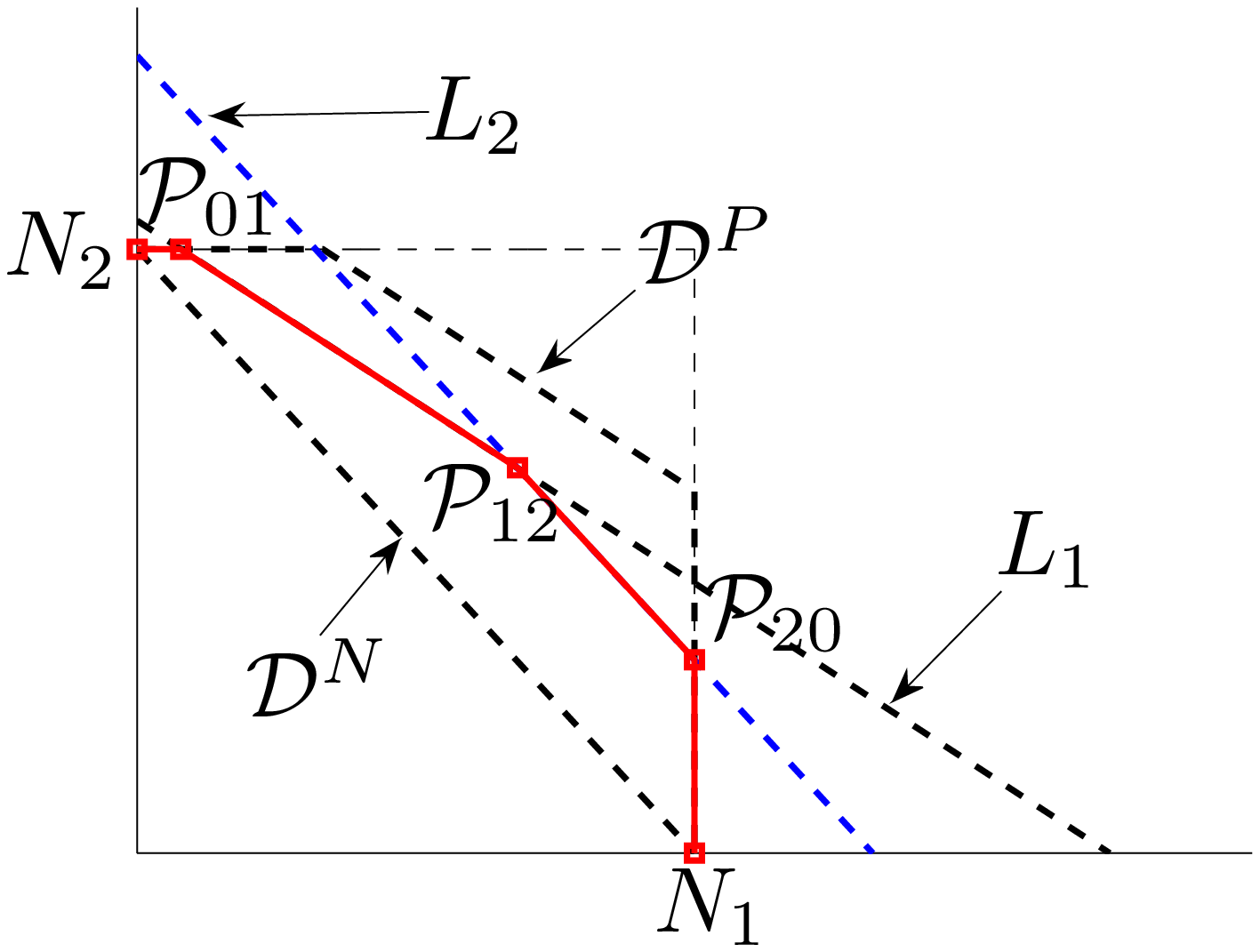}
                \label{fig:I2a}
        }
        \subfigure[Case II.2.a: If $\alpha_1{\leq}\frac{M_1{-}N_1^\prime}{\mu_2}$ (only valid when $N_1{\leq}M_1$)]{
                \centering
                \includegraphics[width=0.2\textwidth,height=2.5cm]{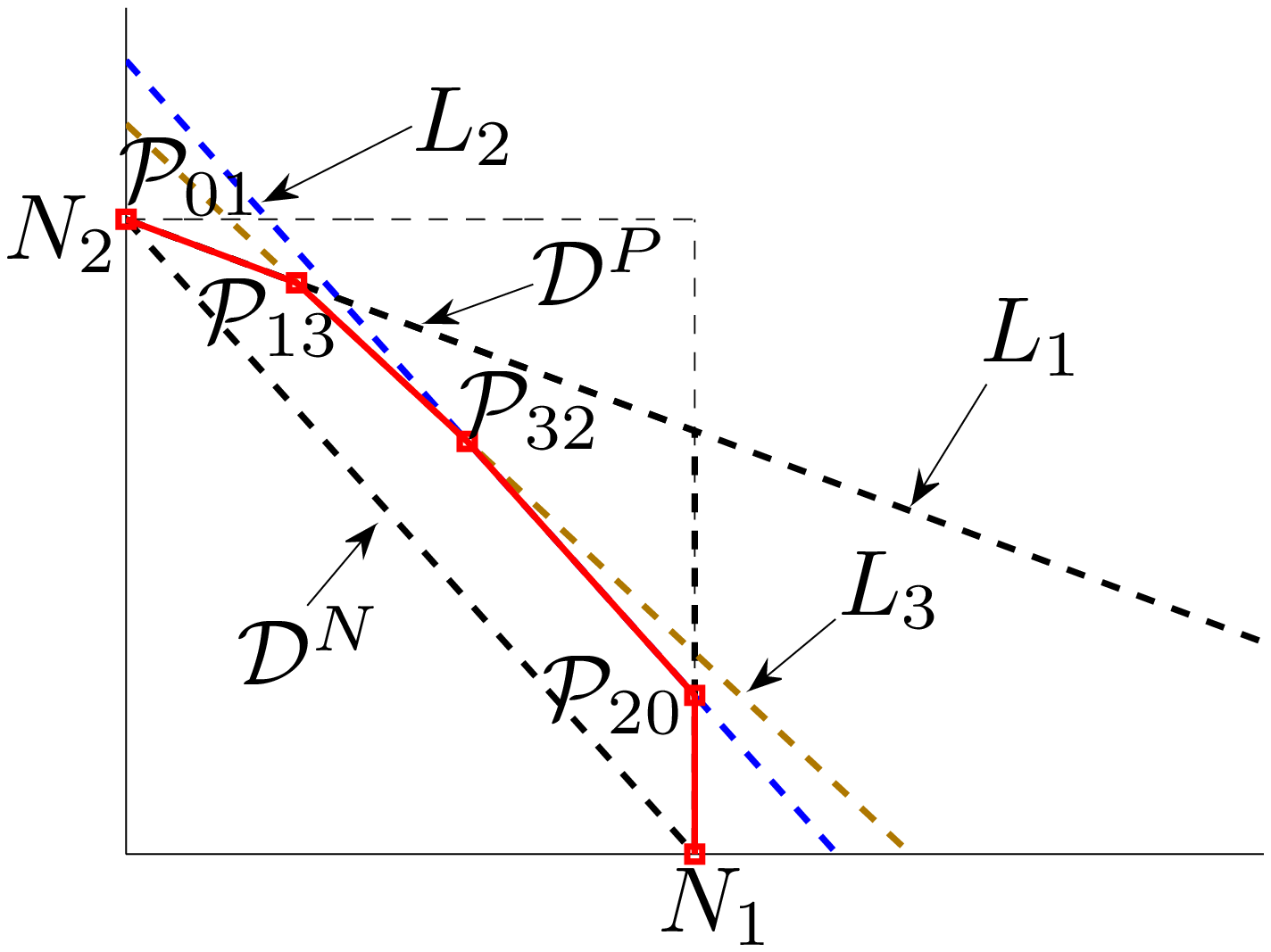}
                \label{fig:II2a1}
        }\\
        \subfigure[Case II.2.a: If $\alpha_1{\geq}\frac{M_1{-}N_1^\prime}{\mu_2}$, $L_2$ inactive]{
                \centering
                \includegraphics[width=0.2\textwidth,height=2.5cm]{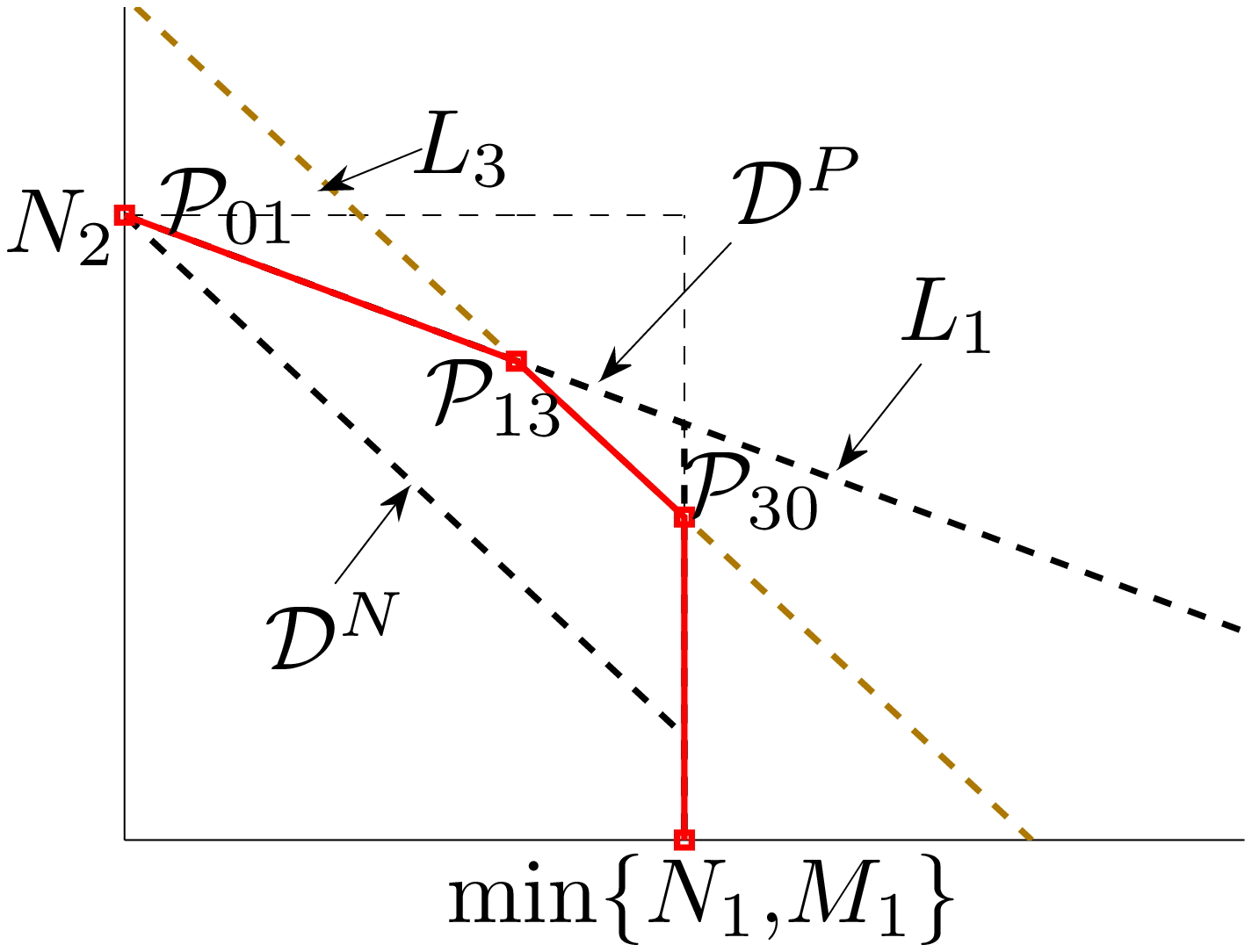}
                \label{fig:II2a2}
        }
        \subfigure[Case II.2.b: If $\alpha_1{\leq}\frac{M_1{-}N_1^\prime}{\mu_2}$ (only valid when $N_1{\leq}M_1$), $L_5$ inactive]{
                \centering
                \includegraphics[width=0.2\textwidth,height=2.5cm]{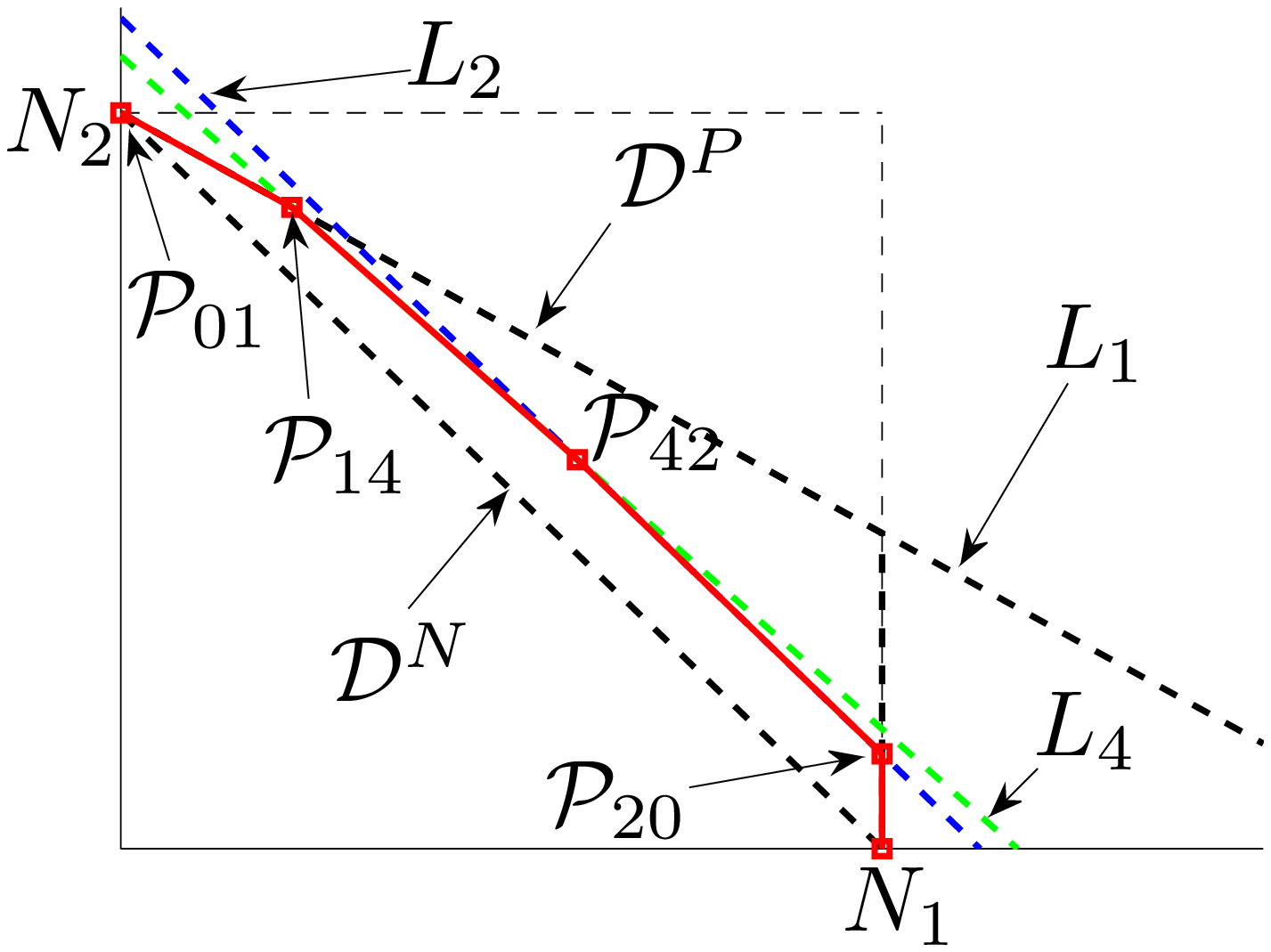}
                \label{fig:II2b1}
        }
        \subfigure[Case II.2.b: If $\frac{M_1{-}N_1^\prime}{\mu_2}{\leq}\alpha_1{\leq}\frac{N_2{-}N_1}{\mu_2{+}N_2{-}N_1^{\prime\prime}}$, $L_2$ inactive]{
                \centering
                \includegraphics[width=0.2\textwidth,height=2.5cm]{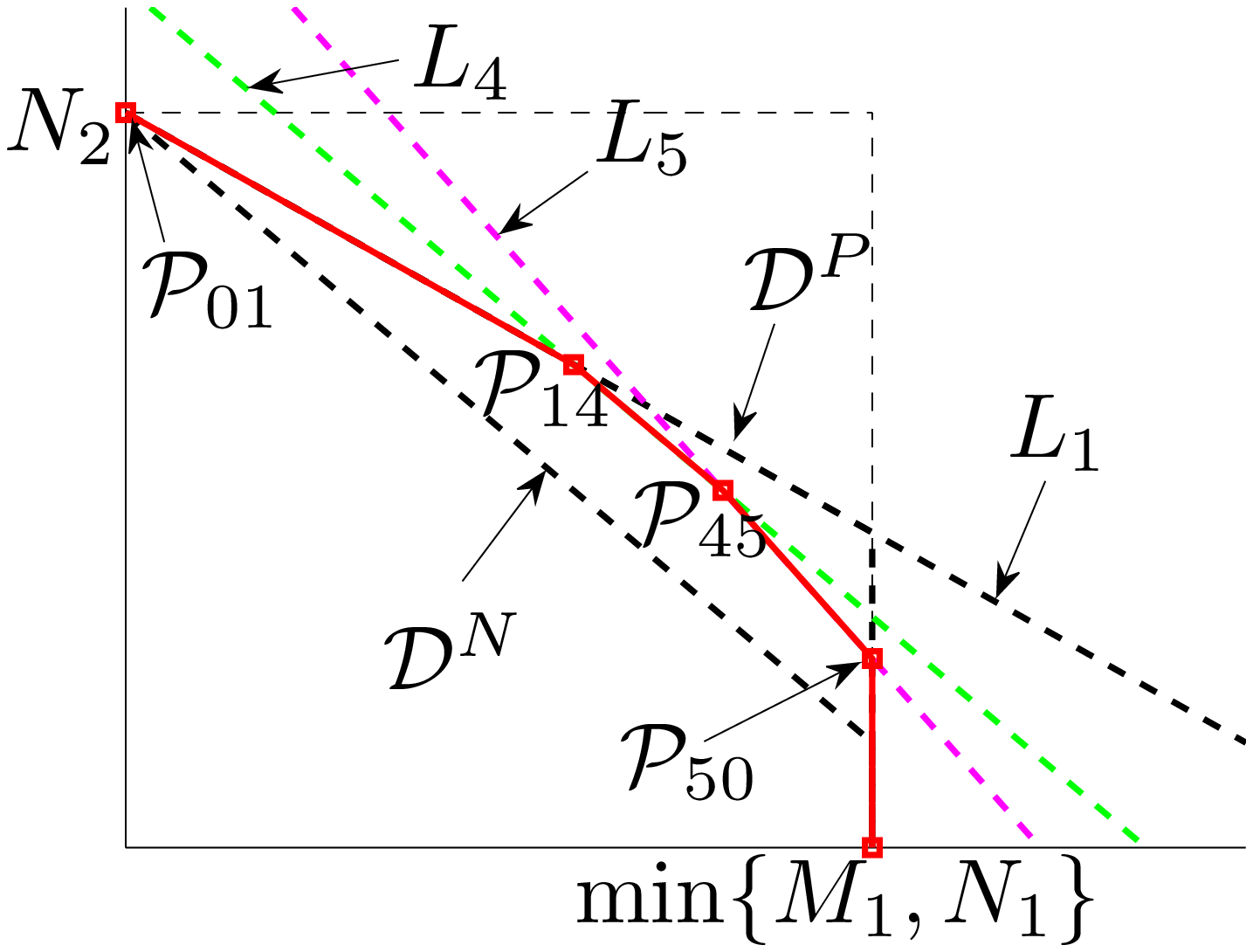}
                \label{fig:II2b2}
        }
        \subfigure[Case II.2.b: If $\alpha_1{\geq}\frac{N_2{-}N_1}{\mu_2{+}N_2{-}N_1^{\prime\prime}}$, $L_2$ and $L_4$ inactive]{
                \centering
                \includegraphics[width=0.2\textwidth,height=2.5cm]{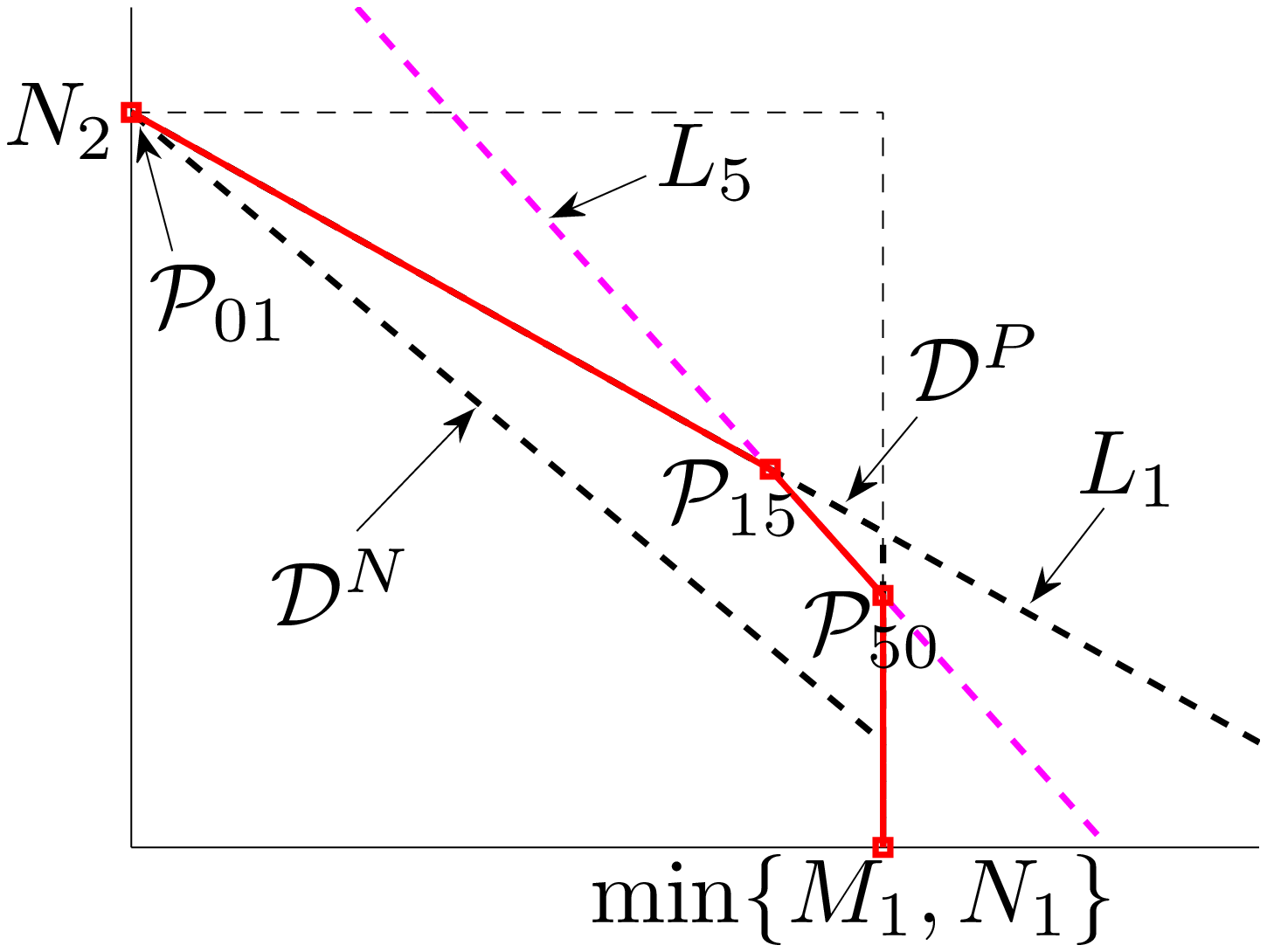}
                \label{fig:II2b3}
        }
        \caption{Achievable DoF regions in $(M_1{,}M_2{,}N_1{,}N_2)$ MIMO IC.}\label{fig:ICregions}
\end{figure}

For clarity, we summarize the active constraints and the resulted corner points in Table \ref{tab:ICregions} for different antenna configurations. Figure \ref{fig:ICregions} illustrates the DoF regions in Proposition \ref{prop:IC1} and \ref{prop:IC2}, where $\mathcal{D}^P$ and $\mathcal{D}^N$ stand for the optimal DoF region when there is perfect CSIT \cite{JarfarMIMOIA2pair} and no CSIT \cite{zhunoCSIT}, respectively.

When $M_1{\geq}N_2$, the DoF region is a function of $\alpha_1$ and $\alpha_2$ according to Proposition \ref{prop:IC1}. When $\alpha_1{=}0$ (${\forall}\alpha_2{\in}[0{,}1]$), the DoF regions become the DoF region with no CSIT. The values of $\alpha_1$ and $\alpha_2$ that lead to the DoF region with perfect CSIT are different according to the antenna configurations, namely $\alpha_1{=}\alpha_2{=}1$ if $N_2{\leq}M_1{\leq}M_2$; $\alpha_1{=}1$, $\alpha_2{\geq}\frac{\min\{M_2{,}N_1{+}N_2\}{-}N_2}{\min\{M_1{,}N_1{+}N_2\}{-}N_2}$ if $N_2{\leq}M_2{\leq}M_1$; and $\alpha_1{=}1$, ${\forall}\alpha_2{\in}[0{,}1]$ if $M_2{\leq}N_2$.

When $M_1{\leq}N_2$, we can see that the DoF region is only a function of $\alpha_1$ according to Proposition \ref{prop:IC2}, the DoF region with perfect and no CSIT are reached when $\alpha_1{=}1$ and $\alpha_2{=}0$ (${\forall}\alpha_2{\in}[0{,}1]$), respectively.

Moreover, if $N_1{=}N_2{=}N$, $M_1{=}M_2{=}M$ and $M{\geq}2N$, $L_1$ and $L_2$ in Proposition \ref{prop:IC1} boil down to the the sum DoF constraint in the symmetric case \cite{Tasos}.


\subsection{Discussion on outer-bound}
\subsubsection{MIMO BC}
An outer-bound of the DoF region of MIMO BC with imperfect CSIT is stated in the following proposition.
\begin{myprop}\label{prop:BC_outer}
For a $(M{,}N_1{,}N_2)$ MIMO BC, supposing $N_1{\leq}N_2$, the DoF region with imperfect CSIT lies in \eqref{eq:BC_outer} at the top of the next page.
\begin{figure*}
{\small\begin{IEEEeqnarray}{rrl}\label{eq:BC_outer}
d_1&{\leq}&\min\{M{,}N_1\},\IEEEyesnumber\IEEEyessubnumber\\
d_2&{\leq}&\min\{M{,}N_2\},\IEEEyessubnumber\\
d_1{+}d_2&{\leq}&\min\{M{,}N_2\}{+}\left[\min\{M{,}N_1{+}N_2\}{-}\min\{M{,}N_2\}\right]\alpha_2,\IEEEyessubnumber\label{eq:outer_sum}\\
\frac{d_1}{\min\{M{,}N_1\}}{+}\frac{d_2}{\min\{M{,}N_2\}}&{\leq}&1{+}\frac{\min\{M{,}N_1{+}N_2\}{-}\min\{M{,}N_1\}}{\min\{M{,}N_2\}}\alpha_1,
\IEEEyessubnumber\label{eq:outer_wsum}
\end{IEEEeqnarray}}
\hrulefill
\end{figure*}
\end{myprop}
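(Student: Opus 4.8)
\emph{Proof strategy.} The plan is to establish the four inequalities of \eqref{eq:BC_outer} one at a time: the two single-user bounds are immediate, and the two coupled bounds follow from the aligned-image-set (AIS) framework of \cite{Davoodi14} combined with the sliding-window lemma of \cite{Borzoo_Kuser}. For the single-user bounds $d_1{\leq}\min\{M{,}N_1\}$ and $d_2{\leq}\min\{M{,}N_2\}$, I would drop the undesired message and grant the transmitter perfect CSIT; then $R_k$ cannot exceed the ergodic capacity of the point-to-point $M{\times}N_k$ MIMO link, whose pre-log equals $\min\{M{,}N_k\}$, and since perfect CSIT only enlarges the achievable region, the bound holds a fortiori with imperfect CSIT.

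For the sum bound \eqref{eq:outer_sum} I would use a genie-aided enhancement together with Fano's inequality. Revealing $W_1$ (the message of the receiver with fewer antennas, Rx$1$) to Rx$2$ gives $n(R_1{+}R_2){\leq}I(W_1;\mathbf{y}_1^n){+}I(W_2;\mathbf{y}_2^n|W_1){+}n\epsilon_n$, which I would expand as $I(W_1;\mathbf{y}_1^n){\leq}h(\mathbf{y}_1^n){-}h(\mathbf{y}_1^n|W_1)$ and $I(W_2;\mathbf{y}_2^n|W_1){=}h(\mathbf{y}_2^n|W_1){-}h(\mathbf{y}_2^n|W_1{,}W_2)$. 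The term $h(\mathbf{y}_1^n)$ is at most $n\min\{M{,}N_1\}\log P{+}n\,o(\log P)$ by the power constraint and the $M$-dimensional input, while $h(\mathbf{y}_2^n|W_1{,}W_2)$ is $n\,o(\log P)$ because $(W_1{,}W_2{,}\hat{\mathbf{H}}_1{,}\hat{\mathbf{H}}_2)$ fixes the transmitted signal up to noise, the residual averaging over the true channel given its estimate contributing only $o(\log P)$ by the bounded-density hypothesis. The essential step is then to show that $h(\mathbf{y}_2^n|W_1){-}h(\mathbf{y}_1^n|W_1)$ grows no faster than $n\big([\min\{M{,}N_2\}{-}\min\{M{,}N_1\}]{+}[\min\{M{,}N_1{+}N_2\}{-}\min\{M{,}N_2\}]\alpha_2\big)\log P{+}n\,o(\log P)$, where the first piece reflects the extra receive dimensions Rx$2$ simply owns, and the second is an aligned-image count: conditioned on $W_1$, the coordinates of $\mathbf{y}_2^n$ lying in the receive subspace of Rx$1$ collapse, over the transmitter's $P^{-\alpha_2}$-level uncertainty about $\mathbf{H}_2$, into at most $P^{[\min\{M{,}N_1{+}N_2\}{-}\min\{M{,}N_2\}]\alpha_2}$ aligned images per channel use. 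Summing the three estimates yields \eqref{eq:outer_sum}. To lift the single-channel-use AIS count to the blocklength-$n$ inequality, and to accommodate channel states that vary over the block, I would invoke the sliding-window lemma of \cite{Borzoo_Kuser} to telescope the chain of per-channel-use conditional-entropy-difference bounds.

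The weighted-sum bound \eqref{eq:outer_wsum} follows from the analogous argument with the roles of the two receivers interchanged — the genie now sits on Rx$1$'s side, so it is $\alpha_1$ that governs the image collapse — after normalizing each receiver's rate by its own maximum pre-log $\min\{M{,}N_k\}$ before the aligned-image count is applied; this ``per spatial dimension'' normalization is what produces the weights $1/\min\{M{,}N_1\}$ and $1/\min\{M{,}N_2\}$ and the coefficient $[\min\{M{,}N_1{+}N_2\}{-}\min\{M{,}N_1\}]/\min\{M{,}N_2\}$ on $\alpha_1$. As consistency checks, setting $\alpha_1{=}\alpha_2{=}1$ sharpens \eqref{eq:outer_sum} to the perfect-CSIT sum constraint $d_1{+}d_2{\leq}\min\{M{,}N_1{+}N_2\}$ and makes \eqref{eq:outer_wsum} a supporting hyperplane of the perfect-CSIT DoF region of \cite{Wein06capcityBC}, while setting $\alpha_1{=}0$ collapses \eqref{eq:outer_wsum} to $d_1/\min\{M{,}N_1\}{+}d_2/\min\{M{,}N_2\}{\leq}1$ and renders \eqref{eq:outer_sum} redundant, recovering the no-CSIT region of \cite{HJSV09}.

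The main obstacle I anticipate is the MIMO aligned-image-set count itself: making precise which CSIT quality ($\alpha_2$ for \eqref{eq:outer_sum}, $\alpha_1$ for \eqref{eq:outer_wsum}) governs the collapse, and which combination of $\min\{M{,}N_1\}$, $\min\{M{,}N_2\}$ and $\min\{M{,}N_1{+}N_2\}$ enters as the ``free'' term versus the $\alpha_k$-throttled term, requires choosing a basis adapted to the two receivers' (possibly overlapping and, when $M{<}N_1{+}N_2$, rank-deficient) receive subspaces and running the functional-dependence/deterministic-model reduction of \cite{Davoodi14} on the projected channel. A secondary but pervasive difficulty is ensuring that all the $o(\log P)$ slack terms — from noise entropies, from averaging over the true channel given the estimate, and from Fano — are genuinely negligible, which is precisely where the bounded-density assumption on the conditional channel law is used throughout.
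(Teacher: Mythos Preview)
Your overall plan—genie enhancement plus the aligned-image-set machinery of \cite{Davoodi14}, with the sliding-window lemma of \cite{Borzoo_Kuser} for the weighted inequality—matches the paper's route, but two concrete choices are inverted and would cause the argument to fail as written.

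First, the genie assignments are swapped. For the sum bound \eqref{eq:outer_sum} the paper gives $W_2$ (the message of the \emph{larger}-antenna receiver) to Rx$1$, writes $n(R_1{+}R_2){\leq}nN_2\log P{+}H(\bar{\mathbf{y}}_1^n|\bar{\mathbf{y}}_2^n{,}W_2)$, and bounds the last term via the aligned-image count: the probability that two candidate $\bar{\mathbf{s}}_1$ values produce the same $\bar{\mathbf{y}}_2$ is controlled by the density bound $f_{\max,2}{=}O(P^{\alpha_2})$ on \emph{user~2's} channel, which is why $\alpha_2$ appears. Your direction (give $W_1$ to Rx$2$, bound $h(\mathbf{y}_2^n|W_1){-}h(\mathbf{y}_1^n|W_1){\leq}H(\bar{\mathbf{y}}_2^n|\bar{\mathbf{y}}_1^n{,}W_1)$) forces the AIS to run through $G_1$, so the collapse is governed by $\alpha_1$, not $\alpha_2$; moreover the ``extra'' $s_2$ coordinates make $H(\bar{\mathbf{y}}_2|\bar{\mathbf{y}}_1{,}W_1)$ large and you do not recover \eqref{eq:outer_sum}. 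The same reversal afflicts your weighted-sum step: to obtain $\alpha_1$ in \eqref{eq:outer_wsum} one gives $W_1$ to Rx$2$ (not $W_2$ to Rx$1$), bounds $N_1H(\bar{\mathbf{y}}_2^n|W_1){-}N_2H(\bar{\mathbf{y}}_1^n|W_1)$, and the AIS then runs on user~1's channel.

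Second, the role of the sliding-window lemma is different from what you describe. It is not used to pass from single-channel-use to blocklength (that is handled directly by the product structure in Step~3 of \cite{Davoodi14}); it is a \emph{spatial} device that turns the asymmetric combination $N_1H(\bar{\mathbf{y}}_2^n|W_1){-}N_2H(\bar{\mathbf{y}}_1^n|W_1)$ into $\sum_{j=1}^{N_2}\big[H(\bar{\mathbf{y}}_{2,j:j+N_1-1}^n|W_1){-}H(\bar{\mathbf{y}}_1^n|W_1)\big]$, i.e., $N_2$ comparisons of an $N_1$-dimensional window of $\bar{\mathbf{y}}_2$ against $\bar{\mathbf{y}}_1$. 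It is needed only for \eqref{eq:outer_wsum}, precisely because of the $N_1{\neq}N_2$ mismatch in the weights; it plays no role in the sum bound. The remaining ingredient you correctly anticipate—reducing to a canonical form and, when $M{<}N_1{+}N_2$, using an invertible receive-side transformation so that $N_1{+}N_2{-}M$ coordinates of one user's observation are reproducible from the other's (hence contributing only $o(\log P)$)—is exactly how the paper handles the three regimes $M{\geq}N_1{+}N_2$, $N_2{\leq}M{<}N_1{+}N_2$, and $N_1{\leq}M{<}N_2$.
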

\begin{proof}
  See Appendix C.
\end{proof}

The achievable DoF region stated in Proposition \ref{prop:BC} and the outer-bound stated in Proposition \ref{prop:BC_outer} only differ by the sum DoF inequality. It can be verified that the optimality of the achievable DoF region stated in Proposition \ref{prop:BC} holds in two cases, i.e., $\Phi_{BC}{\leq}0$ and $M{\leq}N_2$. In the first case, the optimal sum DoF is\footnote{Note that when $\Phi_{BC}{\leq}0$, one has $M{\geq}N_2$} $N_2{+}\min\{N_1{,}M{-}N_2\}\alpha_2$. In the second case, the optimal sum DoF is $N_2$. Moreover, when $M{\leq}N_2$, the optimal DoF region with imperfect CSIT coincides with the optimal DoF region with a mixture of perfect delayed CSIT and imperfect current CSIT \cite{xinping_mimo}, which implies the uselessness of the delayed CSIT under the antenna configuration $M{\leq}N_2$.

\subsubsection{MIMO IC}

Allowing transmitters to cooperate produces a $(M_1{+}M_2{,}N_1{,}N_2)$ MIMO BC. Then, by replacing $M$ with $M_1{+}M_2$ into \eqref{eq:BC_outer}, we obtain an outer-bound of the DoF region of $(M_1{,}M_2{,}N_1{,}N_2)$ MIMO IC with imperfect CSIT. We discuss the tightness of the this outer-bound following the cases presented in Table \ref{tab:ICregions}.
\begin{itemize}
  \item Case I.1, $M_1{\geq}N_2$ and $M_2{\leq}N_2$: In this case, the obtained outer-bound is loose. However, the optimality of the achievable DoF region stated in Proposition \ref{prop:IC1} holds as it is consistent with the optimal DoF region of a mixture of perfect delayed CSIT and imperfect current CSIT \cite{xinping_mimo}.
  \item Case I.2, $M_1{\geq}N_2$ and $M_2{\geq}N_2$: In this case, when $\Phi_{IC}{\leq}0$, the obtained outer-bound is tight and the achievable DoF region stated in Proposition \ref{prop:IC1} is optimal; otherwise, the outer-bound is loose and the optimal DoF region is unknown.
  \item Case II.1, $M_1{\leq}N_2$ and $M_2{\leq}N_2$: In this case, the obtained outer-bound is loose. However, when $N_1{\leq}M_1{\leq}N_2$, the optimality of the achievable DoF region stated in Proposition \ref{prop:IC2} holds as it is consistent with the optimal DoF region of a mixture of perfect delayed CSIT and imperfect current CSIT \cite{xinping_mimo}.
  \item Case II.2, $M_1{\leq}N_2$ and $M_2{\geq}N_2$: In this case, the obtained outer-bound is loose, and the optimal DoF region is unknown.
\end{itemize}

Next, we will show the achievability proof of Proposition \ref{prop:BC}, \ref{prop:IC1} and \ref{prop:IC2} in Section \ref{sec:BC}, \ref{sec:icI} and \ref{sec:icII}, respectively, by proposing suitable RS schemes with proper power allocation. 

\section{Achievability Proof: Broadcast Channel}\label{sec:BC}
In this section, we firstly design an RS scheme focusing on a a $(4{,}2{,}3)$ BC example, and secondly propose the unified framework for the general asymmetric MIMO BC, which achieves the DoF region stated in Proposition \ref{prop:BC}.

\subsection{RS scheme for the asymmetric case: a $(4{,}2{,}3)$ BC example}\label{sec:BC423}
We constitute the RS transmission block for the $(4{,}2{,}3)$ BC as follows.
\begin{itemize}
  \item $1$ private symbol, denoted by $u_1$, is sent to Rx1 along a ZF-precoder $\mathbf{v}_1{=}\mathbf{H}_2^{\bot}{\in}\mathbb{C}^{4{\times}1}$ with power exponent $A_1$;
  \item $2$ private symbols, denoted by $\mathbf{u}_2^{(1)}{\in}\mathbb{C}^{2{\times}1}$, are sent to Rx2 along a ZF-precoder $\mathbf{V}_2^{(1)}{=}\mathbf{H}_1^{\bot}{\in}\mathbb{C}^{4{\times}2}$ with power exponent $A_2$;
  \item $1$ private symbol, denoted by $u_2^{(2)}$, is sent to Rx2 along a precoder $\mathbf{v}_2^{(2)}{\in}\mathbb{C}^{4{\times}1}$ in the subspace spanned by $\hat{\mathbf{H}}_2$. Its power exponent is $(A_2{-}\alpha_1)^+$.
  \item A common message, denoted by $\mathbf{c}{\in}\mathbb{C}^{4{\times}1}$, is multicast using the remaining power.
\end{itemize}
Moreover, the power exponents $A_1$ and $A_2$ are defined as $A_1{\in}[0{,}\alpha_2]$ and $A_2{\in}[0{,}1]$. Mathematically, the transmitted and received signals write as
\begin{IEEEeqnarray}{rcl}\label{eq:RS423}
\!\!\!\!\!\!\mathbf{s}&{=}&\underbrace{\mathbf{c}}_{P}{+}\underbrace{\mathbf{v}_1u_1}_{P^{A_1}}{+}
\underbrace{\mathbf{V}_2^{(1)}\mathbf{u}_2^{(1)}}_{P^{A_2}}{+}\underbrace{\mathbf{v}_2^{(2)}u_2^{(2)}}_{P^{(A_2{-}\alpha_1)^+}}
\IEEEyesnumber\IEEEyessubnumber\label{eq:s423}\\
\!\!\!\!\!\!\mathbf{y}_1&{=}&\underbrace{\mathbf{H}_1^H\mathbf{c}}_{P}{+}
\underbrace{\mathbf{H}_1^H\mathbf{v}_1u_1}_{P^{A_1}}{+}
\underbrace{\mathbf{H}_1^H\left(\mathbf{V}_2^{(1)}\mathbf{u}_2^{(1)}{+}\mathbf{v}_2^{(2)}u_2^{(2)}\right)}_{P^{(A_2{-}\alpha_1)^+}}
,\IEEEyessubnumber\label{eq:y1_423}\\
\!\!\!\!\!\!\mathbf{y}_2&{=}&\underbrace{\mathbf{H}_2^H\mathbf{c}}_{P}{+}
\underbrace{\mathbf{H}_2^H\mathbf{v}_1u_1}_{P^{A_1{-}\alpha_2}}{+}
\underbrace{\mathbf{H}_2^H\mathbf{V}_2^{(1)}\mathbf{u}_2^{(1)}}_{P^{A_2}}{+}
\underbrace{\mathbf{H}_2^H\mathbf{v}_2^{(2)}u_2^{(2)}}_{P^{(A_2{-}\alpha_1)^+}}.\IEEEyessubnumber\label{eq:y2_423}
\end{IEEEeqnarray}

As we can see from the received signal, if $A_2{\leq}\alpha_1$, the undesired private symbols are drowned into the noise. If $A_2{>}\alpha_1$, the power allocation policy ensures that all the three private symbols intended for Rx2 are received by Rx1 with the same power level. Considering that each receiver decodes the common message and the desired private symbols successively, the following DoF tuple is achievable
\begin{IEEEeqnarray}{rcl}\label{eq:dof423}
\!\!\!\!\!\!\text{\rm At Rx1:}\,\, d_c&{\leq}&d_c^{(1)}{\triangleq}2{-}\max\{A_1{,}A_2{-}\alpha_1\}{-}(A_2{-}\alpha_1)^+\!\!,\IEEEyesnumber\IEEEyessubnumber\label{eq:dc1_423}\\
\!\!\!\!\!\!d_{p1}&{=}&(A_1{-}(A_2{-}\alpha_1)^+)^+,\IEEEyessubnumber\label{eq:dp1_423}\\
\!\!\!\!\!\!\text{\rm At Rx2:}\,\, d_c&{\leq}&d_c^{(2)}{\triangleq}3{-}2A_2{-}(A_2{-}\alpha_1)^+,\IEEEyessubnumber\label{eq:dc2_423}\\
\!\!\!\!\!\!d_{p2}&{=}&2A_2{+}(A_2{-}\alpha_1)^+.\IEEEyessubnumber\label{eq:dp2_423}
\end{IEEEeqnarray}

With the above achievable DoF tuple, we can see that when $\alpha_1{=}\alpha_2{=}0$, the sum DoF $3$ is achieved with $d_{p2}{=}3$, $d_c{=}0$ and $d_{p1}{=}0$. This result is consistent with the optimal sum DoF when there is no CSIT. Besides, when $\alpha_1{=}\alpha_2{=}1$, the sum DoF $4$ is achieved with $d_{p1}{=}2$, $d_{p2}{=}1$ and $d_c{=}1$. This result is consistent with the optimal sum DoF of the perfect CSIT case.

Next, we characterize the achievable DoF region of the $(4{,}2{,}3)$ MIMO BC by finding the maximum achievable sum DoF. We will firstly show the achievability of corner points $\mathcal{P}_{10}$ and $\mathcal{P}_{10^\prime}$ in the case $\Phi_{BC}{\leq}0$, and secondly show the achievability of corner points $\mathcal{P}_{12}$, $\mathcal{P}_{10^\prime}$ and $\mathcal{P}_{20}$ in the case $\Phi_{BC}{\geq}0$ by performing a Space-Time transmission.

\subsubsection{When $\alpha_1{\geq}\frac{1{+}\alpha_2}{2}$, i.e., $\Phi_{BC}{\leq}0$}
\begin{figure}[t]
\renewcommand{\captionfont}{\small}
\captionstyle{center}
\centering
\subfigure[$\alpha_1{\geq}\frac{1{+}\alpha_2}{2}$]{
                \centering
                \includegraphics[width=0.24\textwidth,height=3cm]{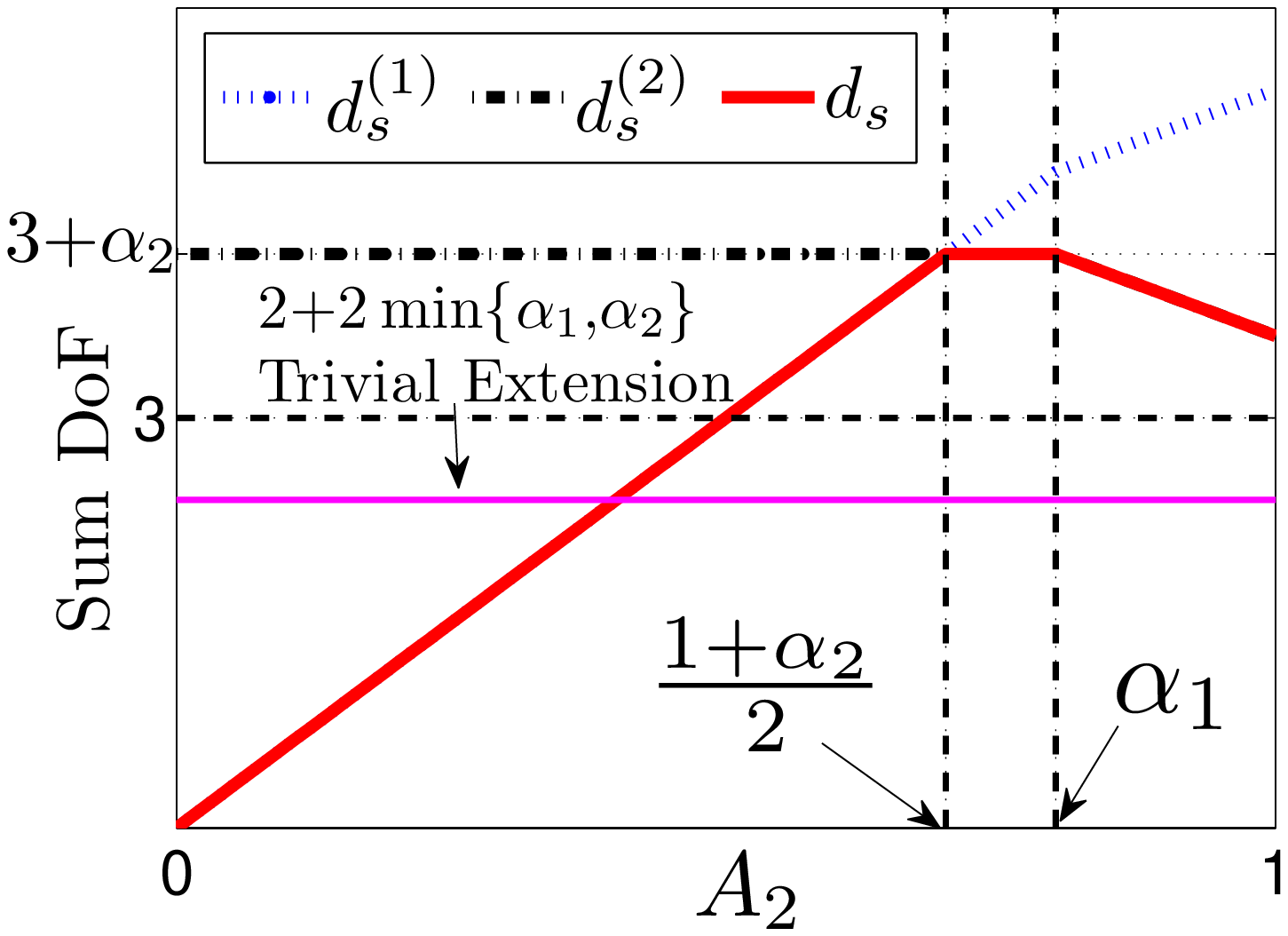}
                \label{fig:BC423caseI}
        }
\subfigure[$1{-}\alpha_2{\leq}\alpha_1{<}\frac{1{+}\alpha_2}{2}$]{
                \centering
                \includegraphics[width=0.24\textwidth,height=3cm]{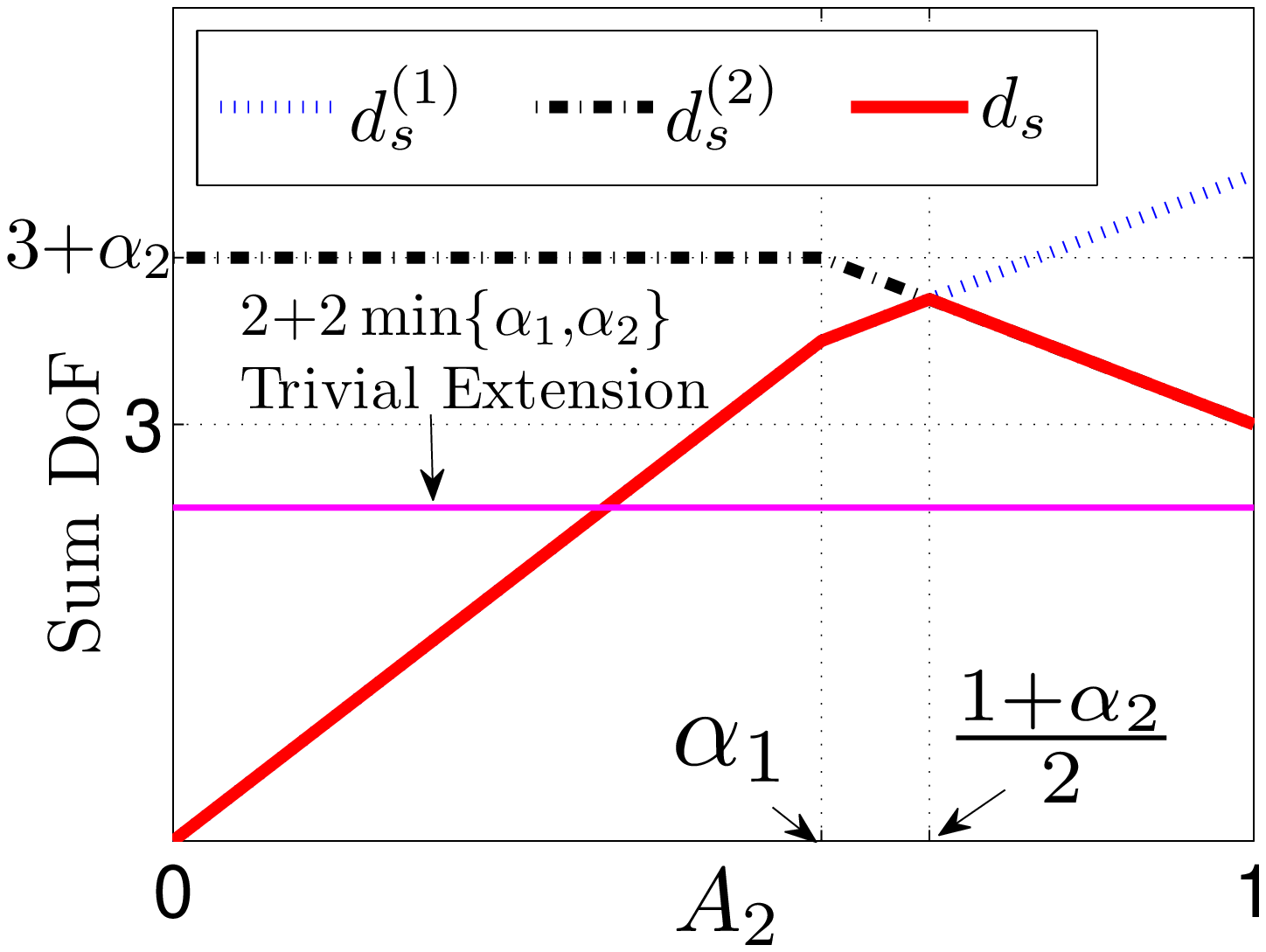}
                \label{fig:BC423caseIIa}
        }
        \\
        \subfigure[$\frac{1{-}\alpha_2}{2}{\leq}\alpha_1{\leq}\min\{1{-}\alpha_2{,}\frac{1{+}\alpha_2}{2}\}$]{
                \centering
                \includegraphics[width=0.24\textwidth,height=3cm]{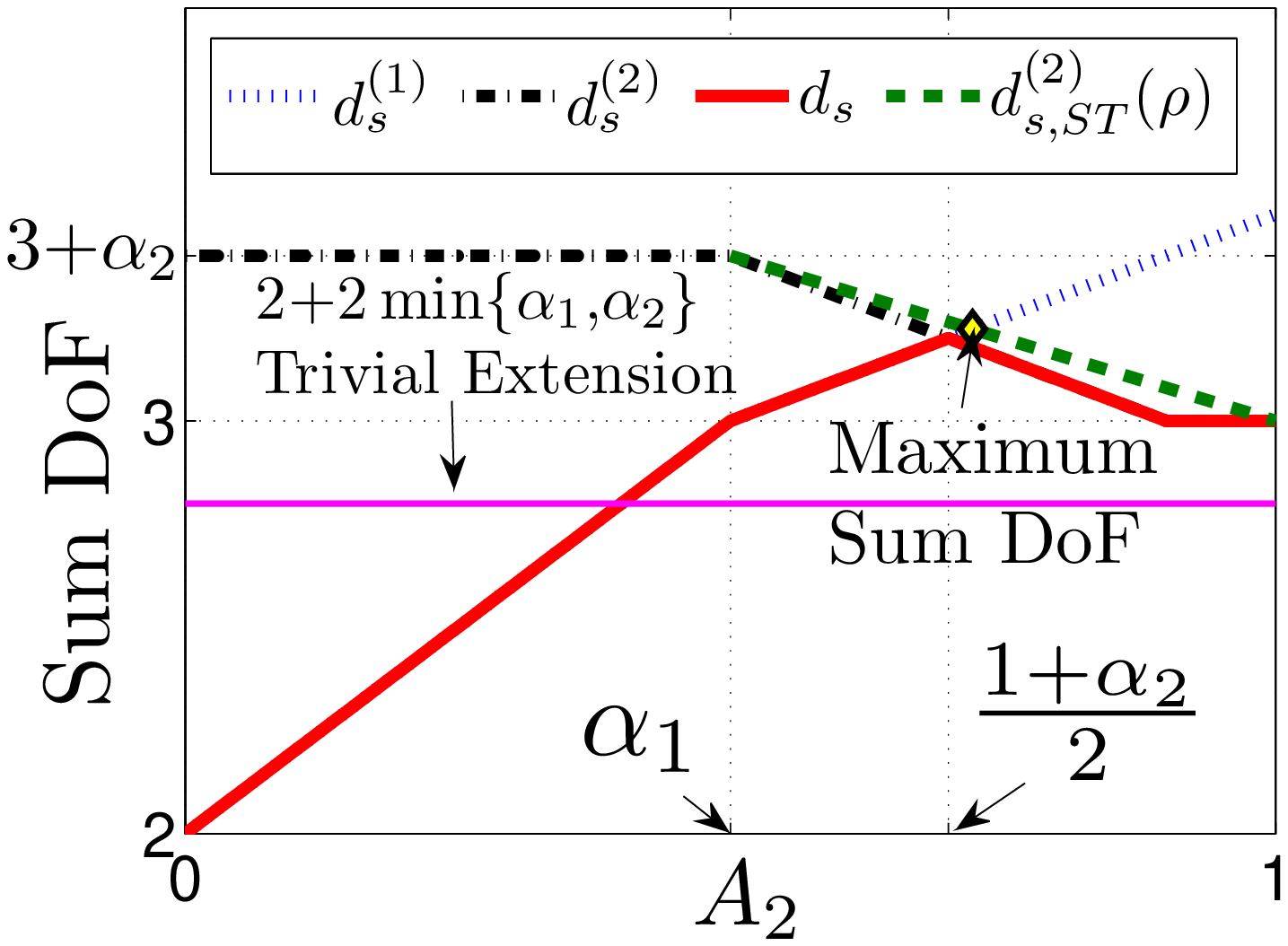}
                \label{fig:BC423caseIIb}
        }
        \subfigure[$\alpha_1{\leq}\frac{1{-}\alpha_2}{2}$]{
                \centering
                \includegraphics[width=0.24\textwidth,height=3cm]{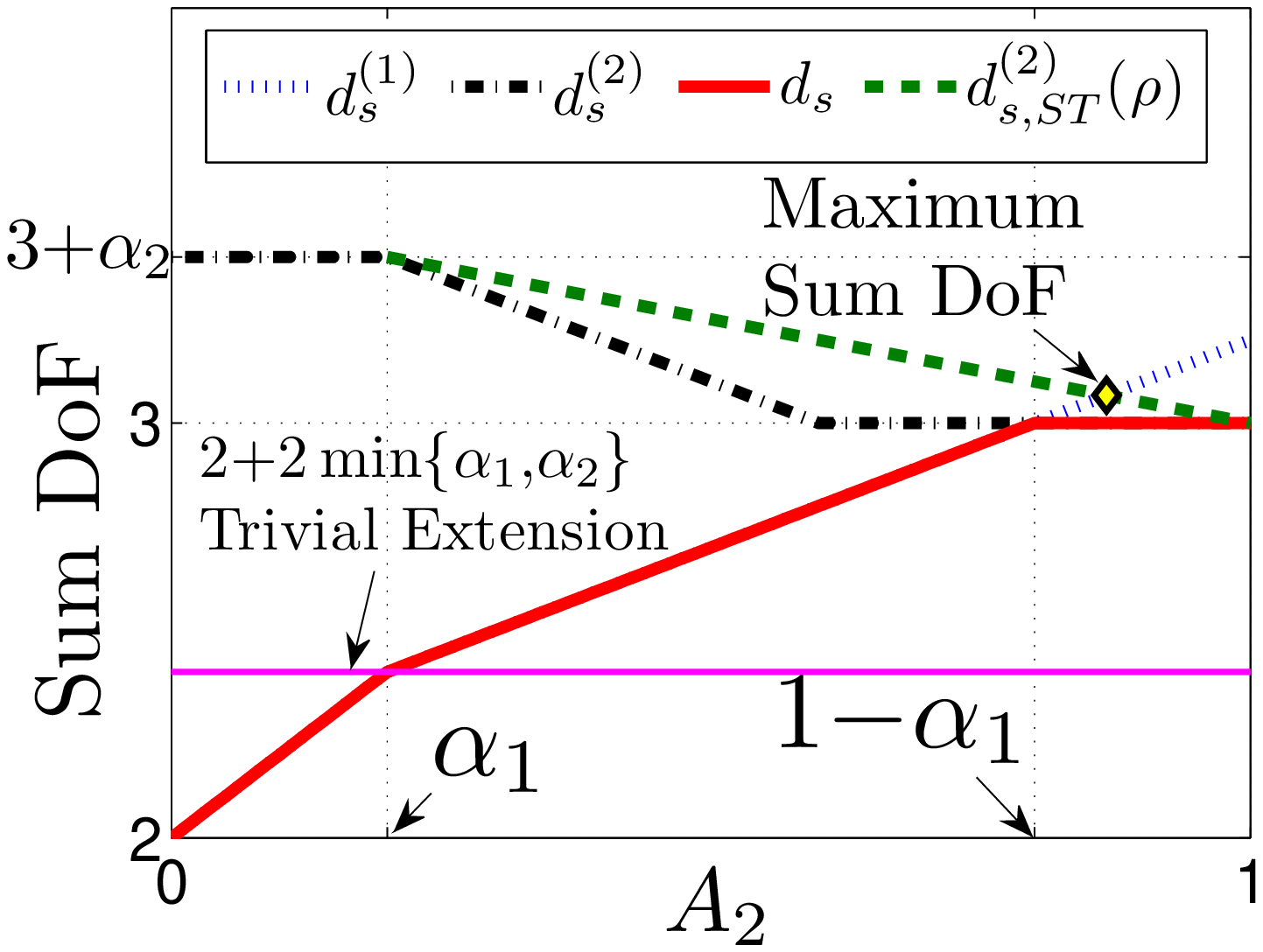}
                \label{fig:BC423caseIIc}
        }
\caption{Sum DoF of a $(4{,}2{,}3)$ MIMO BC}\label{fig:ds423}
\end{figure}
Let us define the achievable sum DoF as a function of the power levels, i.e., $d_s(A_1{,}A_2){\triangleq}\min\{d_s^{(1)}(A_2){,}d_s^{(2)}(A_1{,}A_2)\}$, where
\begin{IEEEeqnarray}{rcl}
d_s^{(1)}(A_2)&{=}&2{+}2A_2{-}(A_2{-}\alpha_1)^+,\IEEEyesnumber\IEEEyessubnumber\label{eq:ds1_423}\\
d_s^{(2)}(A_1{,}A_2)&{=}&3{+}(A_1{-}(A_2{-}\alpha_1)^+)^+,\IEEEyessubnumber\label{eq:ds2_423}
\end{IEEEeqnarray}
are obtained by summing \eqref{eq:dc1_423}, \eqref{eq:dp1_423}, \eqref{eq:dp2_423} and \eqref{eq:dp1_423}, \eqref{eq:dc2_423}, \eqref{eq:dp2_423}, respectively. Then, it can be shown that the power levels $(A_1^*{,}A_2^*){\triangleq}{\arg\max}d_s(A_1{,}A_2)$ that maximize the sum DoF are given by
\begin{equation}
A_1^*{=}\alpha_2{,}\quad A_2^*{=}\max\left\{\frac{1{+}\alpha_2}{2}{,}1{-}\alpha_1\right\},\label{eq:A1A2_423}
\end{equation}
because $d_s(A_1{,}A_2)$ increases with $A_1$, while $A_2$ is chosen such that the common-message-decodabilities at the two users are equalized, i.e., $d_s^{(1)}{=}d_s^{(2)}$ (or $d_c^{(1)}{=}d_c^{(2)}$). Figure \ref{fig:ds423} illustrates the maximum sum DoF for different values of $\alpha_1$ and $\alpha_2$ (the highest point of the red solid curve).

Here, as we are considering $\alpha_1{\geq}\frac{1{+}\alpha_2}{2}$, i.e., $\Phi_{BC}{\leq}0$, it can be verified that the sum DoF is maximized with $A_2^*{=}\frac{1{+}\alpha_2}{2}$ (as shown in Figure \ref{fig:BC423caseI}), which is smaller than $\alpha_1$. Plugging $A_2^*{=}\frac{1{+}\alpha_2}{2}$ and $A_1^*{=}\alpha_2$ into \eqref{eq:dp1_423}, \eqref{eq:dp2_423} and \eqref{eq:dc1_423} yields $d_{p1}{=}\alpha_2$, $d_{p2}{=}1{+}\alpha_2$ and $d_{c}{=}2{-}\alpha_2$. If the common message only carries information intended for user $1$ (resp. user $2$), we obtain the corner point $\mathcal{P}_{10}{=}(2{,}1{+}\alpha_2)$ (resp. $\mathcal{P}_{10^\prime}{=}(\alpha_2{,}3)$). Note that in this case, $L_{2{,}BC}$ in \eqref{eq:BCwsum} is inactive and the DoF region is formed by corner points $\mathcal{P}_{10}$ and $\mathcal{P}_{10^\prime}$.

\subsubsection{When $\alpha_1{\leq}\frac{1{+}\alpha_2}{2}$, i.e., $\Phi_{BC}{\geq}0$}
In this case, as shown by the highest point of the red solid curves in Figure \ref{fig:BC423caseIIa}, \ref{fig:BC423caseIIb} and \ref{fig:BC423caseIIc}, the optimal $A_2^*$ in \eqref{eq:A1A2_423} is greater than or equal to $\alpha_1$. Notably, this fact contrasts the power allocation in the MISO case where choosing the power level $A_1{=}A_2{=}\min\{\alpha_1{,}\alpha_2\}$ suffices to achieve the maximal sum DoF. The reason responsible for this observation is that with $A_2^*{\geq}\alpha_1$, the transmitter exploits the larger spatial dimension at user $2$ by delivering $3$ private messages to user $2$, while the interference overheard by user $1$ spans only $2$ dimensions.

However, the sum DoF can be further improved by a Space-Time transmission when $\Phi_{BC}{\geq}0$, which leads to the corner point $\mathcal{P}_{10^\prime}$ and $\mathcal{P}_{12}$. The transmission lasts for $T$ time slots. Letting $A_{k{,}l}$ denote the power level chosen for user $k$ in slot $l$, we choose $(A_{1{,}l}{,}A_{2{,}l}){=}(\alpha_2{,}1)$ for $l{=}1{,}\cdots{,}{\rho}T$ and $(A_{1{,}l}{,}A_{2{,}l}){=}(\alpha_2{,}\alpha_1)$ for $l{=}{\rho}T{+}1{,}\cdots{,}T$, where $0{\leq}\rho{\leq}1$. Note that we consider that $T$ is a sufficiently large integer such that $\rho T$ is an integer as well. The decoding is performed focusing on the aggregate received signals, namely $\left[\mathbf{y}_{k}(1){,}\cdots{,}\mathbf{y}_{k}(T)\right]^T$. Then, by plugging these power levels into \eqref{eq:ds1_423} and \eqref{eq:ds2_423} and computing the average sum DoF over the total $T$ channel uses, we have $d_{s{,}ST}(\rho){\triangleq}\min\{d_{s{,}ST}^{(1)}(\rho){,}d_{s{,}ST}^{(2)}(\rho)\}$, where
\begin{IEEEeqnarray}{rcl}
d_{s{,}ST}^{(1)}(\rho)&{=}&\rho d_s^{(1)}(1){+}(1{-}\rho)d_s^{(1)}(\alpha_1),\IEEEyesnumber\IEEEyessubnumber\label{eq:ds1_423rho}\\
d_{s{,}ST}^{(2)}(\rho)&{=}&\rho d_s^{(2)}(\alpha_2{,}1){+}(1{-}\rho)d_s^{(2)}(\alpha_2{,}\alpha_1).\IEEEyessubnumber\label{eq:ds2_423rho}
\end{IEEEeqnarray}

In Figure \ref{fig:BC423caseIIb} and \ref{fig:BC423caseIIc}, $d_{s{,}ST}^{(2)}(\rho)$ is illustrated by the green dotted line with $d_{s{,}ST}^{(2)}(0){=}d_s^{(2)}(\alpha_2{,}\alpha_1)$ and $d_{s{,}ST}^{(2)}(1){=}d_s^{(2)}(\alpha_2{,}1)$. However, in Figure \ref{fig:BC423caseIIa}, $d_{s{,}ST}^{(2)}(\rho)$ coincides with $d_s^{(2)}(\alpha_2{,}A_2)$ because $d_s^{(2)}(\alpha_2{,}A_2)$ is linear within the range $A_2{\in}[\alpha_1{,}1]$. Besides, $d_{s{,}ST}^{(1)}(\rho)$ coincides with $d_s^{(1)}(A_2)$ in Figure \ref{fig:BC423caseIIa}, \ref{fig:BC423caseIIb} and \ref{fig:BC423caseIIc}. In all the three figures, the maximum sum DoF achieved with Space-Time transmission is obtained with $\rho^*$ such that $d_{s{,}ST}^{(1)}(\rho^*){=}d_{s{,}ST}^{(2)}(\rho^*)$ holds (see the diamond points). Compared to the sum DoF achieved without Space-Time transmission (i.e., the highest point on the red solid curve), we can read from Figure \ref{fig:BC423caseIIb} and \ref{fig:BC423caseIIc} that $d_{s{,}ST}(\rho^*){>}d_s(A_1^*{,}A_2^*)$. However, in Figure \ref{fig:BC423caseIIa}, we have $d_{s{,}ST}(\rho^*){=}d_s(A_1^*{,}A_2^*)$. Through some calculation, we present the choices of $\rho^*$ and the the sum DoF achieved with and without Space-Time transmission in Table \ref{tab:ST}.
\begin{table*}[t]
\renewcommand{\arraystretch}{1.3}
\vspace{.6em}
\centering
\begin{tabular}{|c|c|c|}
\hline
Conditions & Without Space-Time Transmission & With Space-Time Transmission\\
\hline
a) & $A_2^*{=}\frac{1{+}\alpha_2}{2}$, $d_s{=}3{+}\alpha_2$ & N/A\\
\hline
b) & $A_2^*{=}\frac{1{+}\alpha_2}{2}$, $d_s{=}\frac{5{+}\alpha_2{+}2\alpha_1}{2}$ & $\rho^*{=}\frac{1{-}2\alpha_1{+}\alpha_2}{2{-}2\alpha_1}$ $d_s{=}\frac{5{+}\alpha_2{+}2\alpha_1}{2}$\\
\hline
c) & $A_2^*{=}\frac{1{+}\alpha_2}{2}$, $d_s{=}\frac{5{+}\alpha_2{+}2\alpha_1}{2}$ & $\rho^*{=}\frac{1{-}2\alpha_1{+}\alpha_2}{1{-}\alpha_1{+}\alpha_2}$ $d_s{=}3{+}\frac{\alpha_1\alpha_2}{1{-}\alpha_1{+}\alpha_2}$\\
\hline
d) & $A_2^*{=}1{-}\alpha_1$, $d_s{=}3$ & $\rho^*{=}\frac{1{-}2\alpha_1{+}\alpha_2}{1{-}\alpha_1{+}\alpha_2}$ $d_s{=}3{+}\frac{\alpha_1\alpha_2}{1{-}\alpha_1{+}\alpha_2}$\\
\hline
\end{tabular}
\caption{Sum DoF achieved with different schemes in a $(4{,}2{,}3)$ MIMO BC, where conditions a), b), c) and d) are such that in Figure \ref{fig:BC423caseI}, \ref{fig:BC423caseIIa}, \ref{fig:BC423caseIIb} and \ref{fig:BC423caseIIc}, respectively.}\label{tab:ST}
\vspace{-0mm}
\end{table*}

With the power allocation across the $T$ slots, we can compute
\begin{IEEEeqnarray}{rcl}
  d_{p1}&{=} & \rho^*(\alpha_1{+}\alpha_2{-}1)^+{+}(1{-}\rho^*)\alpha_2{,}\IEEEyesnumber\IEEEyessubnumber\\
  d_{p2}&{=} & \rho^*(3{-}\alpha_1){+}(1{-}\rho^*)\cdot2\alpha_1{,}\IEEEyessubnumber\\
  d_c&{=} & \rho^*\alpha_1{+}(1{-}\rho^*)(3{-}\alpha_1),\IEEEyessubnumber
\end{IEEEeqnarray}
where $\rho^*$ is given in Table \ref{tab:ST}. Considering that the common message only carries information intended for Rx$1$ and Rx$2$, we obtain the corner points $\mathcal{P}_{12}$ and $\mathcal{P}_{10^\prime}$ in Figure \ref{fig:BCcase2}, respectively.

To be complete, the corner point $\mathcal{P}_{20}{=}(2{,}2\alpha_1)$ is achievable by substituting $A_1{=}\alpha_2$ and $A_2{=}\alpha_1$ into \eqref{eq:dc1_423} through to \eqref{eq:dp2_423}, and assuming that the common message only carries information for Rx$1$.

\subsection{RS scheme for the asymmetric case: Unified Framework}\label{sec:BCuni}
In this part, we consider the asymmetric MIMO case with $M{\leq}N_1{+}N_2$ and $M{\geq}N_2{\geq}N_1$, as the achievability for other cases can be shown by switching off the redundant transmit/receive antennas. Motivated by the $(4{,}2{,}3)$ MIMO BC example in the last subsection, the transmission block is constructed as follows.
\begin{itemize}
  \item $M{-}N_2$ private symbols, denoted by $\mathbf{u}_1{\in}\mathbb{C}^{(M{-}N_2){\times}1}$, are sent to Rx1 with power exponent $A_1$ along a ZF-precoder $\mathbf{V}_1{=}\mathbf{H}_2^{\bot}{\in}\mathbb{C}^{M{\times}(M{-}N_2)}$ ;
  \item $M{-}N_1$ private symbols, denoted by $\mathbf{u}_2^{(1)}{\in}\mathbb{C}^{(M{-}N_1){\times}1}$, are sent to Rx2 with power exponent $A_2$ along a ZF-precoder $\mathbf{V}_2^{(1)}{=}\mathbf{H}_1^{\bot}{\in}\mathbb{C}^{M{\times}(M{-}N_1)}$;
  \item $N_1{+}N_2{-}M$ private symbols, denoted by $\mathbf{u}_2^{(2)}$, are sent to Rx2 along a precoder $\mathbf{V}_2^{(2)}{\in}\mathbb{C}^{4{\times}1}$ in the subspace spanned by $\hat{\mathbf{H}}_2$. Its power exponent is $(A_2{-}\alpha_1)^+$.
  \item A common message, denoted by $\mathbf{c}{\in}\mathbb{C}^{M{\times}1}$, is multicast using the remaining power.
\end{itemize}
The power exponents are defined as $0{\leq}A_1{\leq}\alpha_2$ and $0{\leq}A_2{\leq}1$. Mathematically, the transmitted and received signals write as
\begin{IEEEeqnarray}{rcl}
\!\!\!\!\!\!\!\!\mathbf{s}&{=}&\underbrace{\mathbf{c}}_{P}{+}
\underbrace{\mathbf{V}_1\mathbf{u}_1}_{P^{A_1}}{+}
\underbrace{\mathbf{V}_2^{(1)}\mathbf{u}_2^{(1)}}_{P^{A_2}}{+}
\underbrace{\mathbf{V}_2^{(2)}\mathbf{u}_2^{(2)}}_{P^{(A_2{-}\alpha_1)^+}}{,}\IEEEyesnumber\IEEEyessubnumber\label{eq:JMBBC}\\
\!\!\!\!\!\!\!\!\mathbf{y}_1&{=}&\underbrace{\mathbf{H}_1^H\mathbf{c}}_{P}{+}
\underbrace{\mathbf{H}_1^H\mathbf{V}_1\mathbf{u}_1}_{P^{A_1}}{+}
\underbrace{\mathbf{H}_1^H\left(\mathbf{V}_2^{(1)}\mathbf{u}_2^{(1)}{+}
\mathbf{V}_2^{(2)}\mathbf{u}_2^{(2)}\right)}_{P^{(A_2{-}\alpha_1)^+}}{,}\IEEEyessubnumber\label{eq:y1BC}\\
\!\!\!\!\!\!\!\!\mathbf{y}_2&{=}&\underbrace{\mathbf{H}_2^H\mathbf{c}}_{P}{+}
\underbrace{\mathbf{H}_2^H\mathbf{V}_1\mathbf{u}_1}_{P^{A_1{-}\alpha_2}}{+}
\underbrace{\mathbf{H}_2^H\mathbf{V}_2^{(1)}\mathbf{u}_2^{(1)}}_{P^{A_2}}{+}
\underbrace{\mathbf{H}_2^H\mathbf{V}_2^{(2)}\mathbf{u}_2^{(2)}}_{P^{(A_2{-}\alpha_1)^+}}{.}\IEEEyessubnumber\label{eq:y2BC}
\end{IEEEeqnarray}
For the MACs given in \eqref{eq:y1BC} and \eqref{eq:y2BC}, using the proof presented in Appendix A, the common message and private messages are successfully decoded if the DoF tuple lies in
\begin{IEEEeqnarray}{rcl}\label{eq:dofnew}
\!\!\!\!\text{\rm At Rx1:}\,\, d_{p1}&{=}&(M{-}N_2)(A_1{-}(A_2{-}\alpha_1)^+)^+,\IEEEyesnumber\IEEEyessubnumber\label{eq:dp1BC}\\
\!\!\!\!d_c&{\leq}&d_c^{(1)}{\triangleq}N_1{-}(M{-}N_2)\max\{A_1{,}A_2{-}\alpha_1\}{-}\nonumber\\
\!\!\!\!&&(N_1{+}N_2{-}M)(A_2{-}\alpha_1)^+,\IEEEyessubnumber\label{eq:dc1BC}\\
\!\!\!\!\text{\rm At Rx2:}\,\, d_{p2}&{=}&(M{-}N_1)A_2{+}(N_1{+}N_2{-}M)(A_2{-}\alpha_1)^+.\IEEEyessubnumber\label{eq:dp2BC}\\
\!\!\!\!d_c&{\leq}&d_c^{(2)}{\triangleq}N_2{-}(M{-}N_1)A_2{-}\nonumber\\
\!\!\!\!&&(N_1{+}N_2{-}M)(A_2{-}\alpha_1)^+.\IEEEyessubnumber\label{eq:dc2BC}
\end{IEEEeqnarray}

Following the footsteps in the $(4{,}2{,}3)$ example, we find that the sum DoF without Space-Time transmission is maximized with the power exponents
\begin{equation}
A_2^*{=}\max\left\{\frac{N_2{-}N_1{+}(M{-}N_2)\alpha_2}{M{-}N_1}{,}1{-}\frac{M{-}N_2}{N_2{-}N_1}\alpha_1\right\}.\label{eq:A2tmp}
\end{equation}

\subsubsection{When $\alpha_1{\geq}\frac{N_2{-}N_1{+}(M{-}N_2)\alpha_2}{M{-}N_1}$, i.e., $\Phi_{BC}{\leq}0$} In this case, choosing $A_2^*{=}\alpha_1^\prime{=}\frac{N_2{-}N_1{+}(M{-}N_2)\alpha_2}{M{-}N_1}{\leq}\alpha_1$ and $A_1^*{=}\alpha_2$ allows us to achieve the maximum sum DoF $N_2{+}(M{-}N_2)\alpha_2$. If $\mathbf{c}$ only carries information intended for Rx1 (resp. Rx2), the corner points $\mathcal{P}_{10}{=}(N_1{,}(M{-}N_1)\alpha_1^\prime)$ (resp. $\mathcal{P}_{10^\prime}{=}((M{-}N_2)\alpha_2{,}N_2)$) in Figure \ref{fig:BCcase1} is achieved.

\subsubsection{When $\alpha_1{\leq}\frac{N_2{-}N_1{+}(M{-}N_2)\alpha_2}{M{-}N_1}$, i.e., $\Phi_{BC}{\geq}0$} In this case, similar to the $(4{,}2{,}3)$ example, we further enhance the sum DoF by performing a Space-Time transmission, where the power exponents are $(A_1{,}A_2){=}(\alpha_2{,}\alpha_1)$ for a fraction $\rho$ of the total time, while the power exponents are $(A_1{,}A_2){=}(\alpha_2{,}1)$ for the rest of the time. The sum DoF is maximized by choosing the optimal $\rho{=}\rho_{BC}^*$ such that the common message decodabilities at the two receivers are balanced (focusing on the aggregate received signals). We present the value of $\rho_{BC}^*$ as
\begin{equation}
\!\!\!\rho_{BC}^*{=} \frac{(M{-}N_1)(1{-}\alpha_1){-}(M{-}N_2)(1{-}\alpha_2)}{(N_2{-}N_1)(1{-}\alpha_1){+}(M{-}N_2)(\alpha_2{-}(\alpha_2{+}\alpha_1{-}1)^+)},
\end{equation}
while the derivation is omitted as it follows the same footsteps as the $(4{,}2{,}3)$ example. Then, the achievable DoF tuple writes as
\begin{IEEEeqnarray}{rcl}
d_{p1{,}ST}(\rho_{BC}^*)&{=}&(M{-}N_2)\left[\rho_{BC}^*(\alpha_1{+}\alpha_2{-}1)^+{+}\right.\nonumber\\
&&\left.(1{-}\rho_{BC}^*)\alpha_2\right], \IEEEyesnumber\IEEEyessubnumber\label{eq:dp1TBC}\\
d_{p2{,}ST}(\rho_{BC}^*)&{=}&\rho_{BC}^*\left(N_2{-}(N_1{+}N_2{-}M)\alpha_1\right){+}\nonumber\\
&&(1{-}\rho_{BC}^*)(M{-}N_1)\alpha_1. \IEEEyessubnumber\label{eq:dp2TBC}\\
d_{c{,}ST}^{(2)}(\rho_{BC}^*)&{=}&\rho_{BC}^*(N_2{+}N_1{-}M)\alpha_1{+}\nonumber\\
&&(1{-}\rho_{BC}^*)\left(N_2{-}(M{-}N_1)\alpha_1\right), \IEEEyessubnumber\label{eq:dc2TBC}
\end{IEEEeqnarray}

If $\mathbf{c}$ only carries information intended for Rx1 and Rx2, the corner point $\mathcal{P}_{12}$ and $\mathcal{P}_{10^\prime}$ in Figure \ref{fig:BCcase2} are obtained, respectively.

To be complete, it remains to achieve the corner point $\mathcal{P}_{20}$ in Figure \ref{fig:BCcase2}. Using the new RS scheme, taking $A_k{=}\alpha_j$ into \eqref{eq:dofnew} yields $d_c{=}N_1{-}(M{-}N_2)\alpha_2$ and $d_{pk}{=}(M{-}N_j)\alpha_j$ for $k{,}j{=}1{,}2{,}k{\neq}j$. Then, corner point $\mathcal{P}_{20}{=}(N_1{,}(M{-}N_1)\alpha_1)$ is immediate if $\mathbf{c}$ is intended for Rx1. Linking $\mathcal{P}_{20}$ and $\mathcal{P}_{12}$ yields $L_2$ in Proposition \ref{prop:BC}. 

\section{Achievability Proof: Interference Channel}\label{sec:IC}
In this section, we move on to discuss the achievabile DoF region in the interference channel. The discussion for Case I with $M_1{\geq}N_2$ and Case II with $M_1{\leq}N_2$ are presented in Section \ref{sec:icI} and \ref{sec:icII}, respectively. For each case, we propose the RS transmission block and perform the DoF calculation in two subcases, i.e., $M_2{\leq}N_2$ and $M_2{\geq}N_2$. Without loss of generality, we consider $M_k{\leq}N_1{+}N_2$ and $N_k{\leq}M_1{+}M_2$, $k{=}1{,}2$, throughout the section, as the achievability in all the other configurations can be shown similarly by switching off the redundant transmit/receive antennas.

\subsection{Case I: $M_1{\geq}N_2$}\label{sec:icI}
In this part, for convenience, we employ the notation $N_2^\prime{\triangleq}\min\{M_2{,}N_2\}$. Since $M_2{\geq}N_1$ and $M_1{\geq}N_2$, we point out that this antenna configuration yields a scenario similar to BC based on the following facts: 1) the desired signal of each receiver is completely mixed with the interference signal, and 2) both receivers are able to deliver ZF-precoded private messages in the null space of the cross-link. Accordingly, we build the RS scheme similar to that in the asymmetric MIMO BC but in a distributed manner. Specifically, the transmitted signals write as
\begin{IEEEeqnarray}{rcl}\label{eq:ICs}
\mathbf{s}_1&{=}&\underbrace{\mathbf{c}_1}_{P}{+}\underbrace{\mathbf{V}_1\mathbf{u}_1}_{P^{A_1}}{,} \IEEEyesnumber\IEEEyessubnumber\label{eq:s1caseI}\\
\mathbf{s}_2&{=}&\underbrace{\mathbf{c}_2}_{P}{+}\underbrace{\mathbf{V}_2^{(1)}\mathbf{u}_2^{(1)}}_{P^{A_2}}{+}
\underbrace{\mathbf{V}_2^{(2)}\mathbf{u}_2^{(2)}}_{P^{(A_2{-}\alpha_1)^+}}{,}\IEEEyessubnumber\label{eq:s2caseI}
\end{IEEEeqnarray}
where $\mathbf{V}_1{=}\hat{\mathbf{H}}_{21}^\bot$ and $\mathbf{V}_2^{(1)}{=}\hat{\mathbf{H}}_{12}^\bot$ are ZF-precoders, $\mathbf{u}_1{\in}\mathbb{C}^{(M_1{-}N_2){\times}1}$ and $\mathbf{u}_2^{(1)}{\in}\mathbb{C}^{(M_2{-}N_1){\times}1}$ are the ZF-precoded private symbols intended for Rx1 and Rx2, respectively, while $\mathbf{u}_2^{(2)}{\in}\mathbb{C}^{(N_1{+}N_2^\prime{-}M_2){\times}1}$ is precoded with the full rank matrix $\mathbf{V}_2^{(2)}{\in}\mathbb{C}^{M_2{\times}(N_1{+}N_2^\prime{-}M_2)}$ in the subspace of $\hat{\mathbf{H}}_{22}$. The power exponents are defined as $A_1{\in}[0{,}\alpha_2]$ and $A_2{\in}[0{,}1]$. Unlike the BC case where the common messages are generally denoted by $\mathbf{c}$, we introduce $\mathbf{c}_k$ to denote the common message carries information intended for Rx$k$, $k{=}1{,}2$, as $\mathbf{c}_1$ and $\mathbf{c}_2$ are transmitted from different transmitters. The resultant received signals are expressed as
\begin{IEEEeqnarray}{rcl}\label{eq:ICy}
\!\!\!\!\!\!\mathbf{y}_1&{=}&\underbrace{\mathbf{H}_{11}^H\mathbf{c}_1}_{P}{+}\underbrace{\mathbf{H}_{12}^H\mathbf{c}_2}_{P}{+}
\underbrace{\mathbf{H}_{11}^H\mathbf{V}_1\mathbf{u}_1}_{P^{A_1}}{+}
\underbrace{\boldsymbol\eta_1}_{P^{(A_2{-}\alpha_1)^+}}
,\IEEEyesnumber\IEEEyessubnumber\label{eq:y1caseI}\\
\!\!\!\!\!\!\mathbf{y}_2&{=}&\underbrace{\mathbf{H}_{21}^H\mathbf{c}_1}_{P}\!\!{+}\underbrace{\mathbf{H}_{22}^H\mathbf{c}_2}_{P}{+}\!\!\!\!\!\!
\underbrace{\boldsymbol\eta_2}_{P^{A_1{-}\alpha_2}}\!\!\!\!\!\!{+}
\underbrace{\mathbf{H}_{22}^H\mathbf{V}_2^{(1)}\mathbf{u}_2^{(1)}}_{P^{A_2}}\!\!\!{+}
\underbrace{\mathbf{H}_{22}^H\mathbf{V}_2^{(2)}\mathbf{u}_2^{(2)}}_{P^{(A_2{-}\alpha_1)^+}}.\IEEEyessubnumber\label{eq:y2caseI}
\end{IEEEeqnarray}
where $\boldsymbol\eta_1{\triangleq}\mathbf{H}_{12}^H\left(\mathbf{V}_2^{(1)}\mathbf{u}_2^{(1)}{+}\mathbf{V}_2^{(2)}\mathbf{u}_2^{(2)}\right)$ and $\boldsymbol\eta_2{\triangleq}\mathbf{H}_{21}^H\mathbf{V}_1\mathbf{u}_1$.

Following the derivations in Appendix A, the MACs in \eqref{eq:y1caseI} and \eqref{eq:y2caseI} yield the following achievable DoF tuple
\begin{IEEEeqnarray}{rcl}\label{eq:dofcaseI}
\text{\rm At Rx1:} \quad d_{c1}&{\leq}&N_1{-}(M_1{-}N_2)\max\{A_1{,}A_2{-}\alpha_1\}{-}\nonumber\\
&&(N_1{+}N_2{-}M_1)(A_2{-}\alpha_1)^+,\IEEEyesnumber\IEEEyessubnumber\label{eq:dc1y1caseI}\\
d_{c2}&{\leq}&\text{\rm r.h.s. of \eqref{eq:dc1y1caseI}},\IEEEyessubnumber\label{eq:dc2y1caseI}\\
d_{c1}{+}d_{c2}&{\leq}&\text{\rm r.h.s. of \eqref{eq:dc1y1caseI}},\IEEEyessubnumber\label{eq:dcsy1caseI}\\
d_{p1}&{=}&(M_1{-}N_2)(A_1{-}(A_2{-}\alpha_1)^+)^+.\IEEEyessubnumber\label{eq:dp1caseI}\\
\text{\rm At Rx2:}\quad d_{c1}&{\leq}&N_2{-}(M_2{-}N_1)A_2{-}\nonumber\\
&&(N_1{+}N_2^\prime{-}M_2)(A_2{-}\alpha_1)^+\!\!,\IEEEyessubnumber\label{eq:dc1y2caseI}\\
d_{c2}&{\leq}&N_2^\prime{-}(M_2{-}N_1)A_2{-}\nonumber\\
&&(N_1{+}N_2^\prime{-}M_2)(A_2{-}\alpha_1)^+\!\!,
\IEEEyessubnumber\label{eq:dc2y2caseI}\\
d_{c1}{+}d_{c2}&{\leq}&\text{\rm r.h.s. of \eqref{eq:dc1y2caseI}},\IEEEyessubnumber\label{eq:dcsy2caseI}\\
d_{p2}&{=}&(M_2{-}N_1)A_2{+}\nonumber\\
&&(N_1{+}N_2^\prime{-}M_2)(A_2{-}\alpha_1)^+.\IEEEyessubnumber\label{eq:dp2caseI}
\end{IEEEeqnarray}

Next, let us proceed to discuss the achievability of the corner points in Figure \ref{fig:M2leqN2} when $M_2{\leq}N_2$ and the corner points in Figure \ref{fig:I2b} and \ref{fig:I2a} when $M_2{>}N_2$, because some of the constraints in \eqref{eq:dofcaseI} become inactive in each particular case, which improves the tractability of the analysis.
\subsubsection{Case I.1: $M_1{\geq}N_2$ and $M_2{\leq}N_2$}
In this case, we have $N_2^\prime{=}\min\{M_2{,}N_2\}{=}M_2$. It can be shown that the r.h.s. of \eqref{eq:dc2y2caseI} is greater than or equal to the r.h.s. of \eqref{eq:dc2y1caseI} for any values of $A_1$ and $A_2$. Therefore, \eqref{eq:dc1y2caseI}, \eqref{eq:dc2y2caseI} and \eqref{eq:dcsy2caseI} become inactive. In this way, from \eqref{eq:dc1y1caseI}, \eqref{eq:dc2y1caseI} and \eqref{eq:dcsy1caseI}, we can see that if only $\mathbf{c}_1$ is transmitted (i.e.,  $d_{c2}{=}0$), we achieve
\begin{multline}\label{eq:pair1caseI1}
  (d_{c1}{+}d_{p1}{,}d_{p2}){=}\left(N_1(1{-}(A_2{-}\alpha_1)^+){,}\right. \\
  \left.M_2{-}N_1)A_2{+}N_1(A_2{-}\alpha_1)^+\right).
\end{multline}
If only $\mathbf{c}_2$ is transmitted (i.e.,  $d_{c1}{=}0$), we achieve
\begin{multline}\label{eq:pair2caseI1}
  (d_{p1}{,}d_{c2}{+}d_{p2}){=}\left((M_1{-}N_2)(A_1{-}(A_2{-}\alpha_1)^+)^+{,}\right. \\
  \left.N_1{+}(M_2{-}N_1)A_2{-}(M_1{-}N_2)(A_1{-}(A_2{-}\alpha_1)^+)^+\right).
\end{multline}

Clearly, the DoF pairs in \eqref{eq:pair1caseI1} and \eqref{eq:pair2caseI1} yield the sum DoF $N_1{+}(M_2{-}N_1)A_2$. Choosing $A_2{=}1$ yields the maximum sum DoF $d_1{+}d_2{=}M_2$. With the power levels $A_2{=}1$ and $A_1{\leq}1{-}\alpha_1$, the corner points $\mathcal{P}_{12}{=}(N_1\alpha_1{,}M_2{-}N_1\alpha_1)$ and $\mathcal{P}_{10^\prime}{=}(0{,}M_2)$ in Figure \ref{fig:M2leqN2} are obtained using \eqref{eq:pair1caseI1} and \eqref{eq:pair2caseI1}, respectively. Besides, substituting $A_2{=}\alpha_1$ into \eqref{eq:pair1caseI1} yields the corner point $\mathcal{P}_{20}{=}(N_1{,}(M_2{-}N_1)\alpha_2)$ illustrated in Figure \ref{fig:M2leqN2}. Linking $\mathcal{P}_{20}$ with $\mathcal{P}_{12}$ yields $L_{2{,}IC_1}$ in Proposition \ref{prop:IC1} (see Figure \ref{fig:M2leqN2}).

\subsubsection{Case I.2: $M_1{\geq}N_2$ and $M_2{\geq}N_2$}
In this case, we have $N_2^\prime{=}\min\{M_2{,}N_2\}{=}N_2$ and the r.h.s. of \eqref{eq:dc2y2caseI} becomes equal to the r.h.s. of \eqref{eq:dc1y2caseI} and the r.h.s. of \eqref{eq:dcsy2caseI}. Therefore, it is similar to the BC case (see \eqref{eq:dofnew}), namely that the DoF of the common messages, i.e., $d_{c1}$ and $d_{c2}$, are subject to the sum DoF constraints \eqref{eq:dcsy1caseI} and \eqref{eq:dcsy2caseI}. Then, we derive the achievable DoF region following the footsteps in the BC case.

\begin{enumerate}
  \item When $\Phi_{IC}{\leq}0$, we find that the maximum sum DoF without space-time transmission is achieved with $A_1{=}\alpha_2$ and $A_1{=}\frac{N_2{-}N_1{+}(M_1{-}N_2)\alpha_2}{M_2{-}N_1}{\leq}\alpha_1$. Plugging $A_1^*{=}\alpha_2$ and $A_2^*{=}\alpha_1^\prime$ into \eqref{eq:dofcaseI}, we can see if only $\mathbf{c}_1$ is transmitted (i.e., setting $d_{c2}{=}0$), $\mathcal{P}_{10}{=}(N_1{,}(M_2{-}N_1)\alpha_1^\prime)$ in Figure \ref{fig:I2b} is achievable; if only $\mathbf{c}_2$ is transmitted (i.e., setting $d_{c1}{=}0$), $\mathcal{P}_{10^\prime}{=}((M_1{-}N_2)\alpha_2{,}N_2)$ in Figure \ref{fig:I2b} is achievable.
  \item When $\Phi_{IC}{>}0$ and $\frac{M_1{-}M_2}{M_1{-}N_2}\alpha_1{\leq}1{-}\alpha_2$, we perform a Space-Time transmission, where the power exponents are $(A_1{,}A_2){=}(\alpha_2{,}\alpha_1)$ for a fraction $\rho$ of the total time, while the power exponents are $(A_1{,}A_2){=}(\alpha_2{,}1)$ for the rest of the time. The sum DoF is maximized by choosing the optimal $\rho{=}\rho_{IC}^*$ such that the common message decodabilities at the two receivers are balanced (focusing on the aggregate received signals). We present the value of $\rho_{IC}^*$ as
      {\small
      \begin{equation}
      \rho_{IC}^*{=}\frac{(M_2{-}N_1)(1{-}\alpha_1){-}(M_1{-}N_2)(1{-}\alpha_2){+}M_1{-}M_2}
      {(N_2{-}N_1)(1{-}\alpha_1){+}(M_1{-}N_2)(\alpha_2{-}(\alpha_2{+}\alpha_1{-}1)^+)}.\label{eq:TIC}
      \end{equation}}
      Then, the achievable DoF tuple can be obtained by
      \begin{IEEEeqnarray}{rcl}
      d_{p1{,}ST}&{=}&(M_1{-}N_2)\left[\rho_{IC}^*(\alpha_1{+}\alpha_2{-}1)^+{+}\right.\nonumber\\
      &&\left.(1{-}\rho_{IC}^*)\alpha_2\right],
      \IEEEyesnumber\IEEEyessubnumber\label{eq:dp1_T}\\        d_{p2{,}ST}&{=}&\rho_{IC}^*\left(N_2{-}(N_1{+}N_1{-}M_2)\alpha_1\right){+}\nonumber\\
      &&(1{-}\rho_{IC}^*)(M_2{-}N_1)\alpha_1,
      \IEEEyessubnumber\label{eq:dp2_T}\\
      d_{c1{,}ST}{+}d_{c2{,}ST}&{=}&\rho_{IC}^*(N_2{+}N_1{-}M_2)\alpha_1{+}\nonumber\\
      &&(1{-}\rho_{IC}^*)\left(N_2{-}(M_2{-}N_1)\alpha_1\right).\IEEEyessubnumber\label{eq:dcs2_T}
      \end{IEEEeqnarray}

      To be complete, the corner point $\mathcal{P}_{20}{=}(N_1{,}(M_2{-}N_1)\alpha_1)$ in Figure \ref{fig:I2a} is achieved by plugging $A_k{=}\alpha_j$ into \eqref{eq:dc1y1caseI}, \eqref{eq:dc1y2caseI}, \eqref{eq:dp1caseI} and \eqref{eq:dp2caseI}, and considering that only $\mathbf{c}_1$ is transmitted. Linking $\mathcal{P}_{20}$ and $\mathcal{P}_{12}$ yields $L_2$ in Proposition \ref{prop:IC1}.

  \item When $\Phi_{IC}{\geq}0$ and $\frac{M_1{-}M_2}{M_1{-}N_2}\alpha_1{>}1{-}\alpha_2$, Rx2 has a greater common-message-decodability than Rx1 with both of the power exponents $(A_1{,}A_2){=}(\alpha_2{,}\alpha_1)$ and $(A_1{,}A_2){=}(\alpha_2{,}1)$. This fact prevents the Space-Time transmission from benefiting the sum DoF performance. In this case, using \eqref{eq:dofcaseI}, we learn that the maximum sum DoF can be achieved choosing $A_1^*{=}1{-}\frac{M_1{-}M_2}{M_1{-}N_2}\alpha_1{<}\alpha_2$ and $A_2^*{=}1$. With that power allocation policy, $\alpha_{0{,}IC}$ (the third line in \eqref{eq:ICalpha0}) is immediate and the corner point $\mathcal{P}_{12}$ (resp. $\mathcal{P}_{10^\prime}$) is obtained if only $\mathbf{c}_1$ (resp. $\mathbf{c}_2$) is transmitted. To be complete, the achievability of the corner point $\mathcal{P}_{20}$ follows that in the case when $\Phi_{IC}{\geq}0$.
\end{enumerate}

\subsection{Case II: $M_1{\leq}N_2$}\label{sec:icII}
\begin{figure}[t]
\renewcommand{\captionfont}{\small}
\captionstyle{center}
\centering
\includegraphics[width=7cm,height=3cm]{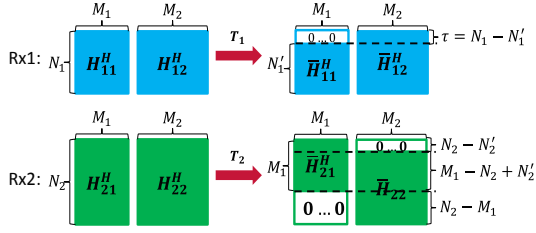}
\caption{Visualization of the linear transform of the channel matrices}\label{fig:LT}
\end{figure}
In this part, we firstly modify the RS scheme proposed in the above subsection based on the dimension of the column space of the channel matrices, and secondly show the achievability of the corner points on the DoF region stated in Proposition \ref{prop:IC2}. For notation convenience, we introduce $N_k^\prime{\triangleq}\min\{M_k{,}N_k\}$ and $N_k^{\prime\prime}{\triangleq}\max\{M_k{,}N_k\}{,}k{=}1{,}2$.

The antenna configuration $M_1{\leq}N_2$ makes a difference from the BC case that, in the received signals, the messages intended for Rx2 only partially overlap with the messages intended for Rx1. To be more specific, as inspired by \cite[Appendix A and Figure 2]{xinping_mimo}, let us identify this fact by performing a row transformation to the channel matrices. Since $\mathbf{H}_{kj}$ and $\mathbf{H}_{kk}$, $k{,}j{=}1{,}2{,}k{\neq}j$, are mutually independent, there exists an invertible row transformation $\mathbf{T}_k{\in}\mathbb{C}^{N_k{\times}N_k}$ that converts the $N_k{\times}(M_1{+}M_2)$ matrix $\mathbf{H}_k^H{\triangleq}\left[\mathbf{H}_{k1}^H{,}\mathbf{H}_{k2}^H\right]$ to
\begin{IEEEeqnarray}{rcl}
\!\!\!\!\!\!\!\!\!\!\mathbf{T}_1\mathbf{H}_{11}^H{=}\left[\!\!\begin{array}{c}\mathbf{0}_{(N_1{-}N_1^\prime){\times}M_1}\\ \bar{\mathbf{H}}_{11}^H\end{array}\!\!\right]\!\!,&\,&\mathbf{T}_1\mathbf{H}_{12}^H{=}\bar{\mathbf{H}}_{12}^H; \IEEEyesnumber\IEEEyessubnumber\label{eq:T1}\\
\!\!\!\!\!\!\!\!\!\!\mathbf{T}_2\mathbf{H}_{21}^H{=}\left[\!\!\begin{array}{c}\bar{\mathbf{H}}_{21}^H\\ \mathbf{0}_{(N_2{-}M_1){\times}M_1}\end{array}\!\!\right]\!\!,&\,&
\mathbf{T}_2\mathbf{H}_{22}^H{=}\left[\!\!\begin{array}{c}\mathbf{0}_{(N_2{-}N_2^\prime){\times}M_2}\\ \bar{\mathbf{H}}_{22}^H\end{array}\!\!\right]\!\!,\IEEEyessubnumber\label{eq:T2}
\end{IEEEeqnarray}
where $\bar{\mathbf{H}}_{11}{\in}\mathbb{C}^{N_1^\prime{\times}M_1}$, $\bar{\mathbf{H}}_{12}{\in}\mathbb{C}^{N_1{\times}M_2}$, $\bar{\mathbf{H}}_{21}{\in}\mathbb{C}^{M_1{\times}M_1}$ and $\bar{\mathbf{H}}_{22}{\in}\mathbb{C}^{N_2^\prime{\times}M_2}$ are full rank almost surely. Therefore, at Rx2, the dimension of the overlapping part between $\bar{\mathbf{H}}_{21}^H$ and $\bar{\mathbf{H}}_{22}^H$ is $M_1{+}N_2^\prime{-}N_2$, while the dimension of the subspace of $\bar{\mathbf{H}}_{22}^H$ that does not overlap with $\bar{\mathbf{H}}_{21}^H$ is $N_2{-}M_1$. Note that the row transformation $\mathbf{T}_2$ is designed such that the dimension of the overlapping part of $\bar{\mathbf{H}}_{22}^H$ and $\bar{\mathbf{H}}_{21}^H$ is minimized. At Rx1, the dimension of the overlapping part between $\bar{\mathbf{H}}_{11}^H$ and $\bar{\mathbf{H}}_{12}^H$ is $N_1^\prime$. Figure \ref{fig:LT} provides an illustrative view of this linear transformation.

Motivated by this, we modify the RS scheme proposed in Section \ref{sec:icI} by choosing different power levels for the private messages of Rx2 interfering or not interfering with the signal from Tx1. Specifically,
\begin{itemize}
\item $\tau{\triangleq}N_1{-}N_1^\prime$ private messages are delivered to Rx2 in the subspace of $\hat{\mathbf{H}}_{22}^H$ using full power without impacting the signal sent from Tx1, as they are received by Rx1 via the part of $\bar{\mathbf{H}}_{12}^H$ that does not overlap with $\bar{\mathbf{H}}_{11}^H$, and received by Rx2 via the part of $\bar{\mathbf{H}}_{22}^H$ that does not overlap with $\bar{\mathbf{H}}_{21}^H$;
\item $\min\{N_2^\prime{-}\tau{,}M_2{-}N_1\}$ private messages are transmitted to Rx2 via ZFBF. They are divided into two parts: 1) $\mu_1{\triangleq}\min\{N_2{-}M_1{-}\tau{,}M_2{-}N_1\}$ of them are delivered using power level $A_2^\prime$, and are received by Rx2 via the part of $\bar{\mathbf{H}}_{22}$ that does not overlap with $\bar{\mathbf{H}}_{21}$, and 2) the remaining $\mu_2{\triangleq}\min\{N_2^\prime{-}\tau{,}M_2{-}N_1\}{-}\mu_1$ ZF-precoded private messages are delivered using power level $A_2$;
\item The remaining $N_2^\prime{-}\tau{-}\mu_1{-}\mu_2$ private messages are delivered to Rx2 in the subspace of $\hat{\mathbf{H}}_{22}^H$. Similar to the ZF-precoded private messages, they are also divided into two parts: 1) $\delta_1{\triangleq}N_2{-}M_1{-}\tau{-}\mu_1$ of them are delivered using power level $A_2^\prime{-}\alpha_1$, and are received by Rx2 via the part of $\bar{\mathbf{H}}_{22}$ that does not overlap with $\bar{\mathbf{H}}_{21}$, and 2) the remaining $\delta_2{\triangleq}N_2^\prime{-}N_2{+}M_1{-}\mu_2$ private messages are delivered using power level $(A_2{-}\alpha_1)^+$;
\end{itemize}
The power levels are defined to be $A_2{\in}[0{,}A_2^\prime]$ and $A_2^\prime{\in}[\alpha_1{,}1]$. Moreover, there is no ZF-precoded private messages delivered to Rx1 as $M_1{\leq}N_2$. 

Notably, when $M_1{=}N_2$, the above private messages categorization becomes the transmission block designed for the case $M_1{\geq}N_2$. Specifically, when $M_1{=}N_2$, since $\mathbf{H}_{11}^H$ and $\mathbf{H}_{21}^H$ have full row rank, there is no all-zero rows in $\mathbf{T}_1\mathbf{H}_{11}^H$ and $\mathbf{T}_2\mathbf{H}_{21}^H$, which leads to $\tau{=}\mu_1{=}\delta_1{=}0$. Moreover, we have $\mu_2{=}M_2{-}N_1$ and $\delta_2{=}N_2^\prime{+}N_1{-}M_2$, corresponding to the number of messages in $\mathbf{u}_2^{(1)}$ and $\mathbf{u}_2^{(2)}$ in \eqref{eq:ICs}, respectively. In the received signals \eqref{eq:ICy}, all these $N_2^\prime$ private messages align with the signal from Tx1.

Consequently, let us write the transmitted signals as
\begin{IEEEeqnarray}{rcl}
\mathbf{s}_1&{=}&\underbrace{\mathbf{c}_1}_{P},\IEEEyesnumber\IEEEyessubnumber\label{eq:s1caseII}\\
\mathbf{s}_2&{=}&\underbrace{\mathbf{c}_2}_{P}{+}\underbrace{\mathbf{V}_2^{(1)}\mathbf{u}_2^{(1)}}_{P}{+}
\underbrace{\mathbf{V}_2^{(2)}\mathbf{u}_2^{(2)}}_{P^{A_2^\prime}}{+}\underbrace{\mathbf{V}_2^{(3)}\mathbf{u}_2^{(3)}}_{P^{A_2}}{+}\nonumber\\
&&\underbrace{\mathbf{V}_2^{(4)}\mathbf{u}_2^{(4)}}_{P^{A_2^\prime{-}\alpha_1}}{+} \underbrace{\mathbf{V}_2^{(5)}\mathbf{u}_2^{(5)}}_{P^{(A_2{-}\alpha_1)^+}},
\IEEEyessubnumber\label{eq:s2caseII}
\end{IEEEeqnarray}
where $\mathbf{u}_2^{(1)}{\in}\mathbb{C}^{\tau{\times}1}$, $\mathbf{u}_2^{(2)}{\in}\mathbb{C}^{\mu_1{\times}1}$, $\mathbf{u}_2^{(3)}{\in}\mathbb{C}^{\mu_2{\times}1}$, $\mathbf{u}_2^{(4)}{\in}\mathbb{C}^{\delta_1{\times}1}$ and $\mathbf{u}_2^{(5)}{\in}\mathbb{C}^{\delta_2{\times}1}$. The precoders, $\mathbf{V}_2^{(1)}{\in}\mathbb{C}^{M_2{\times}\tau}$, $\mathbf{V}_2^{(4)}{\in}\mathbb{C}^{M_2{\times}\delta_1}$ and $\mathbf{V}_2^{(5)}{\in}\mathbb{C}^{M_2{\times}\delta_2}$ are in the subspace of $\hat{\mathbf{H}}_{22}$, while $\mathbf{V}_2^{(2)}{\in}\mathbb{C}^{M_2{\times}\mu_1}$ and $\mathbf{V}_2^{(3)}{\in}\mathbb{C}^{M_2{\times}\mu_2}$ are ZF-precoders in the subspace of $\hat{\mathbf{H}}_{12}^\bot$. Note that all the precoders have full rank and linearly independent of each other. The received signals are expressed as
{\small\begin{IEEEeqnarray}{rcl}
\mathbf{y}_1&{=}&\underbrace{\mathbf{H}_{11}^H\mathbf{c}_1}_{P}{+}\underbrace{\mathbf{H}_{12}^H\mathbf{c}_2}_{P}{+}
\underbrace{\mathbf{H}_{12}^H\mathbf{V}_2^{(1)}\mathbf{u}_2^{(1)}}_{P}{+}
\underbrace{\boldsymbol\eta_1^\prime}_{P^{A_2^\prime{-}\alpha_1}}{+}\!\!\!\!
\underbrace{\boldsymbol\eta_1,}_{P^{(A_2{-}\alpha_1)^+}}{,}\IEEEyesnumber\IEEEyessubnumber\label{eq:y1caseII}\\
\mathbf{y}_2&{=}&\underbrace{\mathbf{H}_{21}^H\mathbf{c}_1}_{P}{+}\underbrace{\mathbf{H}_{22}^H\mathbf{c}_2}_{P}{+}
\underbrace{\mathbf{H}_{22}^H\mathbf{V}_2^{(1)}\mathbf{u}_2^{(1)}}_{P}{+}
\underbrace{\mathbf{H}_{22}^H\mathbf{V}_2^{(2)}\mathbf{u}_2^{(2)}}_{P^{A_2^\prime}}{+}\nonumber\\
&&\underbrace{\mathbf{H}_{22}^H\mathbf{V}_2^{(4)}\mathbf{u}_2^{(4)}}_{P^{A_2^\prime{-}\alpha_1}}{+}
\underbrace{\mathbf{H}_{22}^H\mathbf{V}_2^{(3)}\mathbf{u}_2^{(3)}}_{P^{A_2}}{+}
\underbrace{\mathbf{H}_{22}^H\mathbf{V}_2^{(5)}\mathbf{u}_2^{(5)}}_{P^{(A_2{-}\alpha_1)^+}},\IEEEyessubnumber\label{eq:y1caseII}
\end{IEEEeqnarray}}
where,
\begin{IEEEeqnarray}{rcl}
\boldsymbol\eta_1^\prime&{\triangleq}&\mathbf{H}_{12}^H\left(\mathbf{V}_2^{(2)}\mathbf{u}_2^{(2)}{+}
\mathbf{V}_2^{(4)}\mathbf{u}_2^{(4)}\right),\\
\boldsymbol\eta_1&{\triangleq}&\mathbf{H}_{12}^H\left(\mathbf{V}_2^{(3)}\mathbf{u}_2^{(3)}{+}
\mathbf{V}_2^{(5)}\mathbf{u}_2^{(5)}\right).
\end{IEEEeqnarray}

We can see that $\mathbf{u}_2^{(2)}$ and $\mathbf{u}_2^{(4)}$ are received by Rx1 with the power level $A_2^\prime{-}\alpha_1$. If $A_2{\geq}\alpha_1$, $\mathbf{u}_2^{(3)}$ and $\mathbf{u}_2^{(5)}$ are received by Rx1 with the power level $A_2{-}\alpha_1$, otherwise $\mathbf{u}_2^{(5)}$ is not transmitted and $\mathbf{u}_2^{(3)}$ is drowned by the noise due to ZFBF with imperfect CSIT. Then, following the general proof in Appendix A, the achievable DoF tuple lies in
\begin{IEEEeqnarray}{rcl}\label{eq:dofcaseII}
\text{\rm At Rx1:}\quad
d_{c1}&{\leq}&N_1^\prime{-}\xi(A_2^\prime{-}\alpha_1){-}
(N_1^\prime{-}\xi)(A_2{-}\alpha_1)^+,\IEEEyesnumber\IEEEyessubnumber\label{eq:dc1y1caseII}\\
d_{c2}&{\leq}&\text{\rm r.h.s. of \eqref{eq:dc1y1caseII}},\IEEEyessubnumber\label{eq:dc2y1caseII}\\
d_{c1}{+}d_{c2}&{\leq}&\text{\rm r.h.s. of \eqref{eq:dc1y1caseII}},\IEEEyessubnumber\label{eq:dcsy1caseII}\\
\text{\rm At Rx2:}\quad
d_{c1}&{\leq}&M_1{-}\mu_2A_2{-}\delta_2(A_2{-}\alpha_1)^+,\IEEEyessubnumber\label{eq:dc1y2caseII}\\
d_{c2}&{\leq}&N_2^\prime{-}\mu_2A_2{-}\delta_2(A_2{-}\alpha_1)^+{-}\nonumber\\
&&\mu_1A_2^\prime{-}\delta_1(A_2^\prime{-}\alpha_1){-}\tau,\IEEEyessubnumber\label{eq:dc2y2caseII}\\
d_{c1}{+}d_{c2}&{\leq}&N_2{-}\mu_2A_2{-}\delta_2(A_2{-}\alpha_1)^+{-}\nonumber\\
&&\mu_1A_2^\prime{-}\delta_1(A_2^\prime{-}\alpha_1){-}\tau,\IEEEyessubnumber\label{eq:dcsy2caseII}\\
d_{p2}&{=}&\mu_2A_2{+}\delta_2(A_2{-}\alpha_1)^+\!{+}\mu_1A_2^\prime{+}\nonumber\\
&&\delta_1(A_2^\prime{-}\alpha_1){+}\tau,\IEEEyessubnumber\label{eq:dp2caseII}
\end{IEEEeqnarray}
where $\xi{\triangleq}\min\{N_1^\prime{,}\mu_1{+}\delta_1\}$.

Notably, unlike the BC case and IC Case I where $M_1{\geq}N_2$, from \eqref{eq:dc1y2caseII} and \eqref{eq:dc2y2caseII}, we see that Rx2 has different common-message-decodabilities of $\mathbf{c}_1$ and $\mathbf{c}_2$. Hence, it is not suitable to perform analysis focusing on the sum DoF. Instead, in the following, using the set of constraints stated in \eqref{eq:dofcaseII}, we characterize the achievable DoF region stated in Proposition \ref{prop:IC2} by finding the maximum $d_2{\triangleq}d_{c2}{+}d_{p2}$ for a given $d_{c1}{=}\lambda$, where $\lambda{\in}[0{,}N_1^\prime]$. Specifically, the optimization problem can be formulated as
\begin{IEEEeqnarray}{rcl}\label{eq:optd2}
\!\!\!\!\!\!\!\!\max_{A_2{,}A_2^\prime{,}d_{c2}}&\,& d_{c2}{+}d_{p2}\IEEEyesnumber\IEEEyessubnumber\\
\!\!\!\!\!\!\!\!\text{s.t.}&\,&d_{c2}{\leq}N_1^\prime{-}\xi(A_2^\prime{-}\alpha_1){-}
(N_1^\prime{-}\xi)(A_2{-}\alpha_1)^+{-}\lambda,\IEEEyessubnumber\label{eq:dc2cons1}\\
\!\!\!\!\!\!\!\!&&d_{c2}{\leq}\min\{N_2^\prime{,}N_2{-}\lambda\}{-}\mu_2A_2{-}\delta_2(A_2{-}\alpha_1)^+{-}\nonumber\\
\!\!\!\!\!\!\!\!&&\mu_1A_2^\prime{-}\delta_1(A_2^\prime{-}\alpha_1){-}\tau,\IEEEyessubnumber\label{eq:dc2cons2}\\
\!\!\!\!\!\!\!\!&&\lambda{\leq}N_1^\prime{-}\xi(A_2^\prime{-}\alpha_1){-}(N_1^\prime{-}\xi)(A_2{-}\alpha_1)^+,\IEEEyessubnumber\label{eq:lambdacons1}\\
\!\!\!\!\!\!\!\!&&\lambda{\leq}M_1{-}\mu_2A_2{-}\delta_2(A_2{-}\alpha_1)^+,\IEEEyessubnumber\label{eq:lambdacons2}\\
\!\!\!\!\!\!\!\!&&0{\leq}A_2{\leq}A_2^\prime,\IEEEyessubnumber\\
\!\!\!\!\!\!\!\!&&\alpha_1{\leq}A_2^\prime{\leq}1,\IEEEyessubnumber
\end{IEEEeqnarray}
where $d_{p2}$ is given in \eqref{eq:dp2caseII}, while \eqref{eq:dc2cons2} is obtained due to \eqref{eq:dc2y2caseII} and \eqref{eq:dcsy2caseII}. To find the closed-form solution of this linear programme, we proceed the discussion by considering Case II.1, i.e., $M_2{\leq}N_2$, and Case II.2, i.e., $M_2{\geq}N_2$, because some of the constraints in \eqref{eq:optd2} become inactive in each particular case, which simplifies the derivation.

\subsubsection{Case II.1, $M_2{\leq}N_2$ and $M_1{\leq}N_2$}
In this case, using the fact that $A_2{\leq}A_2^\prime{\leq}1$, it can be verified that constraints \eqref{eq:dc2cons2} and \eqref{eq:lambdacons2} are redundant compared to \eqref{eq:dc2cons1} and \eqref{eq:lambdacons1}, respectively. Moreover, as the objective function is monotonically increasing with $d_{c2}$, we can see the optimal solution is taken when \eqref{eq:dc2cons1} is active. Hence, the optimization problem stated in \eqref{eq:optd2} becomes
\begin{IEEEeqnarray}{rcl}\label{eq:optd2_1}
\max_{A_2{,}A_2^\prime}&\quad& d_{2{,}(1)}(A_2{,}A_2^\prime{,}\lambda)\IEEEyesnumber\IEEEyessubnumber\\
\text{s.t.}&\quad&\lambda{\leq}N_1^\prime{-}\xi(A_2^\prime{-}\alpha_1){-}(N_1^\prime{-}\xi)(A_2{-}\alpha_1)^+,
\IEEEyessubnumber\label{eq:lambdacons1_1}\\
&&0{\leq}A_2{\leq}A_2^\prime,\IEEEyessubnumber\\
&&\alpha_1{\leq}A_2^\prime{\leq}1,\IEEEyessubnumber
\end{IEEEeqnarray}
where
\begin{multline}\label{eq:d2caseII1}
  d_{2{,}(1)}(A_2{,}A_2^\prime{,}\lambda){=}N_1^\prime{-}\lambda{+}(\delta_2{-}N_1^\prime{+}\xi)(A_2{-}\alpha_1)^+{+} \\
  \mu_2A_2{+}\mu_1A_2^\prime{+}(\delta_1{-}\xi)(A_2^\prime{-}\alpha_1){+}\tau,
\end{multline}
is obtained by summing \eqref{eq:dp2caseII} and \eqref{eq:dc2cons1}.

Since the objective function \eqref{eq:d2caseII1} is linearly increasing with $A_2$ and $A_2^\prime$, the optimal solution is obtained when (at least) two of the constraints \eqref{eq:lambdacons1_1}, $A_2{\leq}A_2^\prime$ and $A_2^\prime{\leq}1$ are active. Through some simple calculation, the closed-form solution, i.e., $(A_2^*{,}A_2^{\prime*})$, and the resultant maximum DoF of Rx2, i.e., $d_{2{,}(1)}(A_2^*{,}A_2^{\prime*}{,}\lambda)$ write as

\underline{\emph{For $\lambda{\in}\left[0{,}N_1^\prime\alpha_1\right]$,}}
\begin{IEEEeqnarray}{rcl}
(A_2^*{,}A_2^{\prime*})&{=}&(1{,}1),\IEEEyesnumber\IEEEyessubnumber\\
d_{2{,}(1)}(1{,}1{,}\lambda)&{=}&N_2^\prime{-}\lambda.\IEEEyessubnumber\label{eq:dp2_11}
\end{IEEEeqnarray}

\underline{\emph{For $\lambda{\in}\left[N_1^\prime\alpha_1{,}N_1^\prime\right]$,}}
\begin{IEEEeqnarray}{rcl}
\!\!\!\!(A_2^*{,}A_2^{\prime*})&{=}&\left(\frac{N_1^\prime{-}\lambda{+}N_1^\prime\alpha_1}{N_1^\prime}{,}
\frac{N_1^\prime{-}\lambda{+}N_1^\prime\alpha_1}{N_1^\prime}\right),\IEEEyesnumber\IEEEyessubnumber\\
\!\!\!\!d_{2{,}(1)}(A_2^*{,}A_2^{\prime*}{,}\lambda)&{=}&
M_2{+}(M_2{-}N_1)\alpha_1{-}\nonumber\\
&&\frac{M_2{-}N_1{+}N_1^\prime}{N_1^\prime}\lambda.\IEEEyessubnumber\label{eq:dp2_12}
\end{IEEEeqnarray}

It can be shown that the DoF pair $(\lambda{,}d_{2{,}(1)}(A_2^*{,}A_2^{\prime*}{,}\lambda))$ with $d_{2{,}(1)}(A_2^*{,}A_2^{\prime*}{,}\lambda)$ in \eqref{eq:dp2_11} and \eqref{eq:dp2_12} lie on $L_1$ and $L_2$ in Proposition \ref{prop:IC2}, respectively. When $\lambda{=}N_1^\prime\alpha_1$ and $\lambda{=}N_1^\prime$, we have the corner points $\mathcal{P}_{10}{=}(N_1^\prime{,}(M_2{-}N_1)\alpha_1)$ and $\mathcal{P}_{12}{=}(N_1^\prime\alpha_1{,}N_2^\prime{-}N_1^\prime\alpha_1)$ in Figure \ref{fig:M2leqN2}, respectively.

\subsubsection{Case II.2, $M_2{\geq}N_2$ and $M_1{\leq}N_2$}\label{sec:II2}
\begin{table*}[t]
\renewcommand{\captionfont}{\small}
\footnotesize
\captionstyle{center} \centering
\begin{tabular}{c|l|c|c|c}
& Conditions & $d_{2{,}(2)}(A_2^*{,}A_2^{\prime*}{,}\lambda)$ & $N_2{\geq}M_1{+}N_1$ & $N_2{\leq}M_1{+}N_1$\\
\hline
$\alpha_1{\leq}\frac{M_1{-}N_1^\prime}{\mu_2}$ & A: not hold & & &\\
(Not applicable for $M_1{\leq}N_1$) & B: not hold & & &\\
& C: for $\lambda{\in}\left[\frac{N_1^\prime\mu_2\alpha_1}{M_1{-}N_1^\prime}{,}N_1^\prime\right]$ & eq.\eqref{eq:dp2_C} & $L_2$ & $L_2$\\
& D: for $\lambda{\in}\left[\frac{\mu_2N_1^\prime{+}\delta_2\xi}{M_1{-}N_1^\prime{+}\xi}\alpha_1{,}
\frac{N_1^\prime\mu_2\alpha_1}{M_1{-}N_1^\prime}\right]$ & eq.\eqref{eq:dp2_D} & $L_3$ & $L_4$\\
& E: for $\lambda{\in}\left[\delta_2\alpha_1{,}\frac{\mu_2N_1^\prime{+}\delta_2\xi}{M_1{-}N_1^\prime{+}\xi}\alpha_1\right]$ & eq.\eqref{eq:dp2_E} & $L_1$ & $L_1$\\
& F: for $\lambda{\in}\left[0{,}\delta_2\alpha_1\right]$ & eq.\eqref{eq:dp2_F} & $L_1$ & $L_1$\\
\hline
$\frac{M_1{-}N_1^\prime}{\mu_2}{\leq}\alpha_1{\leq}\frac{M_1{-}N_1^\prime{+}\xi}{\mu_2{+}\xi}$ & A: for $\lambda{\in}\left[M_1{-}\mu_2\alpha_1{,}N_1^\prime\right]$ & eq.\eqref{eq:dp2_A} & $L_2$ & $L_5$\\
& B: not hold & & &\\
& C: not hold & & &\\ & D: for $\lambda{\in}\left[\frac{\mu_2N_1^\prime{+}\delta_2\xi}{M_1{-}N_1^\prime{+}\xi}\alpha_1{,}
M_1{-}\mu_2\alpha_1\right]$ & eq.\eqref{eq:dp2_D} & $L_3$ & $L_4$\\
& E: for $\lambda{\in}\left[\delta_2\alpha_1{,}\frac{\mu_2N_1^\prime{+}\delta_2\xi}{M_1{-}N_1^\prime{+}\xi}\alpha_1\right]$ & eq.\eqref{eq:dp2_E} & $L_1$ & $L_1$\\
& F: for $\lambda{\in}\left[0{,}\delta_2\alpha_1\right]$ & eq.\eqref{eq:dp2_F} & $L_1$ & $L_1$\\
\hline
$\alpha_1{\geq}\frac{M_1{-}N_1^\prime{+}\xi}{\mu_2{+}\xi}$ & A: for $\lambda{\in}\left[N_1^\prime{-}\xi(1{-}\alpha_1){,}N_1^\prime\right]$ & eq.\eqref{eq:dp2_A} & $L_2$ & $L_5$\\
& B: for $\lambda{\in}\left[M_1{-}\mu_2\alpha_1{,}N_1^\prime{-}\xi(1{-}\alpha_1)\right]$ & eq.\eqref{eq:A2A2pstarB} & $L_1$ & $L_1$\\
& C: not hold & & & \\
& D: not hold & & & \\
& E: for $\lambda{\in}\left[\delta_2\alpha_1{,}M_1{-}\mu_2\alpha_1\right]$ & eq.\eqref{eq:dp2_E} & $L_1$ & $L_1$\\
& F: for for $\lambda{\in}\left[0{,}\delta_2\alpha_1\right]$ & eq.\eqref{eq:dp2_F} & $L_1$ & $L_1$\\
\end{tabular}
\caption{Achievability of the weighted-sum constraints in Case II.2.a and II.2.b}\label{tab:dofpairs}
\end{table*}
In this case, we perform the same derivation as in Case II.1 by taking $d_{c2}$ equal to the minimum of r.h.s. of \eqref{eq:dc2cons1} and \eqref{eq:dc2cons2}, because the objective function in \eqref{eq:optd2} is monotonically increasing with $d_{c2}$. Then, the optimization problem can be reformulated as
\begin{IEEEeqnarray}{rcl}\label{eq:optd2_2}
\max_{A_2{,}A_2^\prime}&\quad& d_{2{,}(2)}(A_2{,}A_2^\prime{,}\lambda)\IEEEyesnumber\IEEEyessubnumber\\
\text{s.t.}&\quad&
\lambda{\leq}N_1^\prime{-}\xi(A_2^\prime{-}\alpha_1){-}(N_1^\prime{-}\xi)(A_2{-}\alpha_1)^+,\IEEEyessubnumber\label{eq:lambdacons1_2}\\
&&\lambda{\leq}M_1{-}\mu_2A_2{-}\delta_2(A_2{-}\alpha_1)^+,\IEEEyessubnumber\label{eq:lambdacons2_2}\\
&&0{\leq}A_2{\leq}A_2^\prime,\IEEEyessubnumber\\
&&\alpha_1{\leq}A_2^\prime{\leq}1,\IEEEyessubnumber
\end{IEEEeqnarray}
where
\begin{multline}\label{eq:d2caseII2}
d_{2{,}(2)}(A_2{,}A_2^\prime{,}\lambda){=}
\min\left\{N_2{-}\lambda{,}N_1^\prime{-}\lambda{+}\mu_2A_2{+}\mu_1A_2^\prime{+}\right.\\
\left.(\delta_2{-}N_1^\prime{+}\xi)(A_2{-}\alpha_1)^+{+}(\delta_1{-}\xi)(A_2^\prime{-}\alpha_1){+}\tau\right\},
\end{multline}
is obtained by summing \eqref{eq:dp2caseII} and the minimum of \eqref{eq:dc2cons1} and \eqref{eq:dc2cons2}.

Following the derivations in Appendix B, the closed-form solution, i.e., $(A_2^*{,}A_2^{\prime*})$, and the resultant maximum DoF of Rx2, i.e., $d_{2{,}(2)}(A_2^*{,}A_2^{\prime*}{,}\lambda)$, write in the following six conditions:
\begin{enumerate}
  \item[A)] For $\lambda{\in}\left[\max\{M_1{-}\mu_2\alpha_1{,}N_1^\prime{-}\xi(1{-}\alpha_1)\}{,}N_1^\prime\right]$,
      \begin{IEEEeqnarray}{rcl}
      (A_2^*{,}A_2^{\prime*})&{=}&\left(\frac{M_1{-}\lambda}{\mu_2}{,}\alpha_1{+}\frac{N_1^\prime{-}\lambda}{\xi}\right), \IEEEyesnumber\IEEEyessubnumber\label{eq:A2A2pstarA}\\
      d_{2{,}(2)}(A_2^*{,}A_2^{\prime*}{,}\lambda)&{=}&\max\{N_2{,}M_1{+}N_1\}{+}\nonumber\\&&\mu_1\alpha_1{-}
      \left(1{+}\frac{\mu_1}{\xi}\right)\lambda;\IEEEyessubnumber
      \label{eq:dp2_A}
      \end{IEEEeqnarray}
  \item[B)] For $\lambda{\in}\left[M_1{-}\mu_2\alpha_1{,}N_1^\prime{-}\xi(1{-}\alpha_1)\right]$,
      \begin{IEEEeqnarray}{rcl}
      (A_2^*{,}A_2^{\prime*})&{=}&\left(\frac{M_1{-}\lambda}{\mu_2}{,}1\right), \IEEEyesnumber\IEEEyessubnumber\label{eq:A2A2pstarB}\\
      d_{2{,}(2)}(A_2^*{,}A_2^{\prime*}{,}\lambda)&{=}&N_2{-}\lambda;\IEEEyessubnumber\label{eq:dp2_B}
      \end{IEEEeqnarray}
  \item[C)] For $\lambda{\in}\left[\frac{N_1^\prime\mu_2\alpha_1}{M_1{-}N_1^\prime}{,}\min\{M_1{-}\mu_2\alpha_1{,}N_1^\prime\}\right]$,
      \begin{IEEEeqnarray}{rcl}
      \!\!\!\!\!\!\!\!(A_2^*{,}A_2^{\prime*})&{=}&\left(\frac{N_1^\prime{-}\lambda{+}N_1^\prime\alpha_1}{N_1^\prime}{,}\right.\nonumber\\
      &&\left.\frac{N_1^\prime{-}\lambda{+}N_1^\prime\alpha_1}{N_1^\prime}\right),\IEEEyesnumber\IEEEyessubnumber\label{eq:A2A2pstarC}\\
      \!\!\!\!\!\!\!\!d_{2{,}(2)}(A_2^*{,}A_2^{\prime*}{,}\lambda)&{=}&
      N_2{+}(\mu_1{+}\mu_2)\alpha_1{-}\nonumber\\
      &&\frac{N_2{-}N_1{+}N_1^\prime}{N_1^\prime}\lambda;\IEEEyessubnumber\label{eq:dp2_C}
      \end{IEEEeqnarray}
  \item[D)] For $\lambda{\in}\left[\frac{\mu_2N_1^\prime{+}\delta_2\xi}{M_1{-}N_1^\prime{+}\xi}\alpha_1{,} \min\{M_1{-}\mu_2\alpha_1{,}\frac{N_1^\prime\mu_2\alpha_1}{M_1{-}N_1^\prime}{,}N_1^\prime\}\right]$,
      \begin{IEEEeqnarray}{rcl}
      (A_2^*{,}A_2^{\prime*})&{=}&\left(\frac{M_1{-}\lambda{+}\delta_2\alpha_1}{M_1}{,}
      1{-}\frac{\left(M_1{-}N_1^\prime{+}\xi\right)}{M_1\xi}\lambda{+}\right.\nonumber\\
      &&\left.\frac{\left(\mu_2N_1^\prime{+}\delta_2\xi\right)}{M_1\xi}\alpha_1\right),\IEEEyesnumber\IEEEyessubnumber\label{eq:A2A2pstarD}\\
      d_{2{,}(2)}(A_2^*{,}A_2^{\prime*}{,}\lambda)&{=}&
      N_2{+}\left[1{+}\frac{\mu_2\left(N_1{-}\xi\right)}{M_1\xi}\right]\mu_1\alpha_1{-}\nonumber\\
      &&\left[1{+}\frac{M_1{-}N_1^\prime{+}\xi}{M_1\xi}\mu_1\right]\lambda;\IEEEyessubnumber\label{eq:dp2_D}
      \end{IEEEeqnarray}
  \item[E)] For $\lambda{\in}\left[\delta_2\alpha_1{,} \min\{\frac{\mu_2N_1^\prime{+}\delta_2\xi}{M_1{-}N_1^\prime{+}\xi}\alpha_1{,}M_1{-}\mu_2\alpha_1{,}N_1^\prime\}\right]$,
      \begin{IEEEeqnarray}{rcl}
      (A_2^*{,}A_2^{\prime*})&{=}&\left(\frac{M_1{-}\lambda{+}\delta_2\alpha_1}{M_1}{,}1\right), \IEEEyesnumber\IEEEyessubnumber\label{eq:A2A2pstarE}\\
      d_{2{,}(2)}(A_2^*{,}A_2^{\prime*}{,}\lambda)&{=}&N_2{-}\lambda;\IEEEyessubnumber\label{eq:dp2_E}
      \end{IEEEeqnarray}
  \item[F)] For $\lambda{\in}\left[0{,}\delta_2\alpha_1\right]$,
     \begin{IEEEeqnarray}{rcl}
     (A_2^*{,}A_2^{\prime*})&{=}&(1{,}1), \IEEEyesnumber\IEEEyessubnumber\label{eq:A2A2pstarF}\\
     d_{2{,}(2)}(A_2{,}A_2^\prime{,}\lambda)&{=}&N_2{-}\lambda.\IEEEyessubnumber\label{eq:dp2_F}
     \end{IEEEeqnarray}
\end{enumerate}

To be complete, Table \ref{tab:dofpairs} summarizes the validation of these six conditions for different values of $\alpha_1$, and also present the resultant weighted-sum constraints where the corresponding DoF pair $(\lambda{,}d_{p2}(A_2^*{,}A_2^{\prime*}{,}\lambda))$ lies on for Case II.2.a $N_2{\geq}M_1{+}N_1$ (i.e., $N_1^\prime{\leq}\mu_1$) and Case II.2.b $N_2{\leq}M_1{+}N_1$ (i.e., $N_1^\prime{\geq}\mu_1$). 

\section{Conclusion}\label{sec:conclusion}
In this paper, for the first time in the literature, we characterize achievable DoF regions of a general two receiver $(M{,}N_1{,}N_2)$ MIMO BC and $(M_1{,}M_2{,}N_1{,}N_2)$ MIMO IC with imperfect CSIT, whose error decays with the SNR. Without loss of generality, we consider $N_1{\leq}N_2$. We propose Rate-Splitting schemes suitable for the asymmetric antenna deployment. In BC, compared to the RS scheme designed for the symmetric case, the new ingredients of the scheme lie in 1) delivering additional non-ZF-precoded private symbols to Rx2, and 2) a Space-Time implementation. In IC, the scheme proposed for BC is modified according to a row transformation to the channel matrices. Such an operation allows us to identify the signal space where the transmitted signals interfere with each other and derive a proper power allocation policy to achieve a satisfactory DoF region.

We also derive an outer-bound for the DoF region of MIMO BC and IC using the aligned image set and the sliding window lemma. Using this outer-bound and the optimal DoF region when there is a mixture of the imperfect current CSIT and perfect delayed CSIT, we show that our proposed achievable DoF region is optimal under some antenna configurations and CSIT qualities. Remarkably, the maximal sum DoF is achievable in the case $\Phi_{BC}{\leq}0$ and $\Phi_{IC}{\leq}0$. This implies that Rx$1$ (i.e., the user with the smaller number of antennas) needs a greater CSIT quality than Rx$2$ (i.e., the user with the greater number of antennas). This fact contrasts with the symmetric case where the maximal sum DoF is achieved with equal CSIT qualities. On the other hand, if the Rx$1$ does not have a good enough CSIT quality, sending more streams of private messages to Rx$2$ (greater than the dimension of the null space) with the power higher than the CSIT quality is beneficial to the sum DoF performance. This contrasts with the symmetric case where unequal power allocation does not provide sum DoF gain.

Finally, it is noted that studying the DoF of MIMO networks with imperfect CSIT has attracted research attentions. While the paper is under review, another work was posted on arXiv on 3rd April, 2016 by Yuan and Jafar \cite{elevated_mux}. The authors investigated the same problem, but focused on two-receiver MIMO IC only and no outer-bound is provided. Compared to their scheme, so called elevated multiplexing, our RS approach has DoF gain in the case $N_1{\leq}N_2{\leq}\min\{M_1{,}M_2\}$ especially with the space-time transmission, while suffers from DoF loss in the case $N_1{<}M_1{\leq}N_2{<}M_2$. The advantage of our scheme lies in the unified framework, where the precoders and the number of private symbols and the power allocation policy are dynamically determined by the antenna configuration and CSIT qualities. Besides, by assuming the common message only carries information intended for Rx1 or Rx2, we obtain two DoF pairs, which is convenient to find a DoF region. One interesting work in the future would consist in studying how to harmonize both approaches to further tighten the achievability and outer bounds.

\section*{Appendices}
\subsection{Achievability DoF region of the related MAC}
We aim to show the achievable DoF tuples specified in \eqref{eq:dofnew}, \eqref{eq:dofcaseI} and \eqref{eq:dofcaseII} following the proof in \cite{xinping_mimo}. Without loss of generality, let us write the received signal at Rx$k$ as
\begin{IEEEeqnarray}{rcl}
\text{\rm BC:}\,\mathbf{y}_k&{=}&\mathbf{H}_k^H\mathbf{c}{+}\mathbf{H}_k^H\mathbf{x}_k{+}\boldsymbol\eta_{k{,}BC},\IEEEyesnumber\IEEEyessubnumber\\
\text{\rm IC:}\,\mathbf{y}_k&{=}&\mathbf{H}_{kk}^H\mathbf{c}_k{+}\mathbf{H}_{kj}^H\mathbf{c}_j{+}\mathbf{H}_{kk}^H\mathbf{x}_k{+}\boldsymbol\eta_{k{,}IC},
\IEEEyessubnumber\label{eq:ygeneral}
\end{IEEEeqnarray}
where $\mathbf{x}_k$ refers to the precoded private messages transmitted by Tx in BC and Tx$k$ in IC intended for Rx$k$, while $\boldsymbol\eta_{k{,}BC}{\triangleq}\mathbf{H}_k^H\mathbf{x}_j{+}\mathbf{n}_k{,}k{\neq}j$ and $\boldsymbol\eta_{k{,}IC}{\triangleq}\mathbf{H}_{kj}^H\mathbf{x}_j{+}\mathbf{n}_k{,}k{\neq}j$ represent the interference plus noise in BC and IC, respectively. In the following, let us only focus on \eqref{eq:ygeneral} as the derivation for the BC case follows similarly by simply taking $\mathbf{H}_{kk}{=}\mathbf{H}_{kj}$. For convenience, let us use $\boldsymbol\eta_k$ instead of $\boldsymbol\eta_{k{,}IC}$.

\begin{figure*}
\begin{IEEEeqnarray}{rcl} \label{eq:rate_conditions}
\!\!\!\!\!\!\!\!\!R_{ck}&{\leq}&I(\mathbf{c}_k;\mathbf{y}_k|\mathbf{c}_j{,}\mathbf{x}_k{,}\mathcal{H}_k){=}
h(\mathbf{y}_k|\mathbf{c}_j{,}\mathbf{x}_k{,}\mathcal{H}_k){-}h(\mathbf{y}_k|\mathbf{c}_k{,}\mathbf{c}_j{,}\mathbf{x}_k{,}\mathcal{H}_k),
\IEEEyesnumber\IEEEyessubnumber\label{eq:Rck}\\
\!\!\!\!\!\!\!\!\!R_{cj}&{\leq}&I(\mathbf{c}_j;\mathbf{y}_k|\mathbf{c}_k{,}\mathbf{x}_k{,}\mathcal{H}_k){=}
h(\mathbf{y}_k|\mathbf{c}_k{,}\mathbf{x}_k{,}\mathcal{H}_k){-}h(\mathbf{y}_k|\mathbf{c}_k{,}\mathbf{c}_j{,}\mathbf{x}_k{,}\mathcal{H}_k),
\IEEEyessubnumber\label{eq:Rcj}\\
\!\!\!\!\!\!\!\!\!R_{pk}&{\leq}&I(\mathbf{x}_k;\mathbf{y}_k|\mathbf{c}_k{,}\mathbf{c}_j{,}\mathcal{H}_k){=}
h(\mathbf{y}_k|\mathbf{c}_k{,}\mathbf{c}_j{,}\mathcal{H}_k){-}h(\mathbf{y}_k|\mathbf{c}_k{,}\mathbf{c}_j{,}\mathbf{x}_k{,}\mathcal{H}_k),
\IEEEyessubnumber\label{eq:Rpk}\\
\!\!\!\!\!\!\!\!\!R_{pk}{+}R_{ck}&{\leq}&I(\mathbf{x}_k{,}\mathbf{c}_k;\mathbf{y}_k|\mathbf{c}_j{,}\mathcal{H}_k){=}
h(\mathbf{y}_k|\mathbf{c}_j{,}\mathcal{H}_k){-}h(\mathbf{y}_k|\mathbf{c}_k{,}\mathbf{c}_j{,}\mathbf{x}_k{,}\mathcal{H}_k),
\IEEEyessubnumber\label{eq:Rckpk}\\
\!\!\!\!\!\!\!\!\!R_{pk}{+}R_{cj}&{\leq}&I(\mathbf{x}_k{,}\mathbf{c}_j;\mathbf{y}_k|\mathbf{c}_k{,}\mathcal{H}_k)\nonumber{=}
h(\mathbf{y}_k|\mathbf{c}_k{,}\mathcal{H}_k){-}h(\mathbf{y}_k|\mathbf{c}_k{,}\mathbf{c}_j{,}\mathbf{x}_k{,}\mathcal{H}_k),
\IEEEyessubnumber\label{eq:Rcjpk}\\
\!\!\!\!\!\!\!\!\!R_{ck}{+}R_{cj}{+}R_{pk}&{\leq}&I(\mathbf{c}_k{,}\mathbf{c}_j{,}\mathbf{x}_k;\mathbf{y}_k|\mathcal{H}_k)\nonumber{=}
h(\mathbf{y}_k|\mathcal{H}_k){-}h(\mathbf{y}_k|\mathbf{c}_k{,}\mathbf{c}_j{,}\mathbf{x}_k{,}\mathcal{H}_k),
\IEEEyessubnumber\label{eq:Rckcjpk}
\end{IEEEeqnarray}
\hrulefill
\end{figure*}
As pointed out in \cite{xinping_mimo}, the MIMO system in \eqref{eq:ygeneral} is a MAC as Rx$k$ aims to decode $\mathbf{c}_k$, $\mathbf{c}_j$ and $\mathbf{x}_k$. Then, according to \cite{network_info}, a rate tuple $(R_{c1}{,}R_{c2}{,}R_{pk})$ is achievable if \eqref{eq:rate_conditions} hold for any input distribution $p_{\mathbf{x}_k{,}\mathbf{c}_k{,}\mathbf{c}_j}{=}p_{\mathbf{x}_k}p_{\mathbf{c}_k}p_{\mathbf{c}_j}$, and $\mathcal{H}_k{\triangleq}\{\mathbf{H}_{kk}{,}\mathbf{H}_{kj}\}$ is the set of the channel state. By setting $R_{pk}$ equal to the r.h.s. of \eqref{eq:Rpk} and plugging it into \eqref{eq:Rckpk}, \eqref{eq:Rcjpk} and \eqref{eq:Rckcjpk}, we have
\begin{IEEEeqnarray}{rcl}
R_{ck}&{\leq}&h(\mathbf{y}_k|\mathbf{c}_j{,}\mathcal{H}_k){-}h(\mathbf{y}_k|\mathbf{c}_k{,}\mathbf{c}_j{,}\mathcal{H}_k),
\IEEEyesnumber\IEEEyessubnumber\label{eq:Rckactive}\\
R_{cj}&{\leq}&h(\mathbf{y}_k|\mathbf{c}_k{,}\mathcal{H}_k){-}h(\mathbf{y}_k|\mathbf{c}_k{,}\mathbf{c}_j{,}\mathcal{H}_k),
\IEEEyessubnumber\label{eq:Rcjactive}\\
R_{ck}{+}R_{cj}&{\leq}&h(\mathbf{y}_k|\mathcal{H}_k){-}h(\mathbf{y}_k|\mathbf{c}_k{,}\mathbf{c}_j{,}\mathcal{H}_k).
\IEEEyessubnumber\label{eq:Rckcjactive}
\end{IEEEeqnarray}
Note that the r.h.s. of \eqref{eq:Rckactive} can be interpreted as $I(\mathbf{c}_k;\mathbf{y}_k|\mathbf{c}_j{,}\mathcal{H}_k)$, which is equal to $I(\mathbf{c}_k;\mathbf{y}_k^\prime|\mathcal{H}_k)$ with $\mathbf{y}_k^\prime{\triangleq}\mathbf{H}_{kk}^H\mathbf{c}_k{+}\mathbf{H}_{kk}^H\mathbf{x}_k{+}\boldsymbol\eta_k$. Similarly, the r.h.s. of \eqref{eq:Rck} can be expressed as $I(\mathbf{c}_k;\mathbf{y}_k^{\prime\prime}|\mathcal{H}_k)$ with $\mathbf{y}_k^{\prime\prime}{\triangleq}\mathbf{H}_{kk}^H\mathbf{c}_k{+}\boldsymbol\eta_k$. Since $\mathbf{c}_k{\to}\mathbf{y}_k^{\prime\prime}{\to}\mathbf{y}_k^{\prime}$ forms a Markov Chain, we have $I(\mathbf{c}_k;\mathbf{y}_k^\prime|\mathcal{H}_k){\leq}I(\mathbf{c}_k;\mathbf{y}_k^{\prime\prime}|\mathcal{H}_k)$ due to the data processing inequality \cite{Inf_Theo}. Therefore, the inequalities in \eqref{eq:Rck} and \eqref{eq:Rcj} are inactive and the achievable rate of the common messages is specified by \eqref{eq:Rckactive}, \eqref{eq:Rcjactive} and \eqref{eq:Rckcjactive}.

Let the input be $\mathbf{c}_k\stackrel{d}{\sim}\mathcal{CN}(0{,}P\mathbf{I}_{M_k})$, $\mathbf{c}_j\stackrel{d}{\sim}\mathcal{CN}(0{,}P\mathbf{I}_{M_j})$ and $\mathbf{x}_k\stackrel{d}{\sim}\mathcal{CN}(0{,}\mathbf{B}_k)$. The achievable rate constraints \eqref{eq:Rckactive}, \eqref{eq:Rcjactive}, \eqref{eq:Rckcjactive} and \eqref{eq:Rpk} can be further expressed by
\begin{IEEEeqnarray}{rcl}\label{eq:Rckcjpkcompute}
R_{ck}&{\leq}&{\log}_2\det(\mathbf{Q}_{ck}{+}\mathbf{Q}_k{+}\mathbf{Q}_{\boldsymbol\eta_k}){-}\nonumber\\
&&{\log}_2\det(\mathbf{Q}_{k}{+}\mathbf{Q}_{\boldsymbol\eta_k}),\IEEEyesnumber\IEEEyessubnumber\label{eq:Rckcompute}\\
R_{cj}&{\leq}&{\log}_2\det(\mathbf{Q}_{cj}{+}\mathbf{Q}_k{+}\mathbf{Q}_{\boldsymbol\eta_k}){-}\nonumber\\
&&{\log}_2\det(\mathbf{Q}_{k}{+}\mathbf{Q}_{\boldsymbol\eta_k}),\IEEEyessubnumber\label{eq:Rcjcompute}\\
R_{ck}{+}R_{cj}&{\leq}&{\log}_2\det(\mathbf{Q}_{ck}{+}\mathbf{Q}_{cj}{+}\mathbf{Q}_k{+}\mathbf{Q}_{\boldsymbol\eta_k}){-}\nonumber\\
&&{\log}_2\det(\mathbf{Q}_{k}{+}\mathbf{Q}_{\boldsymbol\eta_k}),\IEEEyessubnumber\label{eq:Rckcjcompute}\\
R_{pk}&{=}&{\log}_2\det(\mathbf{Q}_k{+}\mathbf{Q}_{\boldsymbol\eta_k}){-}
{\log}_2\det(\mathbf{Q}_{\boldsymbol\eta_k}),\IEEEyessubnumber\label{eq:Rpkcompute}
\end{IEEEeqnarray}
where $\mathbf{Q}_{ck}{=}P\mathbf{H}_{kk}^H\mathbf{H}_{kk}$, $\mathbf{Q}_{cj}{=}P\mathbf{H}_{kj}^H\mathbf{H}_{kj}$, $\mathbf{Q}_k{=}\mathbf{H}_{kk}^H\mathbf{B}_k\mathbf{H}_{kk}$ and $\mathbf{Q}_{\boldsymbol\eta_k}{=}\mathbf{H}_{kj}^H\mathbf{B}_j\mathbf{H}_{kj}{+}\mathbf{I}_{N_k}$ denote the covariance matrices of $\mathbf{H}_{kk}^H\mathbf{c}_k$, $\mathbf{H}_{kj}^H\mathbf{c}_j$, $\mathbf{H}_{kk}^H\mathbf{x}_k$ and $\boldsymbol\eta_k$ in \eqref{eq:ygeneral}, respectively.

Next, let us identify the related covariance matrix in the MIMO IC when $M_1{\geq}N_2$ and MIMO IC when $M_1{\leq}N_2$. The derivation of the covariance in MIMO BC follows similarly to the MIMO IC when $M_1{\geq}N_2$ by setting $M_1{=}M_2$ and $N_2^\prime{=}N_2$.

\subsubsection{Case I: $M_1{\geq}N_2$ and $M_2{\geq}N_1$}
In this case, we have $\mathbf{B}_1{=}P^{A_1}\mathbf{S}_{\hat{\mathbf{H}}_{21}^\bot}$ and $\mathbf{B}_2{=}P^{A_2}\mathbf{S}_{\hat{\mathbf{H}}_{12}^\bot}{+}P^{(A_2{-}\alpha_1)^+}\mathbf{S}_{\hat{\mathbf{H}}_{22}}$, where $\mathbf{S}_{\hat{\mathbf{H}}_{21}^\bot}{\triangleq}\mathbf{V}_1\mathbf{V}_1^H$, $\mathbf{S}_{\hat{\mathbf{H}}_{12}^\bot}{\triangleq}\mathbf{V}_2^{(1)}\mathbf{V}_2^{(1)H}$ and $\mathbf{S}_{\hat{\mathbf{H}}_{22}}{\triangleq}\mathbf{V}_2^{(2)}\mathbf{V}_2^{(2)H}$.

At Rx1, as covariance matrices $\mathbf{Q}_{c1}$ and $\mathbf{Q}_{c2}$ are rank $N_1$ (since $M_2{\geq}N_1$ and $M_1{\geq}N_2{\geq}N_1$), it readily shows that ${\log}_2\det(\mathbf{Q}_{c1}{+}\mathbf{Q}_1{+}\mathbf{Q}_{\boldsymbol\eta_1})$, ${\log}_2\det(\mathbf{Q}_{c2}{+}\mathbf{Q}_1{+}\mathbf{Q}_{\boldsymbol\eta_1})$ and ${\log}_2\det(\mathbf{Q}_{c1}{+}\mathbf{Q}_{c2}{+}\mathbf{Q}_1{+}\mathbf{Q}_{\boldsymbol\eta_1})$ are equal to $N_1{\log}_2P{+}o({\log}_2P)$ as $\mathbf{Q}_{c1}$ and $\mathbf{Q}_{c2}$ are dominating compared to $\mathbf{Q}_1$ and $\mathbf{Q}_{\boldsymbol\eta_1}$. Moreover, let us write the eigenvalue decomposition of $\mathbf{Q}_1$ and $\mathbf{Q}_{\boldsymbol\eta_1}$ as $\mathbf{U}_1\mathbf{D}_1\mathbf{U}_1^H$ and $\mathbf{U}_{\boldsymbol\eta_1}\mathbf{D}_{\boldsymbol\eta_1}\mathbf{U}_{\boldsymbol\eta_1}^H$, respectively, where $\mathbf{D}_1{\sim}{\rm diag}(P^{A_1}\mathbf{I}_{M_1{-}N_2}{,}\mathbf{0}_{N_1{+}N_2{-}M_1})$ and $\mathbf{D}_{\boldsymbol\eta_1}{\sim}P^{(A_2{-}\alpha_1)^+}\mathbf{I}_{N_1}$. Then, it follows that
\begin{IEEEeqnarray}{rcl}
{\log}_2\det(\mathbf{Q}_1{+}\mathbf{Q}_{\boldsymbol\eta_1})&{=}&(M_1{-}N_2)\max\{A_2{-}\alpha_1{,}A_1\}{\log}_2P{+}\nonumber\\
&&(N_1{+}N_2{-}M_1)(A_2{-}\alpha_1^+){\log}_2P{+}\nonumber\\
&&o({\log}_2P),\nonumber\\
{\log}_2\det(\mathbf{Q}_{\boldsymbol\eta_1})&{=}&N_1(A_2{-}\alpha_1^+){\log}_2P{+}o({\log}_2P).\nonumber
\end{IEEEeqnarray}
Plugging the corresponding values into \eqref{eq:Rckcjpkcompute} leads to \eqref{eq:dc1y1caseI}, \eqref{eq:dc2y1caseI}, \eqref{eq:dp1caseI} and \eqref{eq:dcsy1caseI}.

At Rx2, let us write the eigenvalue decomposition of the $\mathbf{Q}_2$ and $\mathbf{Q}_{\boldsymbol\eta_2}$ as $\mathbf{U}_2\mathbf{D}_2\mathbf{U}_2^H$ and $\mathbf{U}_{\boldsymbol\eta_2}\mathbf{D}_{\boldsymbol\eta_2}\mathbf{U}_{\boldsymbol\eta_2}^H$, respectively, where $\mathbf{D}_2{\sim}{\rm diag}(P^{A_2}\mathbf{I}_{M_2{-}N_1}{,}P^{(A_2{-}\alpha_1)^+}\mathbf{I}_{N_1{+}N_2{-}M_1})$ and $\mathbf{D}_{\boldsymbol\eta_2}{\sim}P^{0}\mathbf{I}_{N_1}$ since $A_1{\leq}\alpha_2$. Besides, the covariance matrices $\mathbf{Q}_{c1}$ is rank $N_2$ and $\mathbf{Q}_{c2}$ is rank $N_2^\prime$. Then it can be readily shown that ${\log}_2\det(\mathbf{Q}_{c2}{+}\mathbf{Q}_2{+}\mathbf{Q}_{\boldsymbol\eta_2}){=}N_2^\prime{\log}_2P{+}o({\log}_2P)$, while ${\log}_2\det(\mathbf{Q}_{c1}{+}\mathbf{Q}_2{+}\mathbf{Q}_{\boldsymbol\eta_2}){=}{\log}_2\det(\mathbf{Q}_{c1}{+} \mathbf{Q}_{c2}{+}\\ \mathbf{Q}_2{+}\mathbf{Q}_{\boldsymbol\eta_2}){=}N_2{\log}_2P{+}o({\log}_2P)$, and
\begin{IEEEeqnarray}{rcl}
{\log}_2\det(\mathbf{Q}_2{+}\mathbf{Q}_{\boldsymbol\eta_2})&{=}&
(N_1{+}N_2{-}M_2)(A_2{-}\alpha_1^+){\log}_2P{+}\nonumber\\
&&(M_2{-}N_1)A_2{\log}_2P{+}o({\log}_2P),\nonumber\\
{\log}_2\det(\mathbf{Q}_{\boldsymbol\eta_1})&{=}&o({\log}_2P)\nonumber
\end{IEEEeqnarray}
Plugging the corresponding values into \eqref{eq:Rckcjpkcompute} leads to \eqref{eq:dc1y2caseI}, \eqref{eq:dc2y2caseI}, \eqref{eq:dp2caseI} and \eqref{eq:dcsy2caseI}.

\subsubsection{Case II: $M_1{\leq}N_2$ and $M_2{\geq}N_1$}
In this case, $\mathbf{B}_1{=}0$ as there is no private messages intended for Rx1, while
\begin{IEEEeqnarray}{rcl}
\mathbf{B}_2&{=}&P\mathbf{S}_{\hat{\mathbf{H}}_{22}}^{(1)}{+}P^{A_2^\prime}\mathbf{S}_{\hat{\mathbf{H}}_{12}^\bot}^{(2)}{+} P^{A_2}\mathbf{S}_{\hat{\mathbf{H}}_{12}^\bot}^{(3)}{+}\nonumber\\
&&P^{A_2^\prime{-}\alpha_1}\mathbf{S}_{\hat{\mathbf{H}}_{22}}^{(4)}{+} P^{(A_2{-}\alpha_1)^+}\mathbf{S}_{\hat{\mathbf{H}}_{22}}^{(5)}\nonumber
\end{IEEEeqnarray}
where $\mathbf{S}_{\hat{\mathbf{H}}_{22}}^{(m)}{\triangleq}\mathbf{V}_2^{(m)}\mathbf{V}_2^{(m)H}{,}m{=}1{,}4{,}5$ and $\mathbf{S}_{\hat{\mathbf{H}}_{12}^\bot}^{(m)}{\triangleq}\mathbf{V}_2^{(m)}\mathbf{V}_2^{(m)H}{,}m{=}2{,}3$. Let us focus on the received signal $\tilde{\mathbf{y}}_k{=}\mathbf{T}_k\mathbf{y}_k{,}k{=}1{,}2$, as linear transformation does not change the mutual information.

At Rx1, covariance matrices $\mathbf{Q}_{c1}$ and $\mathbf{Q}_{c2}$ rewrites as $P\mathbf{T}_1\mathbf{H}_{11}^H\mathbf{H}_{11}\mathbf{T}_1^H$ and $P\bar{\mathbf{H}}_{12}^H\bar{\mathbf{H}}_{12}$, where $\mathbf{T}_1\mathbf{H}_{11}^H$ and $\bar{\mathbf{H}}_{12}^H$ are given by \eqref{eq:T1}. Besides, the eigenvalue decomposition of the covariance matrix $\mathbf{Q}_{\boldsymbol\eta_1}{=}\bar{\mathbf{H}}_{12}^H\mathbf{B}_2\bar{\mathbf{H}}_{12}$ can be expressed as $\mathbf{U}_{\boldsymbol\eta_1}\mathbf{D}_{\boldsymbol\eta_1}\mathbf{U}_{\boldsymbol\eta_1}^H$, where
\begin{IEEEeqnarray}{rcl}
\mathbf{D}_{\boldsymbol\eta_1}&{\sim}&{\rm diag}(P\mathbf{I}_{\tau}{,}P^{A_2^\prime{-}\alpha_1}\mathbf{I}_{\min\{N_1^\prime{,}\mu_1{+}\delta_1\}}{,}\nonumber\\
&&P^{(A_2{-}\alpha_1)^+}\mathbf{I}_{N_1^\prime{-}\min\{N_1^\prime{,}\mu_1{+}\delta_1\}}).
\end{IEEEeqnarray}
This is due to the following reasons: 1) according to $\tilde{\mathbf{y}}_2$, $\tau$ private messages are received with power $P$, $\mu_1{+}\delta_1$ private messages are received with the power level $A_2^\prime{-}\alpha_1$ and $\mu_2{+}\delta_2$ private messages are received with the power level $(A_2{-}\alpha_1)^+$ because of ZFBF with imperfect CSIT, and 2) as $A_2{\leq}A_2^\prime$, the $\mu_2{+}\delta_2$ private messages with power level $(A_2{-}\alpha_1)^+$ are drowned by the other $\tau{+}\mu_1{+}\delta_1$ private messages. Note that if $\mu_1{+}\delta_1{\geq}N_1^\prime$, i.e., $\mu_1{+}\delta_1{+}\tau{\geq}N_1$, the $\mu_2{+}\delta_2$ private messages with power level $(A_2{-}\alpha_1)^+$ do not impact $\mathbf{D}_{\boldsymbol\eta_1}$; otherwise, there are $N_1^\prime{-}\mu_1{-}\delta_1$ eigenvalues with power level $(A_2{-}\alpha_1)^+$. In this way, it can be readily shown that ${\log}_2\det(\mathbf{Q}_{c1}{+}\mathbf{Q}_{c2}{+}\mathbf{Q}_1{+}\mathbf{Q}_{\boldsymbol\eta_1})$, ${\log}_2\det(\mathbf{Q}_{c2}{+}\mathbf{Q}_1{+}\mathbf{Q}_{\boldsymbol\eta_1})$ and ${\log}_2\det(\mathbf{Q}_{c1}{+}\mathbf{Q}_1{+}\mathbf{Q}_{\boldsymbol\eta_1})$ are equal to $N_1{\log}_2P{+}o({\log}_2P)$, while
\begin{IEEEeqnarray}{rcl}
{\log}_2\det(\mathbf{Q}_{\boldsymbol\eta_1})&{=}&{\log}_2\det(\mathbf{Q}_1{+}\mathbf{Q}_{\boldsymbol\eta_1})\nonumber\\
&{=}&\tau{\log}_2P{+}\min\{N_1^\prime{,}\mu_1{+}\delta_1\}(A_2^\prime{-}\alpha_1){\log}_2P{+}\nonumber\\&&
(N_1^\prime{-}\min\{N_1^\prime{,}\mu_1{+}\delta_1\})(A_2{-}\alpha_1)^+{\log}_2P{+}\nonumber\\
&&o({\log}_2P).\nonumber
\end{IEEEeqnarray}
Plugging the corresponding values into \eqref{eq:Rckcjpkcompute} leads to \eqref{eq:dc1y1caseII}, \eqref{eq:dc2y1caseII} and \eqref{eq:dcsy1caseII}.

At Rx2, after the linear transformation, it can be shown that covariance matrices $\mathbf{Q}_{c1}$ and $\mathbf{Q}_{c2}$ rewrites as $P\mathbf{T}_2\mathbf{H}_{21}^H\mathbf{H}_{21}\mathbf{T}_2^H$ and $P\mathbf{T}_2\mathbf{H}_{22}^H\mathbf{H}_{22}\mathbf{T}_2^H$, where $\mathbf{T}_2\mathbf{H}_{21}^H$ and $\mathbf{T}_2\mathbf{H}_{22}^H$ are given by \eqref{eq:T2}. Besides, the covariance matrix $\mathbf{Q}_2{=}\mathbf{T}_2\mathbf{H}_{21}^H\mathbf{B}_2\mathbf{H}_{22}\mathbf{T}_2^H$ can be expressed as $\mathbf{U}_2\mathbf{D}_2\mathbf{U}_2^H$, where
\begin{IEEEeqnarray}{rcl}
\mathbf{D}_2&{\sim}&{\rm diag}(\mathbf{0}_{N_2{-}N_2^\prime}{,}P^{(A_2{-}\alpha_1)^+}\mathbf{I}_{\delta_2}{,}P^{A_2}\mathbf{I}_{\mu_2}{,}\nonumber\\ &&P^{A_2^\prime{-}\alpha_1}\mathbf{I}_{\delta_1}{,}P^{A_2^\prime}\mathbf{I}_{\mu_1}{,}P\mathbf{I}_{\tau})\nonumber
\end{IEEEeqnarray}
As there is no private messages sent to Rx1, $\boldsymbol\eta_2$ only consists of noise so that $\mathbf{Q}_{\boldsymbol\eta_2}{=}P^0\mathbf{I}_{N_2}$. Accordingly, it is clear that ${\log}_2\det(\mathbf{Q}_{c2}{+}\mathbf{Q}_2{+}\mathbf{Q}_{\boldsymbol\eta_2}){=}N_2^\prime{\log}_2P{+}o({\log}_2P)$ and ${\log}_2\det(\mathbf{Q}_{c1}{+}\mathbf{Q}_{c2}{+}\mathbf{Q}_2{+}\mathbf{Q}_{\boldsymbol\eta_2}){=}N_2{\log}_2P{+}\\ o({\log}_2P)$. Moreover, we can see that the last $\tau{+}\mu_1{+}\delta_1{=}N_2{-}M_1$ columns in $\mathbf{U}_2$ do not overlap with the column space of $\bar{\mathbf{H}}_{21}^H$, then, it can be readily shown that
\begin{IEEEeqnarray}{rcl}
{\log}_2\det(\mathbf{Q}_{c1}{+}\mathbf{Q}_2{+}\mathbf{Q}_{\boldsymbol\eta_2})&{=}&
M_1{\log}_2P{+}\tau{\log}_2P{+}\mu_1A_2^\prime{\log}_2P{+}\nonumber\\
&&\delta_1(A_2^\prime{-}\alpha_1){\log}_2P{+}o({\log}_2P),\nonumber\\
{\log}_2\det(\mathbf{Q}_2{+}\mathbf{Q}_{\boldsymbol\eta_2})&{=}&
\tau{\log}_2P{+}\mu_1A_2^\prime{\log}_2P{+}\nonumber\\
&&\delta_1(A_2^\prime{-}\alpha_1){\log}_2P{+}\mu_2A_2{\log}_2P{+}\nonumber\\
&&\delta_2(A_2{-}\alpha_1)^+{\log}_2P{+}o({\log}_2P).\nonumber
\end{IEEEeqnarray}
Plugging the corresponding values into \eqref{eq:Rckcjpkcompute} leads to \eqref{eq:dc1y2caseII}, \eqref{eq:dc2y2caseII}, \eqref{eq:dp2caseII} and \eqref{eq:dcsy1caseII}.

\subsection{Solving the optimization problem in \eqref{eq:optd2_2}}

We firstly transform the problem into two sub-problems by considering $A_2{\leq}\alpha_1$ and $A_2{\geq}\alpha_1$, whose closed-form solutions are convenient to calculate. Then, we obtain the closed-form solution to \eqref{eq:optd2_2} by comparing these two closed-form solutions.

\subsubsection{$A_2{\leq}\alpha_1$} In this case, the optimization problem in \eqref{eq:optd2_2} rewrites as
\begin{IEEEeqnarray}{rcl}\label{eq:optd2_2_1}
\max_{A_2{,}A_2^\prime}&\quad& d_{2{,}(2)}(A_2{,}A_2^\prime{,}\lambda)\IEEEyesnumber\IEEEyessubnumber\\
\text{s.t.}&\quad&
\lambda{\leq}N_1^\prime{-}\xi(A_2^\prime{-}\alpha_1),\IEEEyessubnumber\label{eq:lambdacons1_2_1}\\
&&\lambda{\leq}M_1{-}\mu_2A_2,\IEEEyessubnumber\label{eq:lambdacons2_2_1}\\
&&0{\leq}A_2{\leq}\alpha_1,\IEEEyessubnumber\\
&&\alpha_1{\leq}A_2^\prime{\leq}1.\IEEEyessubnumber
\end{IEEEeqnarray}
As $d_{2{,}(2)}(A_2{,}A_2^\prime{,}\lambda)$ given in \eqref{eq:d2caseII2} is increasing with $A_2$ and $A_2^\prime$, it is straightforward to obtain the optimal solution to \eqref{eq:optd2_2_1}, which writes as
\begin{IEEEeqnarray}{rcl}
A_2{=}\min\left\{\alpha_1{,}\frac{M_1{-}\lambda}{\mu_2}\right\}&{,}\,&
A_2^\prime{=}\min\left\{1{,}\alpha_1{+}\frac{N_1^\prime{-}\lambda}{\xi}\right\}.\label{eq:solution1}
\end{IEEEeqnarray}

\subsubsection{$A_2{\geq}\alpha_1$} In this case, the optimization problem in \eqref{eq:optd2_2} rewrites as
\begin{IEEEeqnarray}{rcl}\label{eq:optd2_2_2}
\max_{A_2{,}A_2^\prime}&\quad& d_{2{,}(2)}(A_2{,}A_2^\prime{,}\lambda)\IEEEyesnumber\IEEEyessubnumber\\
\text{s.t.}&\quad&
\lambda{\leq}N_1^\prime{-}\xi A_2^\prime{+}N_1^\prime\alpha_1){-}(N_1^\prime{-}\xi)A_2,\IEEEyessubnumber\label{eq:lambdacons1_2_2}\\
&&\lambda{\leq}M_1{-}M_1A_2{+}\delta_2\alpha_1,\IEEEyessubnumber\label{eq:lambdacons2_2_2}\\
&&\alpha_1{\leq}A_2{\leq}A_2^\prime,\IEEEyessubnumber\label{eq:equal}\\
&&\alpha_1{\leq}A_2^\prime{\leq}1,\IEEEyessubnumber\label{eq:A2pleq1}
\end{IEEEeqnarray}
where we have used the fact that $\mu_2{+}\delta{=}M_1$ given the condition $M_2{\geq}N_2$. As $d_{2{,}(2)}(A_2{,}A_2^\prime{,}\lambda)$ given in \eqref{eq:d2caseII2} is increasing with $A_2$ and $A_2^\prime$, we learn that the optimal solution is obtained when (at least) two of the constraints \eqref{eq:lambdacons1_2_2}, $\eqref{eq:lambdacons2_2_2}$, $A_2{\leq}A_2^\prime$ and $A_2^\prime{\leq}1$ are active. Notably, from \eqref{eq:lambdacons2_2_2} and $A_2{\geq}\alpha_1$, we see that $\lambda$ should be smaller than or equal to $M_1{-}\mu_2\alpha_1$, otherwise, there is no solution to \eqref{eq:optd2_2_2}. Therefore, the discussion goes into following four cases.

\underline{If $A_2{=}A_2^\prime{=}1$:} In this case, $\lambda$ is such that $\lambda{\leq}\min\{\delta_2\alpha_1{,}N_1^\prime\alpha_1\}{=}\delta_2\alpha_1$ according to \eqref{eq:lambdacons1_2_2} and \eqref{eq:lambdacons2_2_2}.

\underline{If $A_2{<}A_2^\prime{=}1$:} In this case, plugging $A_2^\prime{=}1$ into \eqref{eq:lambdacons1_2_2} and \eqref{eq:lambdacons2_2_2} yields
\begin{IEEEeqnarray}{rcl}
A_2&{=}&\min\left\{1{-}\frac{\lambda{-}\delta_2\alpha_1}{M_1}{,}1{-}\frac{\lambda{-}N_1^\prime\alpha_1}{N_1^\prime{-}\xi}\right\}.\label{eq:case2}
\end{IEEEeqnarray}

\underline{If $A_2{=}A_2^\prime{<}1$:} In this case, plugging $A_2{=}A_2^\prime$ into \eqref{eq:lambdacons1_2_2} and \eqref{eq:lambdacons2_2_2} yields
\begin{IEEEeqnarray}{rcl}
A_2^\prime{=}A^\prime&{=}&\min\left\{1{-}\frac{\lambda{-}\delta_2\alpha_1}{M_1}
{,}1{-}\frac{\lambda{-}N_1^\prime\alpha_1}{N_1^\prime}\right\}.\label{eq:case3}
\end{IEEEeqnarray}

\underline{If $A_2{<}A_2^\prime{<}1$:} In this case, using \eqref{eq:lambdacons1_2_2} and \eqref{eq:lambdacons2_2_2}, we have
\begin{IEEEeqnarray}{rcl}
A_2&{=}&1{-}\frac{\lambda{-}\delta_2\alpha_1}{M_1},\IEEEyesnumber\IEEEyessubnumber\label{eq:case41}\\
A_2^\prime&{=}&1{-}\frac{\left(M_1{-}N_1^\prime{+}\xi\right)}{M_1\xi}\lambda{+}
\frac{\left(\mu_2N_1^\prime{+}\delta_2\xi\right)}{M_1\xi}\alpha_1,\IEEEyessubnumber\label{eq:case42}
\end{IEEEeqnarray}
while $\lambda$ is such that $\frac{\mu_2N_1^\prime{+}\delta_2\xi}{M_1{-}N_1^\prime{+}\xi}\alpha_1{\leq}\lambda{\leq}\frac{\mu_2N_1^\prime}{M_1{-}N_1^\prime}\alpha_1$ according constraints \eqref{eq:equal} and \eqref{eq:A2pleq1}, .

From these four cases, we conclude that the closed-form solution to the optimization problem \eqref{eq:optd2_2_2} is given by \eqref{eq:solution2} at the top of the next page.
\begin{figure*}
\begin{IEEEeqnarray}{rll} \label{eq:solution2}
&\text{\rm For }\lambda{\in}\left[0{,}\delta_2\alpha_1\right],& A_2{=}A_2^\prime{=}1,\IEEEyesnumber\IEEEyessubnumber\\
&\text{\rm For }\lambda{\in}\left[\delta_2\alpha_1{,} \min\left\{\frac{\mu_2N_1^\prime{+}\delta_2\xi}{M_1{-}N_1^\prime{+}\xi}\alpha_1{,}M_1{-}\mu_2\alpha_1\right\}\right],
& A_2{=}\text{\rm Eq.}\eqref{eq:case2}{=}1{-}\frac{\lambda{-}\delta_2\alpha_1}{M_1},A_2^\prime{=}1,\IEEEyessubnumber\\
&\text{\rm For }\lambda{\in}\left[\frac{\mu_2N_1^\prime{+}\delta_2\xi}{M_1{-}N_1^\prime{+}\xi}\alpha_1{,} \min\left\{M_1{-}\mu_2\alpha_1{,}\frac{N_1^\prime\mu_2\alpha_1}{M_1{-}N_1^\prime}\right\}\right],& A_2{=}\text{\rm Eq.}\eqref{eq:case41},A_2^\prime{=}\text{\rm Eq.}\eqref{eq:case42}\IEEEyessubnumber\\
&\text{\rm For }\lambda{\in}\left[\frac{N_1^\prime\mu_2\alpha_1}{M_1{-}N_1^\prime}{,}M_1{-}\mu_2\alpha_1\right],& A_2{=}A_2^\prime{=}\text{\rm Eq.}\eqref{eq:case3}{=}1{-}\frac{\lambda{-}N_1^\prime\alpha_1}{N_1^\prime}.\IEEEyessubnumber
\end{IEEEeqnarray}
\hrulefill
\end{figure*}

\subsubsection{Obtain the solution to \eqref{eq:optd2_2}}

The remaining task is to compare the solution to \eqref{eq:optd2_2_1} and \eqref{eq:optd2_2_2} in order to obtain the solution to \eqref{eq:optd2_2}. As mentioned in the above derivation, the closed form solution to \eqref{eq:optd2_2_2}, namely \eqref{eq:solution2}, is valid when $\lambda{\leq}M_1{-}\mu_2\alpha_1$. In this case, \eqref{eq:solution1} writes as $A_2{=}\alpha_1$ and $A_2^\prime{=}\min\{1{,}\alpha_1{+}\frac{N_1^\prime{-}\lambda}{\xi}\}$. By plugging \eqref{eq:solution1} and \eqref{eq:solution2} into $d_{2{,}(2)}(A_2{,}A_2^\prime{,}\lambda)$, it can be shown that \eqref{eq:solution2} leads to a greater value of $d_{2{,}(2)}(A_2{,}A_2^\prime{,}\lambda)$. Therefore, when $\lambda{\leq}M_2{-}\mu_2\alpha_1$, the closed-form solution to \eqref{eq:optd2_2} is given by \eqref{eq:solution2}, which leads to Condition C, D, E and F shown in Section \ref{sec:II2}. When $\lambda{\geq}M_2{-}\mu_2\alpha_1$, closed-form solution to \eqref{eq:optd2_2} is given by \eqref{eq:solution1}, namely $A_2{=}\frac{M_1{-}\lambda}{\mu_2}$ and $A_2^\prime{=}\min\{1{,}\alpha_1{+}\frac{N_1^\prime{-}\lambda}{\xi}\}$, which leads to Condition A and B shown in Section \ref{sec:II2}.

\subsection{Proof of Proposition \ref{prop:BC_outer}}
In this section, we present the proof focusing on real domain. The extension to the complex signal follows similarly as in \cite{Davoodi14}. In the following, the proof is carried out in three antenna configurations, i.e., $M{=}N_1{+}N_2$, $N_2{\leq}M{<}N_1{+}N_2$ and $N_1{\leq}M{<}N_2$. The proof for $M{>}N_1{+}N_2$ is similar to $M{=}N_1{+}N_2$. The proof for $M{<}N_1$ is omitted as the DoF region is consistent with the case of no CSIT, and is also consistent with the case of perfect CSIT.
\subsubsection{$M{=}N_1{+}N_2$}

The derivation follows the footsteps in \cite{Davoodi14}. There are three main steps. The first step is to obtain a canonical form of the MIMO system, the second step is to define the functional dependence and the aligned image set, while the last step is to bound the probability that two realizations of one user's observation provide the same image in the other user's observation.

\textbf{Step 1:} Let us write the received signals of the MIMO BC as
\begin{equation}
  \left[\begin{array}{c}
          \mathbf{y}_1 \\
          \mathbf{y}_2
        \end{array}\right]{=}\left[\begin{array}{cc}
                                     \mathbf{H}_{11}^H & \mathbf{H}_{12}^H \\
                                     \mathbf{H}_{21}^H & \mathbf{H}_{22}^H
                                   \end{array}\right]\left[\begin{array}{c}
                                                             \mathbf{s}_1 \\
                                                             \mathbf{s}_2
                                                           \end{array}\right]{+}\left[\begin{array}{c}
                                                             \mathbf{n}_1 \\
                                                             \mathbf{n}_2
                                                           \end{array}\right],
\end{equation}
where $\mathbf{s}_1$ is the transmitted signal of the first $N_1$ antennas, while $\mathbf{s}_2$ is the transmitted signal of the last $N_2$ antennas. Note that in this section, $\mathbf{H}_{k1}$ denotes the channel matrices from the first $N_1$ transmit antennas to user $k$, while $\mathbf{H}_{k2}$ denotes the channel matrices from the last $N_2$ transmit antennas to user $k$. Assuming there is perfect CSIT for user $1$, the canonical form writes as
\begin{equation}
  \left[\begin{array}{c}
          \tilde{\mathbf{y}}_1 \\
          \tilde{\mathbf{y}}_2
        \end{array}\right]{=}\left[\begin{array}{cc}
                                     \mathbf{I}_{N_1} & \mathbf{0}_{N_1{\times}N_2} \\
                                     \mathbf{G}_2 & \mathbf{I}_{N_2}
                                   \end{array}\right]\left[\begin{array}{c}
                                                             \tilde{\mathbf{s}}_1 \\
                                                             \tilde{\mathbf{s}}_2
                                                           \end{array}\right]{+}\left[\begin{array}{c}
                                                             \mathbf{n}_1 \\
                                                             \mathbf{n}_2
                                                           \end{array}\right]\label{eq:canonical_user1},
\end{equation}
where $\mathbf{G}_2{=}\mathbf{H}_{21}^H\mathbf{H}_{11}^{-H}$, $\tilde{\mathbf{s}}_1{=}\mathbf{H}_{11}^H\mathbf{s}_1{+}\mathbf{H}_{12}^H\mathbf{s}_2$ and $\tilde{\mathbf{s}}_2{=}\left(\mathbf{H}_{22}^H{-}\mathbf{G}_2\mathbf{H}_{12}^H\right)\mathbf{s}_2$.

Then, denoting $\bar{\mathbf{s}}_1{\triangleq}\tilde{\mathbf{s}}_1$ and $\bar{\mathbf{s}}_2{\triangleq}\tilde{\mathbf{s}}_2$ as the discretization type of the transmitted signal, and $\bar{\mathbf{y}}_1$ and $\bar{\mathbf{y}}_2$ as the discretization type of the received signal to capture the effect of noise, we have
\begin{IEEEeqnarray}{rcl}
\bar{\mathbf{y}}_1{=}\bar{\mathbf{s}}_1{,}&\quad&
\bar{\mathbf{y}}_2{=}\lfloor\mathbf{G}_2\bar{\mathbf{s}}_1\rfloor{+}\bar{\mathbf{s}}_2{.}
\end{IEEEeqnarray}

Then, enhancing user $1$ with the message of user $2$, we have
\begin{IEEEeqnarray}{rcl}
nR_1&{\leq}&I(W_1{;}\bar{\mathbf{y}}_1^n{|}W_2{,}\mathbf{G}_2)\nonumber\\
&{=}&H(\bar{\mathbf{y}}_1^n{|}W_2{,}\mathbf{G}_2){+}o(\log P){,}\\
nR_2&{\leq}&I(W_2{;}\bar{\mathbf{y}}_2^n{|}\mathbf{G}_2)\nonumber\\
&{\leq}&N_2\log P{-}H(\bar{\mathbf{y}}_2^n{|}W_2{,}\mathbf{G}_2){,}\label{eq:nR2_sum}\\
nR_1{+}nR_2&{\leq}&nN_2\log P{+}H(\bar{\mathbf{y}}_1^n{|}W_2{,}\mathbf{G}_2){-}\nonumber\\
&&H(\bar{\mathbf{y}}_2^n{|}W_2{,}\mathbf{G}_2)\\
&{\leq}&nN_2\log P{+}H(\bar{\mathbf{y}}_1^n{,}\bar{\mathbf{y}}_2^n{|}W_2{,}\mathbf{G}_2){-}\nonumber\\
&&H(\bar{\mathbf{y}}_2^n{|}W_2{,}\mathbf{G}_2)\\
&{=}&nN_2\log P{+}H(\bar{\mathbf{y}}_1^n{|}\bar{\mathbf{y}}_2^n{,}W_2{,}\mathbf{G}_2)\\
&{\leq}&nN_2\log P{+}\sum_{i{=}1}^{N_1}H(\bar{s}_{1{,}i}^n{|}\bar{y}_{2{,}i}^n{,}\mathbf{G}_2)\label{eq:setcardi}
\end{IEEEeqnarray}

\textbf{Step 2: } Functional dependence and aligned image set.\footnote{The code block length $n$ is omitted in step 2 and 3 for convenience.}

For a given channel realization, there are multiple vectors $[\bar{s}_{1{,}1}{,}\cdots{,}\bar{s}_{1{,}N_1}{,}\bar{s}_{2{,}i}]$ that cast the same image in $\bar{y}_{2{,}i}$. Thus, the mapping from $\bar{s}_{1{,}i}$ to $[\bar{s}_{1{,}1}{,}\cdots{,}\bar{s}_{1{,}i{-}1}{,}\bar{s}_{1{,}i{+}1}{,}\cdots{,}\bar{s}_{1{,}N_1}{,}\bar{s}_{2{,}i}]$ are random. We fix the minimum mapping that leads to the smallest number of images in the following discussion.

Consequently, the observation $\bar{y}_{2{,}i}$ can be expressed as a function of $\bar{s}_{1{,}i}$, i.e., $\bar{y}_{2{,}i}(\bar{s}_{1{,}i}{,}\mathbf{G}_2)$. With this notation, let us define the aligned image set as the set of all $s_{1{,}i}$ that have the same image in $\bar{y}_{2{,}i}$, i.e.,
\begin{equation}
  \mathcal{S}_v(\mathbf{G}_2){\triangleq}\left\{x{\in}\{s_{1{,}i}\}{:} \bar{y}_{2{,}i}(x{,}\mathbf{G}_2){=}\bar{y}_{2{,}i}(v{,}\mathbf{G}_2)\right\}{.}
\end{equation}
Then, following the derivation in \cite{Davoodi14}, \eqref{eq:setcardi} is bounded by
\begin{equation}
  H(\bar{s}_{1{,}i}{|}\bar{y}_{2{,}i}{,}\mathbf{G}_2){\leq}\log \mathbb{E}\left[|\mathcal{S}_{\bar{s}_{1{,}i}}(\mathbf{G}_2)|\right],
\end{equation}
where $|\mathcal{S}_{\bar{s}_{1{,}i}}(\mathbf{G}_2)|$ is the cardinality of $\mathcal{S}_{\bar{s}_{1{,}i}}(\mathbf{G}_2)$.

\textbf{Step 3:} Bounding the probability that two realizations of $\bar{s}_{1{,}i}$ provide the same image in $\bar{y}_{2{,}i}$.

Let us consider two realization of $\bar{s}_{1{,}i}$, e.g., $x$ and $x^\prime$, which map to $\left\{v_{1{,}j{:}{\forall}j{\neq}i}{,}u\right\}$ and $\left\{v_{1{,}j{:}{\forall}j{\neq}i}^\prime{,}u^\prime\right\}$, respectively. Then, if they produce the same image in $\bar{y}_{2{,}i}$, we have \eqref{eq:image} at the top of the next page.
\begin{figure*}
\begin{IEEEeqnarray}{rcl} \label{eq:image}
  \lfloor g_ix\rfloor{+}\sum_{j{=}1{,}j{\neq}i}^{N_1}\lfloor g_jv_{1{,}j}\rfloor{+}u&{=}& \lfloor g_ix^\prime\rfloor{+}\sum_{j{=}1{,}j{\neq}i}^{N_1}\lfloor g_jv_{1{,}j}^\prime\rfloor{+}u^\prime\\
  \Rightarrow g_i(x{-}x^\prime)&{\in}&\sum_{j{=}1{,}j{\neq}i}^{N_1}\lfloor g_jv_{1{,}j}^\prime\rfloor{-}\lfloor g_iv_{1{,}j}\rfloor{+}u^\prime{-}u{+}(-1{,}1).\\
  \text{\rm and }\Rightarrow g_l(v_l{-}v_l^\prime)&{\in}&\sum_{j{=}1{,}j{\neq}i{,}j{\neq}l}^{N_1}\lfloor g_jv_{1{,}j}^\prime\rfloor{-}\lfloor g_jv_{1{,}j}\rfloor{+}\lfloor g_ix^\prime\rfloor{-}\lfloor g_ix\rfloor {+} u^\prime{-}u{+}(-1{,}1){,}{\forall}l{\neq}i.
\end{IEEEeqnarray}
\hrulefill
\end{figure*}
Next, let us define
\begin{equation}
  L{\triangleq}\max_{{\forall}l{\neq}i}\{|v_l{-}v_l^\prime|{,}|x{-}x^\prime|\}{.}
\end{equation}
Hence, the value of $g_j$, $j{=}1{,}\cdots{,}N_1$, must lie within the interval of length no more than $\frac{2}{L}$. Therefore, the probability that the images due to $x$ and $x^\prime$ align at $\bar{y}_{2{,}i}$ is bounded as follows
\begin{equation}
  \mathbb{P}\left(x{\in}\mathcal{S}_{x^\prime}(G)\right){\leq}f_{\max{,}2}^n\prod_{t{=}1}^{n}\frac{2}{L(t)}{,}
\end{equation}
where $L$ is a time-varying parameter, and the time index $t$ is omitted in the above derivations for simplicity. Moreover, $f_{\max{,}2}{=}O(P^{\alpha_2})$ is a function of the CSIT quality defined in Section \ref{sec:SM}. Consequently, $H(\bar{s}_{1{,}i}^n{|}\bar{y}_{2{,}i}^n{,}\mathbf{G}_2)$ is bounded by $n\alpha_2\log P$. This leads to the sum DoF constraint $d_1{+}d_2{\leq}N_2{+}N_1\alpha_2$.

For the weighted-sum inequality, the derivation only differs by the first step. Specifically, let us write a canonical form by switching the role of user $1$ and user $2$ as
\begin{equation}
  \left[\begin{array}{c}
          \tilde{\mathbf{y}}_2 \\
          \tilde{\mathbf{y}}_1
        \end{array}\right]{=}\left[\begin{array}{cc}
                                     \mathbf{I}_{N_2} & \mathbf{0}_{N_2{\times}N_1} \\
                                     \mathbf{G}_1 & \mathbf{I}_{N_1}
                                   \end{array}\right]\left[\begin{array}{c}
                                                             \tilde{\mathbf{s}}_2 \\
                                                             \tilde{\mathbf{s}}_1
                                                           \end{array}\right]{+}\left[\begin{array}{c}
                                                             \mathbf{n}_1 \\
                                                             \mathbf{n}_2
                                                           \end{array}\right],
\end{equation}
where $\mathbf{G}_1{=}\mathbf{H}_{12}^H\mathbf{H}_{22}^{-H}$, $\tilde{\mathbf{s}}_2{=}\mathbf{H}_{22}^H\mathbf{s}_2{+}\mathbf{H}_{21}^H\mathbf{s}_1$ and $\tilde{\mathbf{s}}_1{=}\left(\mathbf{H}_{11}^H{-}\mathbf{G}_1\mathbf{H}_{21}^H\right)\mathbf{s}_1$.

Then, denoting $\bar{\mathbf{y}}_1$ and $\bar{\mathbf{y}}_2$ as the discretization type of the received signal to capture the effect of noise, and denoting $\bar{\mathbf{s}}_1$ and $\bar{\mathbf{s}}_2$ as discretization type of the transmitted signal, we have
\begin{IEEEeqnarray}{rcl}
\bar{\mathbf{y}}_2{=}\bar{\mathbf{s}}_2{,}&\quad&
\bar{\mathbf{y}}_1{=}\lfloor\mathbf{G}_1\bar{\mathbf{s}}_2\rfloor{+}\bar{\mathbf{s}}_1{.}
\end{IEEEeqnarray}

Then, enhancing user $2$ with the message of user $1$, we have
\begin{IEEEeqnarray}{rcl}
nR_1&{\leq}&I(W_1{;}\bar{\mathbf{y}}_1^n{|}\mathbf{G}_1)\nonumber\\
&{=}&nN_1\log P{-}H(\bar{\mathbf{y}}_1^n{|}W_1{,}\mathbf{G}_1){,}\\
nR_2&{\leq}&I(W_2{;}\bar{\mathbf{y}}_2^n{|}W_1{,}\mathbf{G}_2)\nonumber\\
&{=}&H(\bar{\mathbf{y}}_2^n{|}W_1{,}\mathbf{G}_1){-}o(\log P){,}\\
n(&N_2&R_1{+}N_1R_2)\nonumber\\
&{\leq}&nN_1N_2\log P{+}N_1H(\bar{\mathbf{y}}_2^n{|}W_1{,}\mathbf{G}_1){-}\nonumber\\
&&N_2H(\bar{\mathbf{y}}_1^n{|}W_1{,}\mathbf{G}_1)\\
&{\leq}&nN_1N_2\log P{+} \sum_{j{=}1}^{N_2}\left(H(\bar{\mathbf{y}}_{2{,}j{:}j{+}N_1{-}1}^n{|}W_1{,}\mathbf{G}_1){-}\right.\nonumber\\
&&\left.H(\bar{\mathbf{y}}_1^n{|}W_1{,}\mathbf{G}_1)\right) \label{eq:sliding}\\
&{\leq}&nN_1N_2\log P{+} \sum_{j{=}1}^{N_2}\left(H(\bar{\mathbf{y}}_{2{,}j{:}j{+}N_1{-}1}^n{,}\bar{\mathbf{y}}_1^n{|}W_1{,}\mathbf{G}_1){-}\right.\nonumber\\ 
&& \left. H(\bar{\mathbf{y}}_1^n{|}W_1{,}\mathbf{G}_1)\right)\\
&{=}&nN_1N_2\log P{+}\sum_{j{=}1}^{N_2}H(\bar{\mathbf{y}}_{2{,}j{:}j{+}N_1{-}1}^n{|}\bar{\mathbf{y}}_1^n{,}W_1{,}\mathbf{G}_1)\\
&{\leq}&nN_1N_2\log P{+}\sum_{j{=}1}^{N_2}\sum_{i{=}1}^{N_1}H(\bar{s}_{2{,}j{+}i{-}1}^n{|}\bar{y}_{1{,}i}^n{,}\mathbf{G}_1).\label{eq:setcardi_ws}
\end{IEEEeqnarray}
Inequality \eqref{eq:sliding} follows from the sliding window lemma introduced in \cite[Lemma 1]{Borzoo_Kuser}. The notation $\bar{\mathbf{y}}_{2{,}j{:}j{+}N_1{-}1}$ stand for the $j$th through to the $(j{+}N_1{-}1)$th entries of $\bar{\mathbf{y}}_2$, and the calculation $j{+}N_1{-}1$ is based on modulo $N_2$.

Following step 2 and step 3, one can show that $H(\bar{s}_{2{,}j{+}i{-}1}^n{|}\bar{y}_{1{,}i}^n{,}\mathbf{G}_1){\leq}n\alpha_1\log P$, which leads to the weighted sum DoF $\frac{d_1}{N_1}{+}\frac{d_2}{N_2}{\leq}1{+}\alpha_1$.

\subsubsection{$N_2{\leq}M{\leq}N_1{+}N_2$}
In this case, the linear space spanned by the channel matrices of the two users overlap with each other, and the dimension of the overlapping part is $N_1{+}N_2{-}M$. Hence, we perform a linear transformation to the received signals as follows
\begin{IEEEeqnarray}{rcl}
&\mathbf{F}_1\mathbf{y}_1{=}\hat{\mathbf{y}}_1{=}\left[\begin{array}{c}
                                   \hat{\mathbf{y}}_{1a} \\
                                   \hat{\mathbf{y}}_{1b}
                                 \end{array}\right]{=}\left[\begin{array}{c}
                                                              \mathbf{F}_{1a}\mathbf{H}_1^H\mathbf{s}{+} \mathbf{F}_{1a}\mathbf{n}_1 \\
                                                              \mathbf{F}_{1b}\mathbf{H}_1^H\mathbf{s}{+} \mathbf{F}_{1b}\mathbf{n}_1
                                                            \end{array}\right]{,}&\IEEEyessubnumber\\
&\mathbf{F}_2\mathbf{y}_2{=}\hat{\mathbf{y}}_2{=}\left[\begin{array}{c}
                                   \hat{\mathbf{y}}_{2a} \\
                                   \hat{\mathbf{y}}_{2b}
                                 \end{array}\right]{=}\left[\begin{array}{c}
                                                              \mathbf{F}_{2a}\mathbf{H}_2^H\mathbf{s}{+} \mathbf{F}_{2a}\mathbf{n}_2 \\
                                                              \mathbf{F}_{2b}\mathbf{H}_2^H\mathbf{s}{+} \mathbf{F}_{2b}\mathbf{n}_2
                                                            \end{array}\right]{,}&\IEEEyessubnumber
\end{IEEEeqnarray}
where $\mathbf{F}_k$ is a $N_k{\times}N_k$ full rank matrix, $\mathbf{F}_{1a}$ and $\mathbf{F}_{2a}$ are the first $M{-}N_2$ rows of $\mathbf{F}_1$ and the first $M{-}N_1$ rows of $\mathbf{F}_2$, respectively, while $\mathbf{F}_{1b}$ and $\mathbf{F}_{2b}$ are the remaining $N_1{+}N_2{-}M$ rows of $\mathbf{F}_1$ and $\mathbf{F}_2$, respectively. $\mathbf{F}_{1b}$ and $\mathbf{F}_{2b}$ are such that $\mathbf{F}_{1b}\mathbf{H}_1^H{=}\mathbf{F}_{2b}\mathbf{H}_2^H$. This means that $\hat{\mathbf{y}}_{1b}$ can be obtained using $\hat{\mathbf{y}}_{2b}$ within noise error.

Consequently, one can obtain a canonical form using $\hat{\mathbf{y}}_1$ and $\hat{\mathbf{y}}_2$ as
\begin{equation}
  \left[\begin{array}{c}
          \tilde{\mathbf{y}}_1 \\
          \tilde{\mathbf{y}}_2
        \end{array}\right]{=}\left[\begin{array}{cc}
                                     \mathbf{I}_{N_1} & \mathbf{0}_{N_1{\times}(M{-}N_1)} \\
                                     \mathbf{G}_2 & \mathbf{Z}_2
                                   \end{array}\right]\left[\begin{array}{c}
                                                             \tilde{\mathbf{s}}_1 \\
                                                             \tilde{\mathbf{s}}_2
                                                           \end{array}\right]{+}\left[\begin{array}{c}
                                                             \mathbf{n}_1 \\
                                                             \mathbf{n}_2
                                                           \end{array}\right]\label{eq:canonical_user1_case2},
\end{equation}
where $\mathbf{Z}_2{\triangleq}\left[\begin{array}{c}\mathbf{I}_{M{-}N_1}\\ \mathbf{0}_{(N_2{+}N_1{-}M){\times}(M{-}N_1)}\end{array}\right]$, $\mathbf{G}_2{=}\text{Bdiag}\{\mathbf{G}_{2a}{,} \mathbf{I}_{N_1{+}N_2{-}M}\}$ with a $(M{-}N_1){\times}(M{-}N_2)$ matrix $\mathbf{G}_{2a}{=}\mathbf{F}_{2a}\mathbf{H}_{21}^H\mathbf{H}_{11}\mathbf{F}_{1a}^H\cdot \left(\mathbf{F}_{1a}\mathbf{H}_{11}^H\mathbf{H}_{11}\mathbf{F}_{1a}^H\right)^{-1}$, $\tilde{\mathbf{s}}_1{=}\hat{\mathbf{y}}_1$ and $\tilde{\mathbf{s}}_2{=}(\mathbf{F}_{2a}\mathbf{H}_{22}^H{-}\mathbf{G}_{2a}\mathbf{F}_{1a}\mathbf{H}_{12}^H)\mathbf{s}_2$. Note that here $\mathbf{H}_{11}$ and $\mathbf{H}_{21}$ refer to the channel matrices between the first $N_1$ transmit antennas to user $1$ and user $2$, respectively, while $\mathbf{H}_{12}$ and $\mathbf{H}_{22}$ refer to the channel matrices between the remaining $M{-}N_1$ transmit antennas to user $1$ and user $2$, respectively. $\mathbf{s}_1$ and $\mathbf{s}_2$ are the signals transmitted from the first $N_1$ transmit antennas and the remaining $M{-}N_1$ transmit antennas, respectively.

Then, following the footsteps in the case $M{=}N_1{+}N_2$, we bound the sum rate by the summation of $N_1$ conditional entropies as in \eqref{eq:setcardi}. According to the above analysis, since the last $N_1{+}N_2{-}M$ observations of $\tilde{\mathbf{y}}_1$ can be constructed using the $N_1{+}N_2{-}M$ observations of $\tilde{\mathbf{y}}_2$, the last $N_1{+}N_2{-}M$ entropies are equal to $o(\log P)$. This leads to the sum DoF constraint $d_1{+}d_2{\leq}N_2{+}(M{-}N_2)\alpha_2$.

Similarly, for the weighted sum entropy, we switch the role of the two users and write a canonical form as
\begin{equation}
  \left[\begin{array}{c}
          \tilde{\mathbf{y}}_2 \\
          \tilde{\mathbf{y}}_1
        \end{array}\right]{=}\left[\begin{array}{cc}
                                     \mathbf{I}_{N_2} & \mathbf{0}_{N_2{\times}(M{-}N_2)} \\
                                     \tilde{\mathbf{G}}_1 & \mathbf{Z}_1
                                   \end{array}\right]\left[\begin{array}{c}
                                                             \tilde{\mathbf{s}}_2 \\
                                                             \tilde{\mathbf{s}}_1
                                                           \end{array}\right]{+}\left[\begin{array}{c}
                                                             \mathbf{n}_2 \\
                                                             \mathbf{n}_1
                                                           \end{array}\right]\label{eq:canonical_user2_case2},
\end{equation}
where $\mathbf{Z}_1{\triangleq}\left[\begin{array}{c}\mathbf{I}_{M{-}N_2}\\ \mathbf{0}_{(N_2{+}N_1{-}M){\times}(M{-}N_2)}\end{array}\right]$,  $\mathbf{G}_1{=}\text{Bdiag}\{\mathbf{G}_{1a}{,} \mathbf{I}_{N_1{+}N_2{-}M}\}$ with a $(M{-}N_2){\times}(M{-}N_1)$ matrix $\mathbf{G}_{1a}{=}\mathbf{F}_{1a}\mathbf{H}_{12}^H\mathbf{H}_{22}\mathbf{F}_{2a}^H\cdot \left(\mathbf{F}_{2a}\mathbf{H}_{22}^H\mathbf{H}_{22}\mathbf{F}_{2a}^H\right)^{-1}$, $\tilde{\mathbf{s}}_2{=}\hat{\mathbf{y}}_2$ and $\tilde{\mathbf{s}}_1{=}(\mathbf{F}_{1a}\mathbf{H}_{11}^H{-}\mathbf{G}_{1a}\mathbf{F}_{2a}\mathbf{H}_{21}^H)\mathbf{s}_1$. Note that here $\mathbf{H}_{21}$ and $\mathbf{H}_{11}$ refer to the channel matrices between the first $M{-}N_2$ transmit antennas to user $1$ and user $2$, respectively, while $\mathbf{H}_{22}$ and $\mathbf{H}_{12}$ refer to the channel matrices between the remaining $N_2$ transmit antennas to user $1$ and user $2$, respectively. $\mathbf{s}_1$ and $\mathbf{s}_2$ are the signals transmitted from the first $M{-}N_2$ antennas and the remaining $N_2$ antennas, respectively.

Then, following the footsteps in the case $M{=}N_1{+}N_2$, we bound the weighted sum of the rate as
\begin{IEEEeqnarray}{rcl}
n(&N_2&R_1{+}N_1R_2)\nonumber\\
&{\leq}&nN_1N_2\log P{+}\sum_{j{=}1}^{N_2}H(\bar{\mathbf{y}}_{2{,}j{:}j{+}N_1{-}1}^n{|}\bar{\mathbf{y}}_1^n{,}W_2{,}\mathbf{G}_1)\\
&{\leq}&nN_1N_2\log P{+}\sum_{j{=}1}^{N_2}\sum_{i{=}1}^{N_1}H(\bar{s}_{2{,}j{+}i{-}1}^n{|}\bar{\mathbf{y}}_1^n{,}\mathbf{G}_1)\\
&{=}&nN_1N_2\log P{+}N_1\sum_{j{=}1}^{N_2}H(\bar{s}_{2{,}j}^n{|}\bar{\mathbf{y}}_1^n{,}\mathbf{G}_1){,}\label{eq:conditional_ws_case2}
\end{IEEEeqnarray}
where the last equality is because every observation of $\bar{\mathbf{s}}_2$ is counted $N_1$ times due to the sliding window. According to the above analysis, since the last $N_1{+}N_2{-}M$ observations of $\tilde{\mathbf{y}}_2$ (i.e., $\bar{\mathbf{s}}_2$) can be constructed using the $N_1{+}N_2{-}M$ observations of $\tilde{\mathbf{y}}_1$, the last $N_1{+}N_2{-}M$ entropies are equal to $o(\log P)$. This upper-bounds \eqref{eq:conditional_ws_case2} by $nN_1N_2\log P{+}N_1(M{-}N_1)\alpha_1\log P$, which leads to the weighted sum DoF constraint \eqref{eq:outer_wsum}.

\subsubsection{$N_1{\leq}M{\leq}N_2$}
In this case, the derivation follows the footsteps of the case $M{<}N_1{+}N_2$. Specifically, since $M{<}N_2$, \eqref{eq:nR2_sum} rewrites as $nR_2{\leq}nM{\log}P{-}H(\bar{\mathbf{y}}_2^n{|}W_1{,}\mathbf{G}_2)$. Besides, since $M{<}N_2$, the dimension of the overlapping part between the received signals at the two users is $N_1$. This implies that the $N_1$ observations at user $1$ can be constructed using the user 2's received signal within noise error. Hence, the $N_1$ conditional entropies in \eqref{eq:setcardi} equal to $o(\log P)$, leading to the sum DoF $d_1{+}d_2{\leq}M$.

For the weighted-sum inequality, one has
\begin{IEEEeqnarray}{rcl}
n(MR_1&{+}&N_1R_2)\nonumber\\
&{\leq}&nN_1M\log P{+}N_1H(\bar{\mathbf{y}}_2^n{|}W_1{,}\mathbf{G}_1){-}\nonumber\\
&&MH(\bar{\mathbf{y}}_1^n{|}W_1{,}\mathbf{G}_1)\\
&{=}&nN_1M\log P{+}N_1H(\bar{\mathbf{y}}_{2{,}1{:}M}^n{|}W_1{,}\mathbf{G}_1){+} \nonumber\\ &&\underbrace{N_1H(\bar{\mathbf{y}}_{2{,}M{+}1{:}N_2}^n{|}\bar{\mathbf{y}}_{2{,}1{:}M}^n{,}W_1{,}\mathbf{G}_1)}_{nN_1o(\log P)}{-}\nonumber\\ &&MH(\bar{\mathbf{y}}_1^n{|}W_1{,}\mathbf{G}_1).\label{eq:difference_entropy_case3}
\end{IEEEeqnarray}

Then, since $N_1$ observations of $\bar{\mathbf{y}}_{2{,}1{:}M}$ can be constructed using $\bar{\mathbf{y}}_1$, the difference between the entropies in \eqref{eq:difference_entropy_case3} is bounded by $nN_1(M{-}N_1)\alpha_1{\log}P$, which completes the proof. 

\bibliographystyle{IEEEtran}

\bibliography{MIMOBCjnl}

\end{document}